\documentclass{article}

\usepackage{microtype}
\usepackage{graphicx}
\usepackage{subcaption}
\usepackage{booktabs} 

\usepackage{hyperref}

\usepackage[accepted]{icml2026}

\usepackage{amsmath}
\usepackage{amssymb}
\usepackage{mathtools}
\usepackage{amsthm}
\usepackage{times}
\usepackage{xcolor}
\usepackage{bbm}

\usepackage{multirow}
\usepackage[table]{xcolor}
\usepackage[capitalize,noabbrev]{cleveref}

\theoremstyle{plain}
\newtheorem{theorem}{Theorem}[section]

\newtheorem{lemma}[theorem]{Lemma}

\theoremstyle{definition}

\theoremstyle{remark}

\usepackage[textsize=tiny]{todonotes}

\icmltitlerunning{Training-Free Oscillatory Working Memory for Salience-Driven Attention Gating}

\begin{document}

\twocolumn[
  \icmltitle{NAACA: Training-Free NeuroAuditory Attentive Cognitive Architecture with Oscillatory Working Memory for Salience-Driven Attention Gating}



  \icmlsetsymbol{equal}{*}

  \begin{icmlauthorlist}
    \icmlauthor{Zhongju Yuan}{ugent}
    \icmlauthor{Geraint Wiggins}{vub,qmul}
    \icmlauthor{Dick Botteldooren}{ugent}
  \end{icmlauthorlist}

  \icmlaffiliation{ugent}{WAVES Research Group, Ghent University, Gent, Belgium}
  \icmlaffiliation{vub}{AI Lab, Vrije Universiteit Brussel, Brussel, Belgium}
  \icmlaffiliation{qmul}{EECS, Queen Mary University of London, London, UK}

  \icmlcorrespondingauthor{Zhongju Yuan}{zhongju.yuan@ugent.be}

  \icmlkeywords{Machine Learning, ICML}

  \vskip 0.3in
]



\printAffiliationsAndNotice{}  


\begin{abstract}
Audio provides critical situational cues, yet current Audio Language Models (ALMs) face an attention bottleneck in long-form recordings where dominant background patterns can dilute rare, salient events. We introduce NAACA, a training-free \textbf{N}euro\textbf{A}uditory \textbf{A}ttentive \textbf{C}ognitive \textbf{A}rchitecture that reframes attention allocation as an auditory salience filtering problem. At its core is OWM, a neuro-inspired \textbf{O}scillatory \textbf{W}orking \textbf{M}emory that maintains stable attractor-like states and triggers higher-cognition ALM processing only when adaptive energy fluctuations signal perceptual salience, triggering higher-level reasoning. On XD-Violence, NAACA improves AudioQwen’s average precision (AP) from 53.50\% to 70.60\% while reducing unnecessary ALM invocations. Furthermore, qualitative case studies on the Urban Soundscapes of the World (USoW) dataset show that OWM captures novel events and subcategory shifts while remaining robust to transient pauses and ambient urban noise. The source code is available at \url{https://github.com/zjyuan1208/NAACA-Oscillatory-Working-Memory}.
\end{abstract}

\section{Introduction}
Audio provides critical situational cues when vision is degraded or unavailable, enabling detection of abnormal events in public spaces such as distress calls, fights, or sudden crowd escalation, while also supporting acoustic monitoring in natural environments, where shifts in soundscapes can indicate biodiversity changes. Recent Audio Language Models (ALMs) have advanced audio understanding beyond narrow classification, but face a critical pre-attentive bottleneck in real-time long-form analysis. Unlike biological hearing, which instinctively filters stable background soundscapes to prioritize salient stimuli, transformer-based ALMs allocate attention inefficiently over long contexts. When applied to long audio streams, ALM based inference can miss rare but critical events that occur late in the recording, because model focus is dominated by earlier or persistent background patterns, an effect we refer to as attention dilution. Fig.~\ref{fig:failure_case} illustrates this failure mode on a representative example. While exhaustive processing of short segments could recover these scene subsets, the computational cost is prohibitive for continuous monitoring. Consequently, a fundamental trade-off arises between perceptual saliency recall and computational efficiency.

\begin{figure}[h]
    \centering
    \includegraphics[width=\linewidth, trim=5 0 30 5, clip]{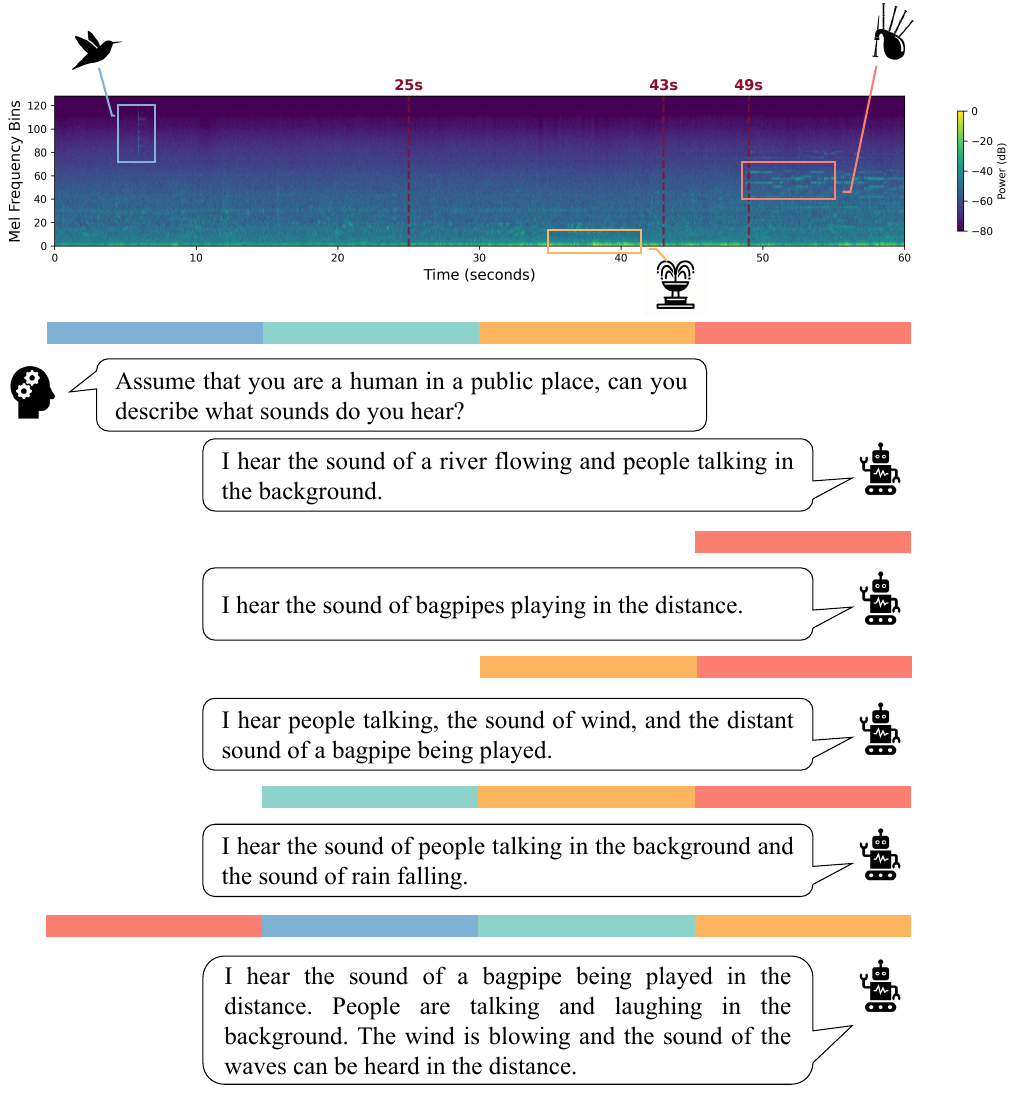} 
    \caption{\textbf{ALM attention failure and context limitations in long-form audio.} \textbf{Top:} Mel-spectrogram of sample R0056 (USoW) with three salient scenes: birdsong (blue), increased fountain noise (yellow), and bagpipe onset (red). \textbf{Middle:} Direct inference on the full 60s clip (partitioned into 15s segments) omits the terminal bagpipe event, illustrating context-length bottleneck. \textbf{Bottom:} Varying context lengths and ordering. Trailing windows (15s--45s) recover different scene subsets; moving the final 15s to the front (reordered 60s) recovers the bagpipe, confirming temporal-attention decay in the Audio-LLM.}
    \label{fig:failure_case}
\end{figure}


We address this trade-off by drawing inspiration from cognitive neuroscience, where selective attention mechanisms prioritize salient stimuli while filtering stable backgrounds. In auditory neuroscience, salience refers to unexpected or behaviorally relevant stimuli \cite{corbetta2002control, kayser2005mechanisms}. Crucially, salience is not loudness: it corresponds to context changes requiring internal updates \cite{friston2009free}, while stable backgrounds should be maintained without consuming attention \cite{winkler2009modeling}. Salient events are thus those warranting increased ALM focus.

Building on this salience framework, we reformulate long-form processing as Salience-Driven Attention Gating, an online segment selection policy that routes salient audio regions to the ALM. Rather than processing every window, we detect when the auditory pattern shifts in a salient manner and selectively invoke ALM reasoning only for those segments. This approach is critical for deployment, where offline training is prohibitive and data streams are unlabeled and non-stationary \cite{wan2024online,chan2025online}. Existing approaches, statistical detectors \cite{rabanser2019failing,chan2025online} and representation-based methods \cite{wan2024online}, require long-term historical data and substantial overhead.

To avoid the long-term historical data and substantial overhead required by existing methods, we draw inspiration from biological working memory. Working memory is supported by attractor-like neural states enabling stable maintenance while remaining sensitive to deviations requiring updates \cite{brennan2023attractor}. Oscillatory dynamics link to attention and memory control, suggesting salience emerges from state transitions rather than learned classification \cite{lundqvist2018gamma}.


Concretely, we propose \textbf{OWM} (\textbf{O}scillatory \textbf{W}orking \textbf{M}emory), a bio-inspired module integrating frequency-selective oscillatory inputs from pretrained audio representations, maintaining stable internal states as attractor-like memory items, and identifying saliency by comparing energy fluctuations to an adaptive threshold. Building on OWM, we propose \textbf{NAACA}, a NeuroAuditory Attentive Architecture performing training-free online saliency gating for ALMs. As shown in Fig.~\ref{fig:overview}, we process audio in sliding windows and convert each window into auditory object probability sequences, which drive frequency selective oscillatory inputs on the OWM grids. OWM maintains stable representations and triggers attentional focus when energy exceeds an adaptive threshold; detected segments are forwarded to an ALM for semantic interpretation. This enables operation without offline training while improving long-context reasoning by prioritizing salient transitions. On XD-Violence, our gating improves AudioQwen for 17.10 percentage points in AP while substantially reducing ALM invocations.

Our contributions are:
\begin{itemize}
  \setlength{\itemsep}{0pt}
    \item \textbf{Neuroscience-Inspired Gating:} We formalize the ALM accuracy--cost trade-off and address it via salience-driven gating.
    \item \textbf{Oscillatory Working Memory:} A bio-inspired module with optimized wave dynamics for training-free, adaptive saliency filtering.
    \item \textbf{Performance:} 70.60\% AP with near 40\% cost reduction and interpretable, neurally-plausible dynamics.
\end{itemize}

\paragraph{Conflict of Interest Disclosure.}
The authors declare that they have no financial conflicts of interest related to this work.

\section{Methods}

\subsection{NeuroAuditory Attentive Cognitive Architecture (NAACA)}
\label{sec:system-overview}

\begin{figure*}
    \centering
    \includegraphics[width=\linewidth]{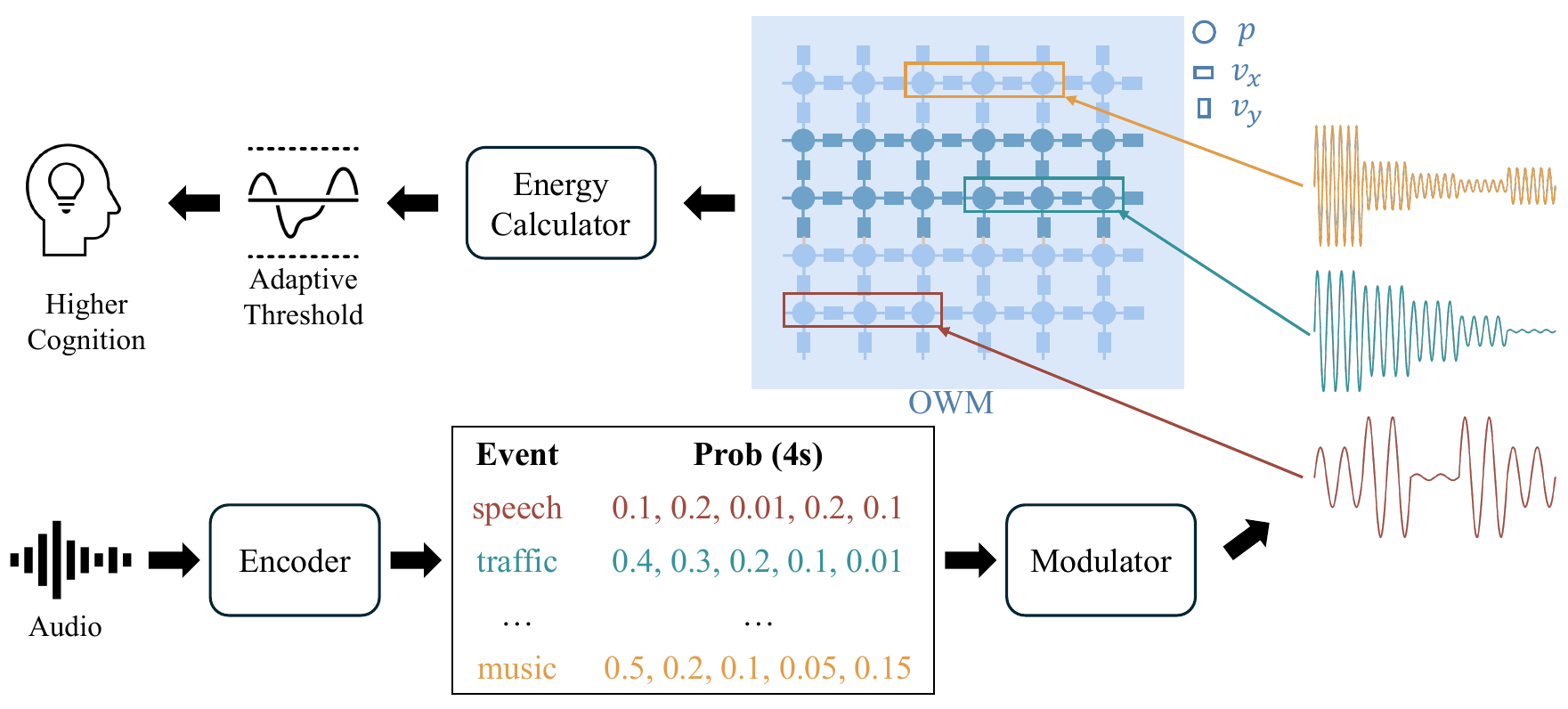} 
    \caption{\textbf{Overview of the NAACA (\textbf{N}euro\textbf{A}uditory \textbf{A}ttentive \textbf{C}ognitive \textbf{A}rchitecture).} Audio is segmented into sliding windows and mapped by a pretrained encoder to auditory object probability trajectories, which drive frequency-specific oscillatory inputs on OWM grids. OWM is a 2D neural network with primary (\(p\)) and velocity (\(v\)) neurons, parameterized by wave propagation speed \(c\) and damping \(k\), where \(c\) follows a stripe-shaped binary pattern (dark/light blue). A gate-opening decision is made from OWM energy fluctuations relative to an adaptive threshold, and salient audio segments are routed to a higher-level cognition module for semantic interpretation.}

    \label{fig:overview}
\end{figure*}

Our framework follows a multi-stage processing pipeline (Fig.~\ref{fig:overview} and Algorithm~\ref{alg:OWM_drift_detection}). Incoming audio streams are segmented into short, overlapping windows \(\mathbf{x}_t\), which are encoded into auditory object category probability vectors 
\(\mathbf{p}_t = \mathrm{Enc}(\mathbf{x}_t)\) by a pretrained encoder \(\mathrm{Enc}(\cdot)\). 

These probabilities are transformed into oscillatory drive signals through a 
predefined modulator. Specifically, each probability 
dimension is assigned a unique carrier frequency, represented as a sine wave, 
while the corresponding probability value modulates its amplitude. As illustrated 
in Fig.~\ref{fig:overview}, \(\mathrm{Enc}(\cdot)\) produces sound-category probability vectors across sliding windows, 
which are then mapped to their respective sinusoidal drive signals at distinct frequencies. Formally, the modulated source term for auditory object category \(i\) is
\begin{equation}
\label{eq:parcel_source_main}
\begin{split}
S_i(x,t) &= a_i(t)\,\sin(\omega_i t)\,\mathbf{1}_{\Omega_i}(x), \\
&\quad \omega_i = 2\pi f_i, \;\; a_i(t) \in [0,1],
\end{split}
\end{equation}
where \(a_i(t)\) denotes the instantaneous amplitude given by the encoder probability for category \(i\) at time \(t\),
\(f_i\) is the carrier frequency assigned to category \(i\)((assignment details in Appendix~\ref{app:freq_assignment})), and \(\omega_i = 2\pi f_i\).
\(\Omega_i \subset \{1,\ldots,G\}\times\{1,\ldots,G\}\) is the spatial parcel on a \(G\times G\) OWM lattice to which the
\(i\)-th probability dimension \(\mathbf{p}_t[i]\) injects an oscillatory drive, and \(\mathbf{1}_{\Omega_i}(x)\) restricts the forcing to that region.
Here \(x\) abbreviates the 2D lattice coordinate \((x,y)\); Eq.~\eqref{eq:parcel_source_main} is the compact notation of the full coordinate-wise form in Eq.~\eqref{eq:appendix_drive_signal}.
Importantly, Eq.~\eqref{eq:parcel_source_main} only defines the \emph{input}; wave-like propagation is produced by the OWM
recurrent dynamics introduced next, where spatial coupling terms (via \(\nabla p\) and \(\nabla\!\cdot v\)) allow activity induced in \(\Omega_i\)
to spread and interact across the grid under damping and wave propagation speed parameters. 
In this driven–damped setting, \(\Omega_i\) serves as a consistent locus of excitation and can yield \emph{attractor-like} category-specific
spatiotemporal patterns (e.g., resonant responses) determined by the OWM dynamics rather than by the mapping alone.

We monitor changes in the OWM system energy against an adaptive saliency threshold 
$T_{\text{adapt}}$. The OWM system energy is defined as the sum of the square of each neuron value at each time step $\Delta t$.
The $T_{\text{adapt}}$ is computed using an energy-based approach: 
\[T_{\text{adapt}} = \mu + 2\sigma\bigl(1 + \alpha \cdot \text{trend}\bigr),\]
where $\mu$ and $\sigma$ are the running mean and standard deviation of energy-derived 
drift metrics over a sliding window of $W=20$ samples, and the trend factor adjusts 
for temporal patterns in the data. The final detection decision employs persistence filtering to ensure robust salience detection while minimizing false positives, after which the detected segments are forwarded to a higher-level cognitive module for semantic interpretation. The whole procession is shown in Algorithm~\ref{alg:adaptive_threshold}.



\subsection{OWM Formulation}
\label{sec:OWM-formulation}

OWM is a 2D recurrent field model defined on a $G\times G$ lattice, where $G$ denotes the grid resolution. Its hidden state at time $t$ consists of a primary pressure-like field $p(x,y,t)$ and a velocity-like field $\mathbf{v}(x,y,t)=(v_x(x,y,t),v_y(x,y,t))$ at lattice coordinate $(x,y)$. The pressure field stores the current auditory memory state, while $\mathbf{v}$ mediates directional flow between neighboring locations. The update combines (i) temporal recurrence through the previous state for memory and (ii) spatial recurrence through a structured operator $\mathcal{A}(\cdot)$ for lateral propagation. The wave propagation speed $c(x,y)$ and damping coefficients $k^{p}(x,y)$ and $k^{v}(x,y)$ control, respectively, spatial coupling and dissipation. This bio-inspired design is analogous to membrane-potential storage ($p$) and axonal/dendritic transport ($\mathbf{v}$) in cortical sheet.

In our implementation, the pretrained PANN encoder outputs $C=527$ category probabilities. Category $i\in\{0,\ldots,C-1\}$ is assigned a carrier frequency $f_i$ and spatial parcel $\Omega_i$ on the $64\times64$ lattice; its probability $a_i(t)$ drives $\Omega_i$ through Eq.~\eqref{eq:parcel_source_main}. We set
\begin{equation}
\begin{aligned}
f_i &= f_{\min}+i\frac{f_{\max}-f_{\min}}{C-1},\\
c_i &= \frac{\tan(\pi f_i\Delta t)\sqrt{(1+\Delta t k^p)(1+\Delta t k^v)}}{\Delta t\sqrt{2}},
\end{aligned}
\label{eq:main_wave_speed_formula}
\end{equation}
where $f_{\min}=51$ Hz, $f_{\max}=1200$ Hz, $\Delta t=0.01$ s, $dx=1$, and $k^p=k^v=10$. The spatial speed field is $c(x,y)=\sum_i c_i\mathbf{1}_{\Omega_i}(x,y)$, with $\mathbf{1}_{\Omega_i}$ denoting the parcel indicator; Appendix~\ref{app:freq_assignment} gives the full derivation.

The parcel allocation is deterministic rather than learned. We enumerate the $G^2=4096$ lattice cells in row-major order and divide them across $C=527$ PANN categories as evenly as possible: the first $G^2\bmod C=407$ parcels contain eight cells, and the remaining parcels contain seven cells. This mapping gives each category a stable injection locus while avoiding any target-task optimization. Although the row-major assignment is simple, salience detection is computed from the global energy $E(t)$ rather than from local category neighborhoods alone; therefore any substantial change in the encoder probability vector can produce a global energy transient. The striped wave-speed field further amplifies such transients through modal coupling, so the method does not require a learned semantic geometry. If a different encoder with $C'$ output dimensions is used, only the deterministic frequency and parcel assignment needs to be recomputed.

\subsubsection{Constructing the structured spatial operator $\mathcal{A}(\cdot)$ of OWM}
\label{sec:math-foundations}

\paragraph{Wave System Foundations.}
We consider the OWM as a two-dimensional oscillatory system governed by a damped wave equation in the first-order velocity–pressure formulation:
{\setlength{\abovedisplayskip}{4pt}
\setlength{\belowdisplayskip}{4pt}
\begin{equation}\label{eq:wave_eq}
\begin{aligned}
\frac{\partial p}{\partial t} + k^{p}(x,y)\, p &= -c^2(x,y)\,\nabla \cdot \mathbf{v} + S(x,y,t), \\
\frac{\partial \mathbf{v}}{\partial t} + k^{v}(x,y)\, \mathbf{v} &= -\nabla p,
\end{aligned}
\end{equation}}
where \(c(x,y)\) is the spatially varying wave propagation speed (time-independent), \(k^{p}(x,y)\) and \(k^{v}(x,y)\) are pressure and velocity damping coefficients (time-independent), and \(S(x,y,t)\) is given by Eq.~\ref{eq:parcel_source_main}, and $\nabla = [\partial/\partial x, \partial/\partial y]^T$ is the gradient operator, and \(\nabla \cdot \mathbf{v} = \partial v_x/\partial x + \partial v_y/\partial y\) is the divergence operator.

Discretizing Eqs.~\ref{eq:wave_eq} over a two-dimensional lattice yields:
\begin{equation}\label{eq:disc_wave_eq}
  \small
\begin{aligned}
p(x,y,t+\Delta t) &= \bigl[1 - \Delta t\,k^{p}(x,y)\bigr]\,p(x,y,t) \\
&\quad - \Delta t\,c^{2}(x,y)\,\nabla \cdot \mathbf{v}(x,y,t) + \Delta t\,S(x,y,t), \\
\mathbf{v}(x,y,t+\Delta t) &= \bigl[1 - \Delta t\,k^{v}(x,y)\bigr]\,\mathbf{v}(x,y,t) - \Delta t\,\nabla p(x,y,t).
\end{aligned}
\end{equation}
where \(\Delta t\) is the time step.


This discrete update system exhibits frequency-selective response through spatial phase differentiation, as characterized by Theorem~\ref{thm:phase_response}.

\begin{theorem}[Frequency-Selective Phase Response]
\label{thm:phase_response}
Under multi-frequency excitation $S(x,y,t) = \sum_i S_i(x,y,t)$ with components at frequencies $\{\omega_i\}$, the steady-state pressure field exhibits frequency-dependent phase delay:
\begin{equation}
\label{eq:phase_response}
p(x,y,t) \approx \sum_i A_i(x,y) \, S_i(x,y, t - \tau_i(x,y)),
\end{equation}
where the phase delay is determined by input-resonance mismatch:
\begin{equation}
\label{eq:phase_delay}
\tau_i(x,y) \propto \arctan\left(\frac{\omega_i - \omega_{\mathrm{res}}(x,y)}{k^p(x,y) + k^v(x,y)}\right),
\end{equation}
with $\omega_{\mathrm{res}}(x,y)$ the local resonance frequency, $A_i(x,y)$ the amplitude response, and $\tau_i(x,y)$ the temporal phase delay at location $(x,y)$ for input frequency $\omega_i$. This establishes frequency selectivity via phase: different frequencies produce distinct phase delays, enabling spatial encoding of spectral content.
\end{theorem}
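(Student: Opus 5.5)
The argument is a linear-systems (frequency-response) one, carried out under a \emph{local} approximation. Because the discrete update in Eq.~\eqref{eq:disc_wave_eq} is linear in $(p,\mathbf{v})$ and the source is a sum of sinusoids, superposition lets me treat each component $S_i$ separately and add the steady-state responses. That already gives the form of Eq.~\eqref{eq:phase_response}.

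For a single component, I freeze the coefficients $c$, $k^p$, $k^v$ at their values at $(x,y)$; this is exact inside a parcel, where $c=c_i$. I then take the ansatz $p=\mathrm{Re}\{\hat p\, e^{j\omega_i t}e^{j\boldsymbol\kappa\cdot\mathbf{x}}\}$ with a dominant local wavenumber $\kappa$. To handle the discrete scheme, I replace $\partial_t$ by the bilinear symbol $s=\frac{2}{\Delta t}\tanh(j\omega\Delta t/2)=\frac{2j}{\Delta t}\tan(\omega\Delta t/2)$. This is what produces the $\tan(\pi f_i\Delta t)$ in Eq.~\eqref{eq:main_wave_speed_formula}. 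Eliminating $\mathbf{v}$ gives a scalar transfer function
\begin{equation*}
H_i(s)=\frac{\hat p}{\hat S}=\frac{s+k^v}{(s+k^p)(s+k^v)+c^2\kappa^2}.
\end{equation*}
This is a second-order resonator. Its resonance frequency satisfies $\omega_{\mathrm{res}}^2\approx c^2\kappa^2+k^pk^v$, and its effective bandwidth is $k^p+k^v$. In the steady state the homogeneous part decays, which requires the damping $k^p,k^v>0$ and the scheme to be stable. The response to $a_i\sin(\omega_i t)$ is then $|H_i(j\omega_i)|\,a_i\sin(\omega_i t-\phi_i)$. This identifies $A_i=|H_i|$ and $\tau_i=\phi_i/\omega_i$, assuming $a_i(t)$ varies slowly relative to $\omega_i$, i.e.\ a quasi-static envelope.

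The remaining step is computing $\phi_i=-\arg H_i(j\omega_i)$. The denominator's argument is $\arctan\!\big(\omega(k^p+k^v)/(\omega_{\mathrm{res}}^2-\omega^2)\big)$. Near resonance I write $\omega_{\mathrm{res}}^2-\omega^2\approx 2\omega(\omega_{\mathrm{res}}-\omega)$, and this reduces the expression to $\arctan\!\big((\omega_i-\omega_{\mathrm{res}})/(k^p+k^v)\big)$ up to a constant offset of $\pi/2$ and a factor-2 rescaling. The numerator's phase $\arctan(\omega/k^v)$ varies slowly near resonance and is absorbed into the offset. Dividing by $\omega_i$ then gives the proportionality in Eq.~\eqref{eq:phase_delay}. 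Distinct $\omega_i$ yield distinct $\tau_i$ by the monotonicity of $\arctan$, which gives the selectivity claim.

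I expect the main obstacle to be justifying the local, frozen-coefficient reduction on a lattice whose $c$ is piecewise constant in stripes. The true response is a sum over global modes rather than a single local resonator. My first approach would be to expand in the eigenmodes of the discrete operator $\mathcal{A}$ and argue that, with the damping present, the mode whose eigenfrequency is nearest $\omega_i$ dominates locally. I would define $\omega_{\mathrm{res}}(x,y)$ as that mode's frequency, so that the approximation holds to leading order and the "$\approx$" and "$\propto$" in the statement absorb the remaining modal and near-resonance errors.
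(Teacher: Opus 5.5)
Your proposal is correct at the level of rigor the statement allows, and it follows essentially the same route as the paper's proof: superposition, a frozen-coefficient frequency response with $\mathbf{v}$ eliminated and $-c^2\nabla^2$ replaced by a local resonance term, the phase read off the second-order denominator and linearized near resonance, and $\tau_i=\phi_i/\omega_i$ under a quasi-steady envelope. The differences are minor: the paper gets $\omega_{\mathrm{res}}$ from an eigenvalue analysis of the discrete update matrix rather than from your bilinear symbol (which your phase computation, done at $s=j\omega_i$, never actually uses), and your $\omega_{\mathrm{res}}^2\approx c^2\kappa^2+k^pk^v$ with explicit $\pi/2$-offset and numerator-phase bookkeeping is more careful than the paper's $\Omega_{\mathrm{res}}^2=\omega_{\mathrm{res}}^2-k^pk^v$ (a sign slip), which also silently absorbs the reciprocal arctan argument into the proportionality.
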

\vspace{-5pt}
\textit{Proof.} See Appendix~\ref{appendix:phase_response_proof}. \qed

\textbf{System Energy Measurement.}
Different spatial locations within the OWM lattice are associated with distinct attractors corresponding to specific sound event categories, enabling the system to localize and track the perceptual saliency of diverse auditory sources. Since each event is characterized by its own modulatory input frequency, the local wave propagation speed \(c(x,y)\) must also vary across space. This spatial dependence ensures that the eigenfrequency structure reflects the diversity of sound-driven dynamics in the system. 

However, while local eigenmodes are essential for modeling event-specific resonances, pattern drift detection and memory-related computations require more than analyzing these modes in isolation. Instead, the collective behavior of the system must be captured in terms of a global state variable. To this end, we define the total system energy, which aggregates pressure and velocity contributions across the lattice. This energy-based representation not only reflects the ongoing dynamics of the OWM but also forms the key computational signal for detection and optimality analysis.

The total energy of the OWM system in discrete form is
\begin{equation}
E(t) = \frac{1}{2}\sum_{i,j} \left[p_{i,j}^2(t) + v_{x,i,j}^2(t) + v_{y,i,j}^2(t)\right], \label{eq:energy_discrete}
\end{equation}
where the terms correspond to kinetic energy due to coupling, potential energy due to stiffness. For the purpose of analyzing and designing the internal structure and parameters, we approximate the 2D lattice as a continuous medium. The energy then becomes
\begin{equation}
\textstyle
E(t) = \iint \frac{1}{2}\left[p^2(x,y,t) + v_x^2(x,y,t) + v_y^2(x,y,t)\right] dxdy, 
\label{eq:energy_continuous}
\end{equation}
which will be used in the following calculations and theorem proofs as the basis for sensitivity and optimality analyses.

\subsubsection{Topological Organization with High Sensitivity to Salience}
\label{sec:topology-sensitivity}
To analyze the role of topological organization and its sensitivity to drift, we first reformulate the governing first-order velocity–pressure system in a more compact representation in Theorem~\ref{thm:second_order_equivalence}. This reformulation exposes the effective damping and restoring mechanisms and serves as a foundation for later connecting topological behavior with energy dynamics.

\begin{theorem}[Equivalence of First-Order System to Second-Order Damped Wave Equation]
\label{thm:second_order_equivalence}
The first-order pressure-velocity system Eqs.~\ref{eq:wave_eq} is equivalent to the second-order damped wave equation
\begin{equation}
  \small
\frac{\partial^2 p}{\partial t^2} + \gamma \frac{\partial p}{\partial t} + k^{v} k^{p} \cdot p = \nabla \cdot (c^2 \nabla p) + (k^{v} \mathbf{v} + \nabla p) \cdot \nabla c^2 + \left(k^{v} + \frac{\partial}{\partial t}\right) S,
\label{eq:second_order}
\end{equation}
with effective damping coefficient $\gamma = k^{p} + k^{v}$, restoring force coefficient $\mu = k^{v} k^{p}$, and modified source term $S_{\text{eff}} = (k^{v} + \partial/\partial t) S$. The term $(k^{v} \mathbf{v} + \nabla p) \cdot \nabla c^2$ represents spatial coupling from wave speed variation.
\end{theorem}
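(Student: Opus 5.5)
The plan is to eliminate $\mathbf{v}$ from Eqs.~\ref{eq:wave_eq} by differentiating the pressure equation in time and substituting the velocity equation. I will assume $k^{p},k^{v},c$ are time-independent, as the statement stipulates, and that $p,\mathbf{v},S$ are smooth enough to exchange $\partial_t$ with $\nabla$. Differentiating the first line of Eqs.~\ref{eq:wave_eq} in $t$ gives
\begin{equation*}
\partial_t^2 p + k^{p}\partial_t p = -c^2\,\nabla\cdot(\partial_t \mathbf{v}) + \partial_t S .
\end{equation*}
From the second line, $\partial_t\mathbf{v} = -k^{v}\mathbf{v} - \nabla p$. Taking the divergence of this and inserting it yields
\begin{equation*}
\partial_t^2 p + k^{p}\partial_t p = c^2 k^{v}\,\nabla\cdot\mathbf{v} + c^2\nabla^2 p + \partial_t S .
\end{equation*}

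The key step is to remove the remaining $\nabla\cdot\mathbf{v}$. I will do this by solving the first-order pressure equation for it, $c^2\nabla\cdot\mathbf{v} = -(\partial_t p + k^{p}p - S)$. Substituting gives
\begin{equation*}
\partial_t^2 p + (k^{p}+k^{v})\partial_t p + k^{p}k^{v}p = c^2\nabla^2 p + \left(k^{v}+\partial_t\right)S .
\end{equation*}
This already exhibits $\gamma = k^{p}+k^{v}$, the restoring coefficient $\mu = k^{v}k^{p}$, and the effective source $S_{\text{eff}}=(k^{v}+\partial_t)S$.

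Next I rewrite the spatial operator in conservative form to expose the coupling from the heterogeneous speed field. Here one must track how $c^2$ is moved inside the derivatives, using two identities:
\begin{equation*}
\nabla\cdot(c^2\nabla p) = c^2\nabla^2 p + \nabla c^2\cdot\nabla p, \qquad \nabla\cdot(c^2\mathbf{v}) = c^2\nabla\cdot\mathbf{v} + \mathbf{v}\cdot\nabla c^2 .
\end{equation*}
If the $k^{v}$ term and the $\nabla p$ term are both written with $c^2$ inside the divergence before the substitutions, the leftover gradient terms collect into a single expression $(k^{v}\mathbf{v}+\nabla p)\cdot\nabla c^2$. This reproduces the form of Eq.~\ref{eq:second_order}, with the sign fixed by the convention adopted.

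I expect this bookkeeping to be the main obstacle. The coefficient and sign of the $\nabla c^2$ term depend on where $c^2$ sits relative to the divergence, so I will carry out the computation once in each convention and check which one matches the stated term. I will also record a structural point: for the striped, piecewise-constant $c(x,y)=\sum_i c_i\mathbf{1}_{\Omega_i}$, we have $\nabla c^2=0$ in the interior of every parcel. The coupling term is therefore supported only on parcel interfaces, understood distributionally, and Eq.~\ref{eq:second_order} reduces to the classical damped telegraph equation inside each parcel.

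Finally, for \emph{equivalence} rather than mere implication, I will prove the converse. Given $p$ solving Eq.~\ref{eq:second_order}, define $\mathbf{v}$ by integrating the linear ODE $\partial_t\mathbf{v}+k^{v}\mathbf{v}=-\nabla p$ pointwise:
\begin{equation*}
\mathbf{v}(t)=e^{-k^{v}t}\mathbf{v}(0)-\int_0^t e^{-k^{v}(t-s)}\nabla p(s)\,ds .
\end{equation*}
Let $R = \partial_t p + k^{p}p + c^2\nabla\cdot\mathbf{v} - S$ be the residual of the pressure equation. Reversing the computation above shows that $(\partial_t+k^{v})R=0$. Hence $R\equiv 0$ provided the initial data are compatible, i.e.\ $\partial_t p(0)$ is chosen so that $R(0)=0$. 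The two formulations thus describe the same solutions under matched initial conditions.
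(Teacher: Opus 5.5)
The gap is in your fourth step. Your first three steps are correct and already give the true second-order equation. One hidden hypothesis: you use $\nabla k^{v}=0$ when taking the divergence of $k^{v}\mathbf{v}$; otherwise a term $c^2\mathbf{v}\cdot\nabla k^{v}$ appears, so state it explicitly, as the paper tacitly assumes it too. With that,
\[
\partial_t^2 p+\gamma\,\partial_t p+k^{p}k^{v}p = c^2\nabla^2 p+(k^{v}+\partial_t)S = \nabla\cdot(c^2\nabla p)-\nabla p\cdot\nabla c^2+(k^{v}+\partial_t)S .
\]
No placement of $c^2$ turns this into Eq.~\eqref{eq:second_order}:
\begin{itemize}
\item After elimination nothing $\mathbf{v}$-dependent is left. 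Writing $k^{v}c^2\nabla\cdot\mathbf{v}=k^{v}\nabla\cdot(c^2\mathbf{v})-k^{v}\mathbf{v}\cdot\nabla c^2$ only creates a pair $\pm k^{v}\mathbf{v}\cdot\nabla c^2$. That pair cancels once you use the pressure equation consistently, since the pressure equation controls $c^2\nabla\cdot\mathbf{v}$, not $\nabla\cdot(c^2\mathbf{v})$.
\item If you substitute $\nabla\cdot(c^2\mathbf{v})=-\partial_t p-k^{p}p+S$ there anyway, you do get a term of the stated shape. But it comes with the opposite sign, $-(k^{v}\mathbf{v}+\nabla p)\cdot\nabla c^2$, so the sign is not a matter of convention.
\item In the conservative convention $\partial_t p+k^{p}p=-\nabla\cdot(c^2\mathbf{v})+S$, the elimination gives $\nabla\cdot(c^2\nabla p)+(k^{v}+\partial_t)S$ with no coupling term at all.
\end{itemize}
So the stated right-hand side exceeds the correct one by $(k^{v}\mathbf{v}+2\nabla p)\cdot\nabla c^2$ in the convention of Eqs.~\eqref{eq:wave_eq}. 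In the conservative convention it exceeds it by $(k^{v}\mathbf{v}+\nabla p)\cdot\nabla c^2=-\partial_t\mathbf{v}\cdot\nabla c^2$. A concrete check in one space dimension: take $k^{p}=k^{v}=0$, $S=0$, $c^2=1+x$, $p(\cdot,0)=x$, $v(\cdot,0)=0$. Then $\partial_t^2 p(\cdot,0)$ equals $0$ under Eqs.~\eqref{eq:wave_eq} and $1$ in conservative form, while Eq.~\eqref{eq:second_order} demands $2$.

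Your planned comparison of conventions will therefore find no match. The identity holds exactly where $\nabla c^2=0$, i.e.\ in the parcel interiors from your structural remark. Your converse inherits the same problem: $(\partial_t+k^{v})R=0$ follows from the displayed equation. For $p$ solving Eq.~\eqref{eq:second_order} you instead get $(\partial_t+k^{v})R=(k^{v}\mathbf{v}+2\nabla p)\cdot\nabla c^2$.

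The paper's own proof reaches the stated term through two slips you should not reproduce:
\begin{itemize}
\item Eq.~\eqref{eq:E1} writes the pressure equation with $\nabla\cdot(c^2\mathbf{v})$, although Eqs.~\eqref{eq:wave_eq} has $c^2\nabla\cdot\mathbf{v}$.
\item Eq.~\eqref{eq:E2} uses $\partial_t[\nabla\cdot(c^2\mathbf{v})]=\nabla\cdot(c^2\partial_t\mathbf{v})+\partial_t\mathbf{v}\cdot\nabla c^2$. The last term is spurious because $c$ is time-independent.
\end{itemize}
Inserting $\partial_t\mathbf{v}=-k^{v}\mathbf{v}-\nabla p$ into that spurious term is exactly what produces $(k^{v}\mathbf{v}+\nabla p)\cdot\nabla c^2$.

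The sound version of the result is the equation from your third step. It is also, without damping, the form $\partial_t^2 p=c^2\nabla^2 p+S$ that the paper itself starts from in Lemma~\ref{lem:modal-coupling}. Your argument proves it in both directions, using the compatibility condition $R(0)=0$. Eq.~\eqref{eq:second_order} as written is valid only where $c$ is spatially constant.
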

\vspace{-14pt}
\begin{proof}
    See Appendix~\ref{appendix:Second-Order Equivalence}.
\end{proof}

\vspace{-10pt}
Building on this structural equivalence, we next derive the explicit energy evolution law, which highlights how the wave propagation speed \(c\) governs energy redistribution and thereby influences stability and drift sensitivity.

\begin{theorem}[Energy Evolution]\label{the:Energy_Evolution_Equation}
When $p(x,y,t)$ and $v_x,v_y$ are governed by Eqs.~\ref{eq:wave_eq} under periodic boundaries. The energy evolution is given by
\begin{equation}
\textstyle \frac{dE}{dt} = \int \!\! \int [ pS - k^p p^2 - k^v |\mathbf{v}|^2 - (c^2-1) p \nabla \cdot \mathbf{v} ] \, dxdy.
\label{eq:energy-evolution}
\end{equation}
\end{theorem}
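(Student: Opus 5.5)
We prove the identity by differentiating the continuous energy \eqref{eq:energy_continuous} directly in time, substituting the first-order system \eqref{eq:wave_eq}, and using a single integration by parts that is legitimate because of the periodic boundaries. Theorem~\ref{thm:second_order_equivalence} is not needed; the first-order form is the natural one here, since $E$ is quadratic in $(p,\mathbf v)$. Throughout, we assume $p$ and $\mathbf v$ are smooth enough (say $C^1$ in space and time on the periodic domain) to differentiate under the integral sign.

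\textbf{Step 1 (time derivative).} Writing $|\mathbf v|^2=v_x^2+v_y^2$, we have
\[
\frac{dE}{dt}=\iint \left( p\,\partial_t p+\mathbf v\cdot\partial_t\mathbf v \right)dx\,dy .
\]

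\textbf{Step 2 (substitution).} The pressure equation of \eqref{eq:wave_eq} gives
\[
p\,\partial_t p=pS-k^p p^2-c^2\,p\,\nabla\!\cdot\mathbf v .
\]
The velocity equation gives
\[
\mathbf v\cdot\partial_t\mathbf v=-k^v|\mathbf v|^2-\mathbf v\cdot\nabla p .
\]
The source and damping terms $pS-k^p p^2-k^v|\mathbf v|^2$ already appear exactly as in \eqref{eq:energy-evolution}. What remains is the coupling contribution
\[
\iint\left(-c^2 p\,\nabla\!\cdot\mathbf v-\mathbf v\cdot\nabla p\right)dx\,dy .
\]

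\textbf{Step 3 (integration by parts).} We use the product rule $\nabla\!\cdot(p\mathbf v)=p\,\nabla\!\cdot\mathbf v+\mathbf v\cdot\nabla p$ together with the divergence theorem. On a periodic domain the boundary flux $\oint p\,\mathbf v\cdot\mathbf n$ cancels pairwise on opposite edges, so $\iint\nabla\!\cdot(p\mathbf v)=0$. Hence
\[
-\iint \mathbf v\cdot\nabla p \;=\; \iint p\,\nabla\!\cdot\mathbf v .
\]
Combining this with $-\iint c^2 p\,\nabla\!\cdot\mathbf v$ gives $-\iint (c^2-1)\,p\,\nabla\!\cdot\mathbf v$, which yields \eqref{eq:energy-evolution}.

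The only delicate point is Step 3. Because $c^2$ multiplies only the pressure equation, the two cross terms do not cancel unless $c\equiv 1$. This is why the statement retains the residual term $(c^2-1)\,p\,\nabla\!\cdot\mathbf v$ rather than a pure dissipation law; that term encodes the energy redistribution induced by the wave-speed field. One must resist the temptation to move $c^2$ inside the divergence, which would generate a $\nabla c^2$ term, and should apply integration by parts only to the velocity cross term. The regularity assumption covers differentiating under the integral and applying the divergence theorem. Because $c$ is piecewise constant (the striped pattern), its possible non-differentiability never enters: $c$ is never differentiated in this argument.
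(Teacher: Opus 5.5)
Your proposal is correct and follows essentially the same route as the paper: differentiate $E$, substitute the first-order system, and integrate the velocity cross term by parts using periodicity, so that it combines with the $c^2$ term into $-(c^2-1)\,p\,\nabla\cdot\mathbf v$. The only difference is cosmetic: you use the identity $\nabla\cdot(p\mathbf v)=p\,\nabla\cdot\mathbf v+\mathbf v\cdot\nabla p$ with the divergence theorem, while the paper integrates by parts separately in $x$ and $y$.
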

\vspace{-14pt}
\begin{proof}
    See Appendix~\ref{appendix:Energy Evolution Equation}.
\end{proof}
\vspace{-10pt}
The injection term $\iint p S \, dx\, dy$ in Eq.~\eqref{eq:energy-evolution} governs detection sensitivity. To maximize sensitivity, $c(x,y)$ must create frequency-selective spatial differentiation where different auditory events drive spatially distinct activity patterns through varying local resonance frequencies~\cite{lakatos2016global}, while enabling coherent coupling between slow maintenance dynamics and fast encoding transients~\cite{lundqvist2018gamma}. A striped structure $c(x,y) = c(y)$ with Bragg-matched periodicity~\cite{kushwaha1993acoustic} achieves both objectives: coherent reflections create slow-propagating modes supporting maintenance oscillations, which interact with encoding transients via phase-dependent energy injection~\cite{lundqvist2018gamma}.

\begin{theorem}[Striped Pattern Optimality]
\label{thm:striped-bragg}
Under amplitude constraint $|\delta c^2| \leq A$, the Bragg-resonant striped square wave $\delta c^2(x,y) = A \cdot \mathrm{sgn}[\cos(2\pi q_0 y/L_y)]$ maximizes modal coupling strength $|C_{(m,n),(m,n+q_0)}|$ and spatial frequency differentiation for salience detection at mode separation $q_0$.
\end{theorem}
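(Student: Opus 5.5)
We work in the modal basis of the unperturbed operator on the periodic domain $[0,L_x]\times[0,L_y]$. There we define the coupling coefficient and reduce the optimality claim to an extremal problem for a single Fourier coefficient of $\delta c^2$ under an $L^\infty$ constraint.

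\textbf{Modal setup.} Write $c^2(x,y)=\bar c^2+\delta c^2(x,y)$. With periodic boundaries, the eigenmodes of $\nabla\cdot(\bar c^2\nabla\,\cdot)$ are the plane waves $\phi_{m,n}(x,y)=e^{2\pi i(mx/L_x+ny/L_y)}$. Using the second-order form of Theorem~\ref{thm:second_order_equivalence}, project the perturbation terms $\nabla\cdot(\delta c^2\nabla p)+(k^v\mathbf v+\nabla p)\cdot\nabla \delta c^2$ onto $\phi_{m',n'}$. To first order in $\delta c^2$, this gives a coupling
\[
C_{(m,n),(m',n')}=\kappa(m,n,m',n')\,\widehat{\delta c^2}(m'-m,\,n'-n),
\]
where $\kappa$ is an explicit wavevector-dependent factor (of the form $-\mathbf k'\cdot\mathbf k$ plus damping corrections) that does not depend on the shape of $\delta c^2$. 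Hence $|C_{(m,n),(m,n+q_0)}|=|\kappa|\,|\widehat{\delta c^2}(0,q_0)|$. This step is the Bragg condition: only the Fourier component of $\delta c^2$ at the mode separation couples the two modes.

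\textbf{Extremal problem.} We then maximize $|\widehat{\delta c^2}(0,q_0)|=\bigl|\tfrac{1}{L_xL_y}\iint \delta c^2\,e^{-2\pi i q_0 y/L_y}\,dx\,dy\bigr|$ subject to $|\delta c^2|\le A$.
\begin{itemize}
\item Fix a phase so that the coefficient is real. The quantity to maximize is then $\tfrac{1}{L_xL_y}\iint \delta c^2\cos(2\pi q_0 y/L_y)\,dx\,dy$.
\item This is bounded by $\tfrac{A}{L_xL_y}\iint|\cos(2\pi q_0y/L_y)|\,dx\,dy=2A/\pi$.
\item Equality holds if and only if $\delta c^2=A\,\mathrm{sgn}[\cos(2\pi q_0 y/L_y)]$ almost everywhere, by the bang-bang or H\"older $L^\infty$--$L^1$ duality argument.
\end{itemize}
The maximizer is independent of $x$, which yields the striped structure $c=c(y)$. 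By comparison, a sinusoidal profile $A\cos(\cdot)$ only attains $A/2$, so the square wave gains a factor $4/\pi$.

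\textbf{Spatial frequency differentiation.} Reading ``differentiation'' as the splitting of the local resonance frequencies (band gap) at the Bragg point, degenerate perturbation theory on the pair $(m,\pm q_0/2)$ gives a splitting proportional to $|C|$. The same extremal argument therefore shows the square wave maximizes the gap. By Theorem~\ref{thm:phase_response}, a larger spread of $\omega_{\mathrm{res}}$ yields more distinct phase delays, and by Theorem~\ref{the:Energy_Evolution_Equation}, larger modulation of the injection term $\iint pS$.

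\textbf{Main obstacle.} The hardest step is justifying the first-order factorization $C=\kappa\,\widehat{\delta c^2}$. The term $(k^v\mathbf v)\cdot\nabla c^2$ involves $\mathbf v$, and $\nabla\delta c^2$ is distributional (delta functions at the stripe edges) for a square wave. I would handle this by integrating by parts against $\phi_{m',n'}$ so that only $\widehat{\delta c^2}$ appears, expressing $\mathbf v$ modally through the second equation of Eq.~\eqref{eq:wave_eq} in the time-harmonic regime. I would also state explicitly that the result holds at leading order in $A$, where higher harmonics $3q_0,5q_0,\dots$ do not affect the $q_0$ coupling.
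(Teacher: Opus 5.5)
Your proof is correct. It rests on the same core fact as the paper's proof: the $(m,n)$--$(m,n+q_0)$ coupling is a shape-independent wavevector factor times a single Fourier coefficient of $\delta c^2$, and a square wave maximizes that coefficient under $|\delta c^2|\le A$. You organize the optimization differently, and more tightly.

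\textbf{Where your route differs and what it buys.} The paper proceeds in three stages:
\begin{itemize}
\item Part~I restricts to striped profiles $\delta c^2=f(y)$, reading the $\delta_{m,m'}$ of Lemma~\ref{lem:modal-coupling} as showing that striping is ``structurally optimal''.
\item Part~II fixes the period $T=L_y/q_0$ by a harmonic-placement argument.
\item Only then does it optimize the shape via Lemma~\ref{lem:fourier-max}.
\end{itemize}
You pose the extremal problem over all two-dimensional profiles at once. Because $\widehat{\delta c^2}(0,q_0)$ only sees the $x$-average of $\delta c^2$, the bound $\frac{A}{L_y}\int_0^{L_y}|\cos(2\pi q_0 y/L_y)|\,dy=2A/\pi$ and its equality case give the stripes, the Bragg period and the square-wave shape together (up to a $y$-translation). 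This genuinely rules out non-striped competitors, which Part~I only asserts. Your bound is also sharp, whereas Lemma~\ref{lem:fourier-max} bounds $|\hat f_q|$ by $A$ and then claims equality for a profile that reaches only $2A/\pi$. Your equality characterization is what actually proves optimality.

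\textbf{What the paper's route buys.} Its coupling computation is simpler. Lemma~\ref{lem:modal-coupling} uses $\partial_t^2 p=c^2\nabla^2 p+S$ with damping dropped, so $C_{(m,n),(m',n')}=k_{m',n'}^2\langle\delta c^2,\phi_{m',n'},\phi_{m,n}\rangle$ holds exactly, with no velocity term and no derivative on $\delta c^2$. That non-divergence form is also exactly what you get by eliminating $\mathbf v$ directly from Eq.~\eqref{eq:wave_eq}, where $c^2$ multiplies $\nabla\cdot\mathbf v$ rather than sitting inside the divergence. Going through Theorem~\ref{thm:second_order_equivalence} is therefore what creates your ``main obstacle'' with $k^v\mathbf v\cdot\nabla c^2$ and a distributional $\nabla\delta c^2$. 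In either form the coupling is exactly linear in $\delta c^2$, so your ``first order in $\delta c^2$'' caveat is unnecessary.

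\textbf{Spatial frequency differentiation.} The paper only notes that the square wave uses the full range $2A$ of $\delta c^2$, which any profile touching $\pm A$ shares. Your band-gap reading is a more meaningful quantity, controlled by the same extremal argument. However, the degenerate pair $(m,\pm q_0/2)$ exists on the integer mode lattice only for even $q_0$. This part of the theorem therefore remains interpretive rather than rigorous in both versions.
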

\begin{proof}
See Appendix~\ref{appendix:bragg-proof}.
\end{proof}
\vspace{-10pt}
The Bragg-matched structure supports multi-timescale dynamics through slow-propagating coherent modes, enabling interactions between maintenance and encoding processes analogous to cross-frequency coupling in neural working memory~\cite{lundqvist2018gamma}.

\section{Experiment}
\label{subsection:experimental_setting}
\textbf{Dataset.} We evaluate NAACA using XD-Violence~\cite{wu2020not} and Urban Soundscapes of the World (USoW)~\cite{urbansoundscapes,de2017urban}. For the multimodal XD-Violence dataset, we utilize only the audio track to benchmark our auditory drift detection. As NAACA is training-free, we evaluate exclusively on the 500-sample test set, leveraging its high-granularity, frame-level labels to assess temporal boundary precision. To complement the stylized movie audio of XD-Violence, we use USoW to simulate real-world surveillance. USoW provides unlabeled, high-quality 4-channel ambisonics of stable urban environments punctuated by rare events. This serves as our primary benchmark for qualitative analysis, demonstrating OWM’s robustness against ambient noise and its interpretability in detecting meaningful sonic shifts.

\textbf{Measurements.} We evaluate our framework using a mix of ground-truth metrics and efficiency proxies. For XD-Violence, we utilize frame-level annotations to report Average Precision (AP), assessing the global ability to distinguish violent anomalies from background noise. We also report frame-level temporal precision, defined as the fraction of OWM drift points that coincide with annotated salient event frames. To measure the practical utility of our attention-gating across both datasets, we define Computational Time Saved. This metric quantifies the reduction in inference overhead achieved by invoking the ALM only at detected drift points versus exhaustive, continuous processing. By reporting the fraction of detected drifts and visualizing the OWM's internal energy states, we demonstrate the architecture's ability to capture salient shifts while minimizing the computational burden of long-form audio understanding.


\textbf{Baselines.} We evaluate our framework against four categories: 
(i) \emph{Exhaustive ALM Inference}, an internal ablation where AudioQwen processes the full stream without gating to isolate the efficiency gains of our mechanism; 
(ii) \emph{Random 4\,s Segment Selection}, which randomly forwards the same number of 4\,s segments as NAACA to separate the benefit of shorter inputs from the benefit of OWM-based selection;
(iii) \emph{Supervised audio-only models}, including HL-Net~\cite{wu2020not} and AVadCLIP~\cite{wu2025avadclip}; and 
(iv) \emph{Video-only frameworks}, covering supervised models such as S3R~\cite{wu2022self}, VadCLIP~\cite{wu2024vadclip} and Holmes-VAU~\cite{zhang2025holmes}, as well as zero-shot approaches like TRACE~\cite{siddiqui2025traces}. 
Crucially, unlike our training-free approach, these baseline models require domain-specific fine-tuning, extra training on domain knowledge, or auxiliary attention fusion layers.

\textbf{Implementation Details.} 
Audio is processed in 4\,s windows using PANN~\cite{9229505} features (527 classes). OWM uses a $64 \times 64$ grid with $\Delta t=0.01$, $k_p=k_v=10$, carrier frequencies from 51--1200\,Hz, and striped $c(y) \in [0.1,70]$ from Theorem 2.1. AudioQwen~\cite{chu2023qwen} provides semantic interpretation. Full details in Appendix~\ref{app:freq_assignment}.

\section{Results and Discussion}

\subsection{Quantitative Evaluation on XD-Violence}

\textbf{Performance Success.}
The primary goal of our framework is to achieve competitive audio-only performance using general-purpose models without the need for expensive training from scratch or domain-specific fine-tuning. Task-specific models often incur prohibitive costs in data collection and labeling, and video-based systems are frequently limited by environmental factors such as poor lighting or occlusions; in contrast, audio-based models offer a more resilient alternative that is less sensitive to such conditions.

\begin{table}[H]
\caption{Average Precision (AP) on XD-Violence, grouped by baseline type.}
\vspace{-4pt}
\label{tab:xdv_ap}
\centering
\setlength{\tabcolsep}{3pt}
\begin{tabular}{lccccc}
\toprule
Method & Training & Zero-shot & Modality & AP (\%) \\
\midrule
Audio Qwen & \checkmark & & Audio & 53.50 \\
Random 4\,s & & \checkmark & Audio & 60.44 \\
\midrule
HL-Net$^\dagger$ & \checkmark & & Audio & 60.50 \\
AVadCLIP$^\dagger$ & \checkmark & & Audio & 52.51 \\
\midrule
S3R$^*$ & \checkmark & & Video & 80.26 \\
VadCLIP$^*$ & \checkmark & & Video & 84.51 \\
Holmes-VAU$^*$ & \checkmark & & Video & 87.68 \\
\midrule
Zero-shot CLIP$^*$ & & \checkmark & Video & 17.83 \\
TRACE$^\dagger$ & \checkmark & \checkmark & Video & 83.67 \\
\midrule
\rowcolor{gray!15}
\textbf{NAACA} & & \checkmark & \textbf{Audio} & \textbf{70.60} \\
\bottomrule
\multicolumn{5}{l}{\footnotesize $^*$Results reported in Holmes-VAU paper.} \\
\multicolumn{5}{l}{\footnotesize $^\dagger$Results reported in their respective papers.} \\
\end{tabular}
\end{table}

As shown in Table~\ref{tab:xdv_ap}, we utilize Audio Qwen as our operational baseline for classifying the audio stream. While video-only models generally outperform audio-only models on the XD-Violence task due to the explicit visual cues available in action-movie-style data, our results demonstrate that audio remains a highly informative modality when processed correctly.

Our NAACA achieves an AP of 70.60\%. This represents a substantial 17.1\% absolute improvement over the baseline Audio Qwen (53.50\% AP). To separate the effect of shorter inputs from the effect of OWM-based saliency selection, we additionally evaluate a random 4\,s segment baseline that forwards the same number of segments as NAACA. This random baseline reaches 60.44\% AP, indicating that shorter inputs account for a 6.94\% gain over exhaustive inference, while OWM selection contributes an additional 10.16\% AP. Thus, both effects are beneficial, but the selective contribution of OWM is larger. In addition, OWM drift points coincide with frame-level ground-truth timestamps at 61.1\%, showing that the selected segments are temporally aligned with annotated salient events rather than merely reducing input length.

Notably, our training-free approach significantly outperforms supervised audio-only models like AVadCLIP (52.51\% AP) and HL-Net (60.50\% AP). While video-based methods like TRACE show strong results (83.67\% AP), it is important to note that TRACE is not a pure zero-shot solution in the same sense as our framework; despite being marked as zero-shot in Table 1, it relies heavily on the training of an additional temporal encoder and a cross-attention fusion layer, which we consider a form of domain adaptation. In contrast, our framework provides a specialized, training-free solution for scenarios where visual data is unavailable or where computational resources preclude the training of auxiliary fusion modules.

\textbf{The Audio-Video Gap (Failure Case Analysis).}
Despite NAACA's competitive performance, a gap remains between audio-only frameworks and supervised video models. As illustrated in the confusion matrix in Fig.~\ref{fig:xdv_confusion_matrix}, this performance ceiling is largely attributable to inherent acoustic ambiguity and dataset constraints. The \textit{Abuse} category suffers the highest misclassification, with 45.0\% of cases identified as \textit{Fighting}. This is primarily due to the extreme scarcity of data—comprising only 11 samples—and the high semantic overlap of acoustic cues like shouting. Similarly, \textit{Shooting} exhibits a 19.0\% confusion rate with \textit{Fighting}. These events frequently co-occur in real-world scenarios, and without visual cues to identify firearms, the model defaults to the broader acoustic signature of a physical altercation. Furthermore, categories such as \textit{Explosions} and \textit{Car Accidents} rely heavily on visual markers like fire or debris for definitive classification. While 14.7\% of Fighting samples are missed entirely (appearing in the None/Miss category), only 0.7\% are confused with Explosions, though sudden, high-energy acoustic transients from both categories can be acoustically similar without visual context.
\begin{figure}[!t]
    \centering
    \includegraphics[width=\linewidth, trim=0 0 0 0, clip]{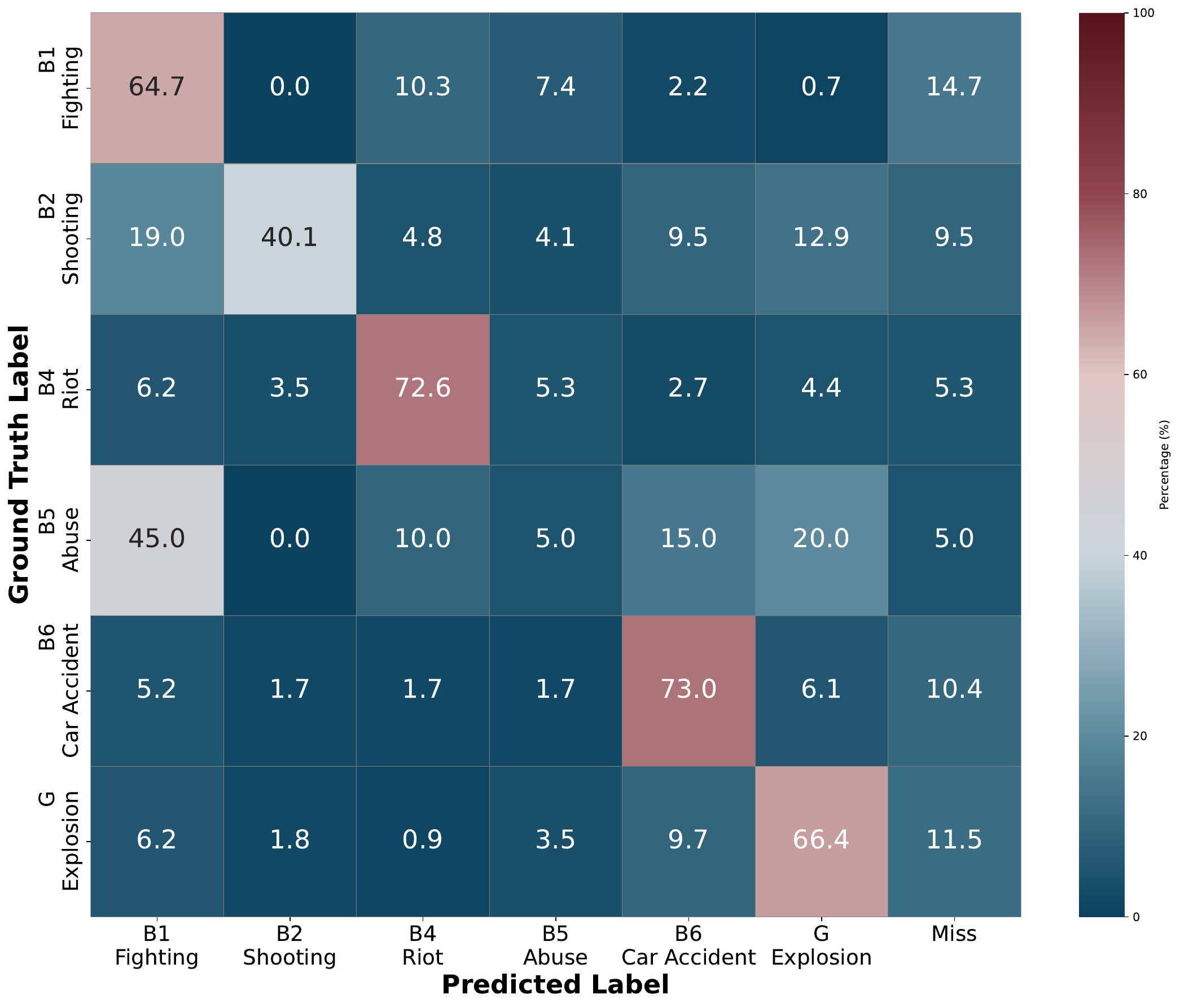}
    \vspace{-5pt}
    \caption{Confusion matrix on the XD-Violence test set audio track. Significant overlaps between \textit{Abuse}, \textit{Shooting}, and \textit{Fighting} reflect acoustic ambiguities and event co-occurrence. The misclassification of \textit{Fighting} as \textit{Explosions} highlights the reliance on visual cues for high-energy transient events.}
    \label{fig:xdv_confusion_matrix}
\end{figure}

\subsection{Qualitative Case Study on USoW}
\label{subsection:examples}

To qualitatively assess the behavior of our framework, we highlight three illustrative cases of auditory pattern drift. 
Specifically, we consider: (1) the detection of salient novel events that shifts the auditory context, 
(2) robustness to transient silences or pauses in ongoing sound streams, and 
(3) sensitivity to subcategory-level substitutions within a broader sound category (e.g., different types of musical instruments within the ``music'' class). 
These cases provide concrete insights into how the OWM detects and distinguishes different forms of drift beyond 
low-level acoustic fluctuations.

\subsubsection{Detection of Salient Novel Events}

We illustrate this case with two representative recordings in which the most salient novel event occurs near the end of the segment. In both recordings, the background soundscape is relatively stable: the Montreal recording features continuous traffic flow and bird chirping, while the Berlin recording contains crowd conversations, birds, and fountain noise. This provides a clear contrast when a new source emerges late in the sequence, allowing us to assess how well each method captures such onsets.  

\begin{figure}[!t]
  \centering
  \begin{subfigure}{\linewidth}
      \centering
      \includegraphics[width=\linewidth, trim=0 0 390 0, clip]{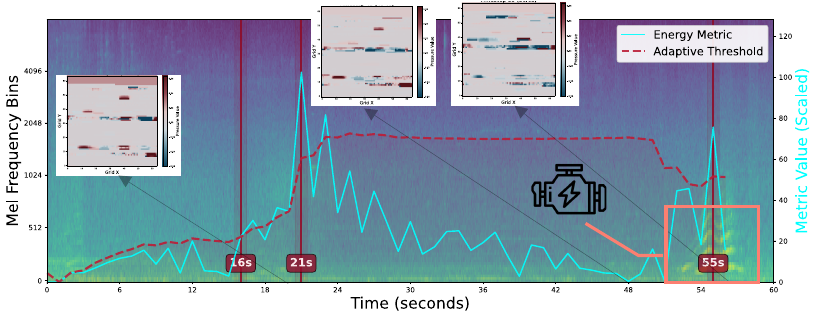}
      \subcaption{\textbf{R0002 (Place d'Armes, Montreal)}: Car engine onset at 53\,s; OWM $p$-field activation near (30, 52).}
      \label{fig:R0002}
  \end{subfigure}

  \begin{subfigure}{\linewidth}
    \centering
    \includegraphics[width=\linewidth,height=0.38\textheight,keepaspectratio,  trim=0 0 390 0, clip]{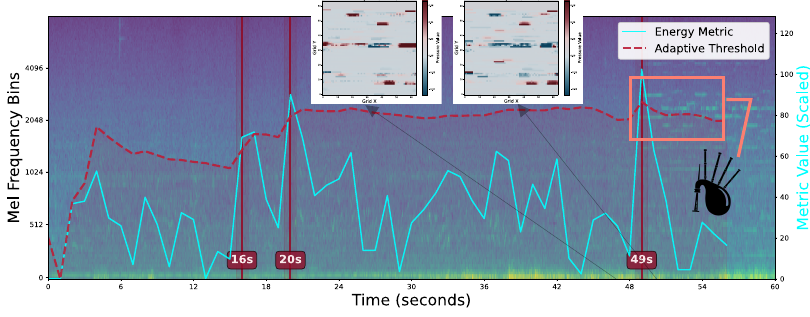}
    \subcaption{\textbf{R0056 (Alexanderplatz, Berlin)}: Bagpipe onset at 49\,s; OWM $p$-field activation near (25, 22).}
    \label{fig:R0056}
  \end{subfigure}
  \vspace{-18pt}
  \caption{\textbf{OWM detection of novel events.} Mel-spectrograms with car engine (a) and bagpipe (b) onsets. OWM outputs (cyan = energy change, red = adaptive threshold, $p$-field states shown). Vertical dashed lines mark detected drifts.}
  \label{fig:novel_sound_event}
\end{figure}

As shown in Fig.~\ref{fig:novel_sound_event}, OWM detects late-arriving novel events with high precision. The $p$-field concentrates energy near the true change point (e.g., Example~R0002, Fig.~\ref{fig:R0002}), distinguishing genuine onsets from background variability.

\subsubsection{Robustness to Transient Pauses}
We next consider recordings where the salient sound events include natural pauses or interruptions. Such cases are challenging because detectors may mistake short gaps within an ongoing event for the onset of new events. The first recording (Fig.~\ref{fig:R0037}) was captured in a dense traffic environment, while the second (Fig.~\ref{fig:R0016}) comes from a lively public square during a music festival. In both recordings, prominent sources exhibit intermittent activity, providing a useful testbed for evaluating robustness to transient pauses. 
\begin{figure}[!t]
  \centering
  \begin{subfigure}{\linewidth}
    \centering
    \includegraphics[width=\linewidth,height=0.38\textheight,keepaspectratio, trim=0 0 390 0, clip]{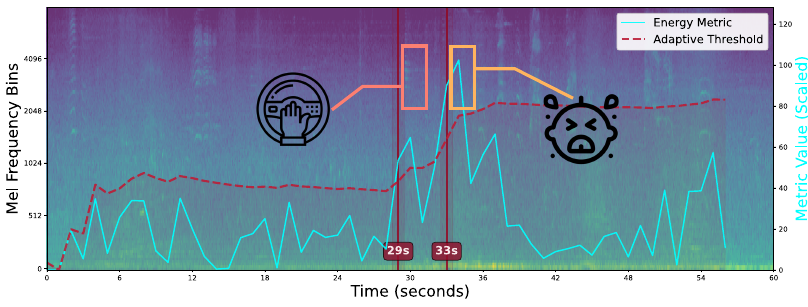}
    \subcaption{\textbf{R0037 (Johnston Road, Hong Kong)}: A baby cry with short pauses. OWM registers one event at 33\,s.}
    \label{fig:R0037}
  \end{subfigure}

  \begin{subfigure}{\linewidth}
    \centering
    \includegraphics[width=\linewidth,height=0.38\textheight,keepaspectratio, trim=0 0 390 0, clip]{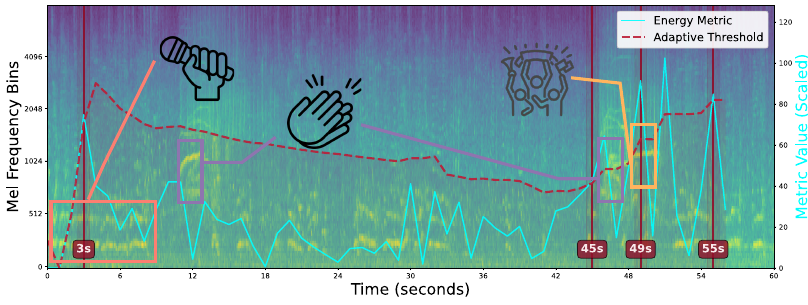}
    \subcaption{\textbf{R0016 (Quincy Market, Boston)}: Festival scene with speech followed by applause. OWM yields a single detection.}
    \label{fig:R0016}
  \end{subfigure}
  \vspace{-18pt}
  \caption{\textbf{OWM is robust to transient pauses.} Mel-spectrogram of OWM output. Vertical dashed lines mark detected drifts: cyan indicates energy change, and red indicates the adaptive threshold.}
  \label{fig:robust_to_transient_pause}
\end{figure}

As shown in Fig.~\ref{fig:robust_to_transient_pause}, OWM successfully identifies the main events while maintaining stability through spectral fluctuations. In Example~R0016, OWM effectively consolidates short silences within the same salient event. This demonstrates its ability to maintain a stable event representation and avoid over-segmentation despite transient pauses in the acoustic signal.

\subsubsection{Subcategory Transition Sensitivity}

We further analyze cases where acoustic variation arises within a subcategory of ongoing sounds rather than from the introduction of a completely new source. In this example (Fig.~\ref{fig:sensitivity_to_subcategory}), recorded in Square Phillips (Montreal), the background consists of street traffic mixed with music containing two main instruments: a hi-hat at higher frequencies and a slower kick drum. The interplay of these instruments produces several subcategory-level shifts.  

Specifically, the hi-hat drops out at 21\,s, leaving only the kick drum; it reappears at 32\,s and pauses again at 38\,s. OWM successfully detects each of these subtle transitions, capturing the precise moments of instrument entry and exit. At 42\,s, a distinct car horn emerges, which OWM identifies as a salient novel source. These results demonstrate OWM’s ability to capture fine-grained subcategory-level drift while simultaneously maintaining high sensitivity to distinct novel events.

\begin{figure}[!htbp]
  \centering
  \includegraphics[width=\linewidth, trim=0 0 390 0, clip]{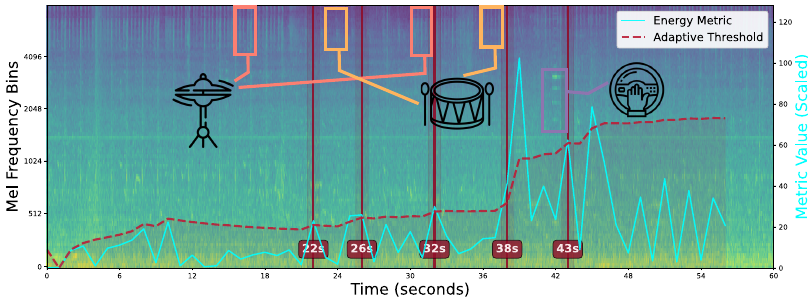}
  \caption{\textbf{OWM sensitivity to fine-grained salient transitions (Example R0010, Square Phillips, Montreal).} 
  The segment features alternating hi-hat and kick drum patterns, with a distinct car horn at 42\,s that OWM detects as a novel event alongside the subcategory drift. 
  Vertical dashed red lines mark detected drift points. The cyan curve represents energy change, the red curve shows the adaptive threshold, and detected events are highlighted at the change points.}
  \label{fig:sensitivity_to_subcategory}
\end{figure}
\vspace{-14pt}

\subsection{Spectral Analysis of \(p\)-field Dynamics via FFT}
\label{subsec:FFT_analysis}

We computed Fast Fourier Transforms (FFTs) of \(p\)-field activity, sampling internal states every second to match the sliding window stride (Subsection~\ref{subsection:experimental_setting}). With \(dt=0.01\), one second equaled 100 time steps. For each \(p\) neuron, we extracted the dominant frequency (maximal FFT amplitude) to construct frequency maps (Fig.~\ref{fig:FFT_temporal_comparison}, Examples R0016 and R0056). To suppress numerical noise, only neurons above the 75th percentile of variance were retained. Additional results appear in Appendix~\ref{appendix:Extra FFT Analysis}.

\begin{figure}[!t]
    \centering
    \begin{subfigure}[b]{0.49\textwidth}
        \centering
        \includegraphics[width=\textwidth]{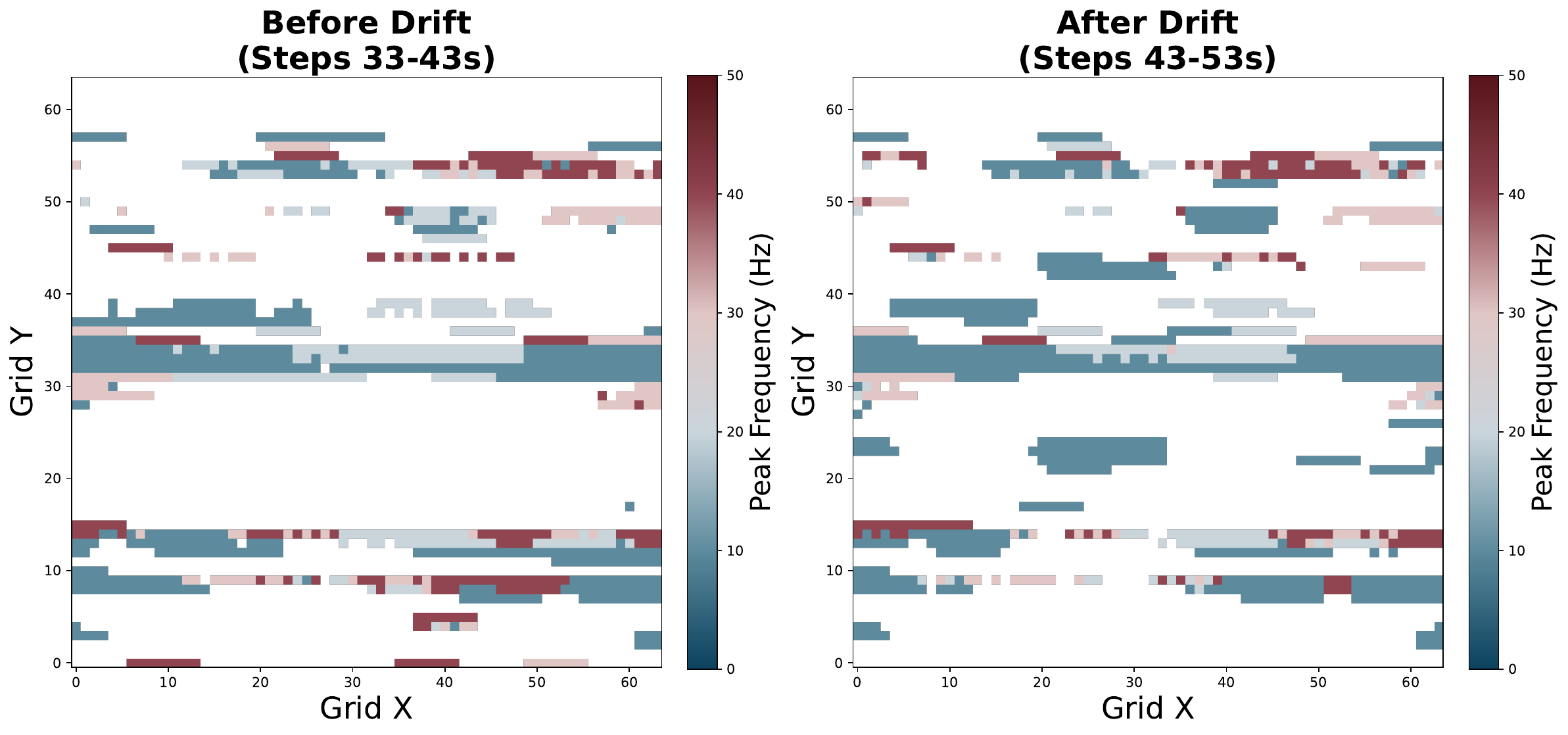}
        \subcaption{Example R0016 (drift around 43\,s)}
        \label{fig:FFT_temporal_R0016}
    \end{subfigure}
    \hfill
    \begin{subfigure}[b]{0.49\textwidth}
        \centering
        \includegraphics[width=\textwidth]{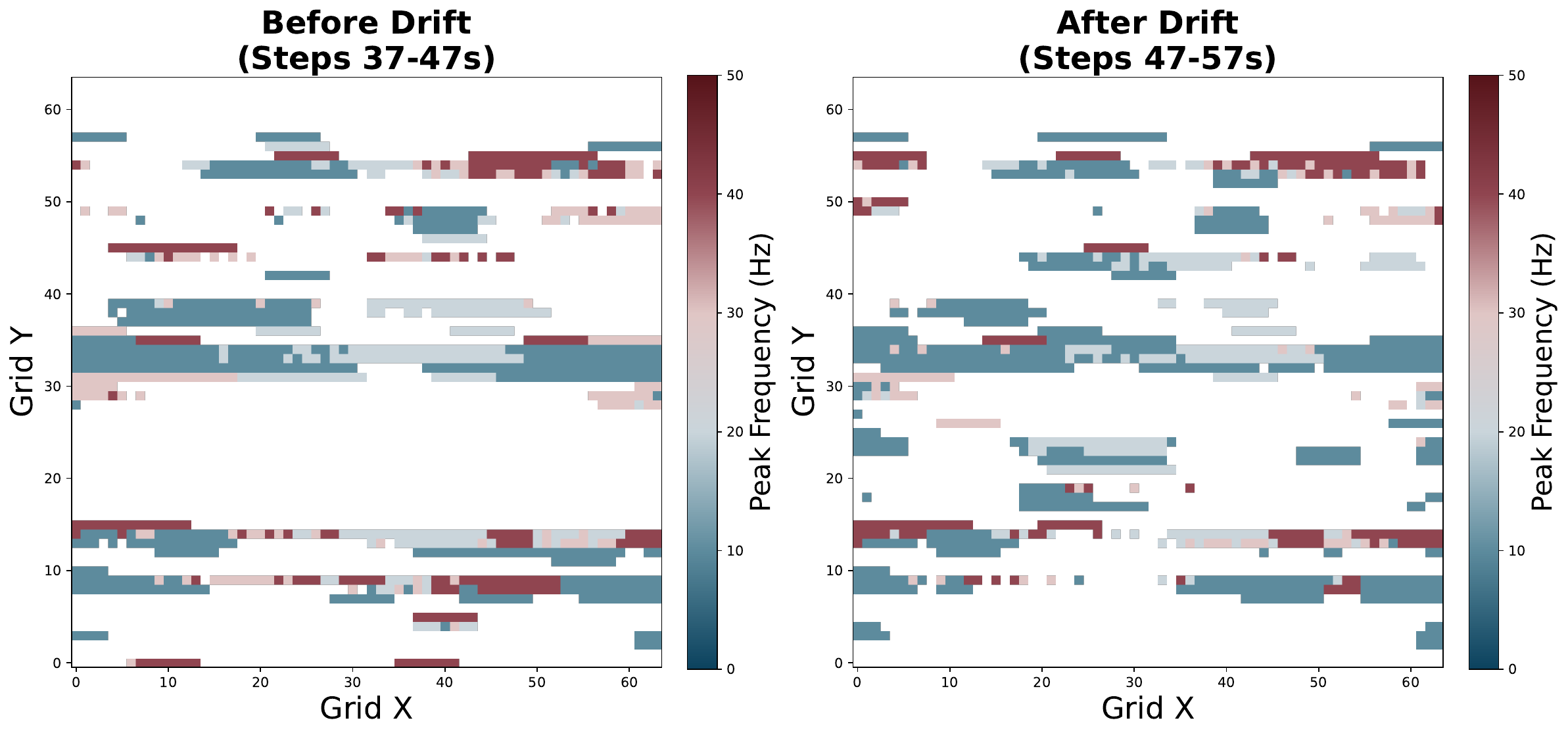}
        \subcaption{Example R0056 (drift around 47\,s)}
        \label{fig:FFT_temporal_R0056}
    \end{subfigure}
    \vspace{-15pt}
    \caption{\textbf{Temporal frequency analysis around drift detection events.} Frequency distributions in active \(p\) neurons during 10\,s before (left) and after (right) drift onset. Only neurons above the 75th percentile activity threshold are shown.}
    \label{fig:FFT_temporal_comparison}
\end{figure}
The analysis revealed spatially clustered oscillatory activity rather than uniform grid activation, with dominant frequencies limited to 0--50\,Hz by the Nyquist bound (\(dt=0.01\) s). Distinct frequency bands aligned with canonical neural regimes: during stable background periods, both examples exhibited sustained \(\beta\)-band activity (15--30\,Hz), consistent with its theorized role in working memory maintenance \cite{lundqvist2018gamma}. After drift onset, Example R0016 shifted toward \(\gamma\)-band activity (30--50\,Hz), reflecting rapid encoding of salient auditory input (applause, cheering). In Example R0056, the post-drift segment with emerging bagpipe sounds showed subsets of neurons entering the \(\gamma\) range, while pronounced \(\beta\)-band oscillations persisted, suggesting mixed maintenance and encoding dynamics. Spatially, both examples exhibited a redistribution of activity: channels along the upper grid boundary (Y=1 row), previously speech-related, showed reduced activity post-drift, whereas deeper clusters ($Y \approx 20$) became strongly engaged, consistent with recruiting new resources for encoding applause, cheering, and musical instruments. These patterns underscore that OWM reallocates oscillatory dynamics to novel salient sources rather than sustaining prior speech inputs. Supplementary animations show the real-time evolution of p-field activity for each example.

In contrast, \(\alpha\)-band activity (8--12\,Hz) remained weak with no systematic changes across drift events, precluding confirmation of links between elevated \(\alpha\) power and attentional lapses \cite{lakatos2016global, kasten2024opposing}. Similarly, \(\theta\)-band activity (4--8\,Hz) was sparse and failed to form robust clusters, despite prior reports of \(\theta\) entrainment supporting auditory working memory \cite{albouy2017selective, bonetti2024spatiotemporal}.

In sum, the OWM reallocates frequency-specific oscillatory clusters, similar to cortical $\gamma$-band encoding and $\beta$-band maintenance, to represent salient auditory transitions within its bio-physical constraints.

\subsection{Cognitive Resource Efficiency}
\begin{figure}[!t]
  \centering
  \includegraphics[width=0.85\linewidth]{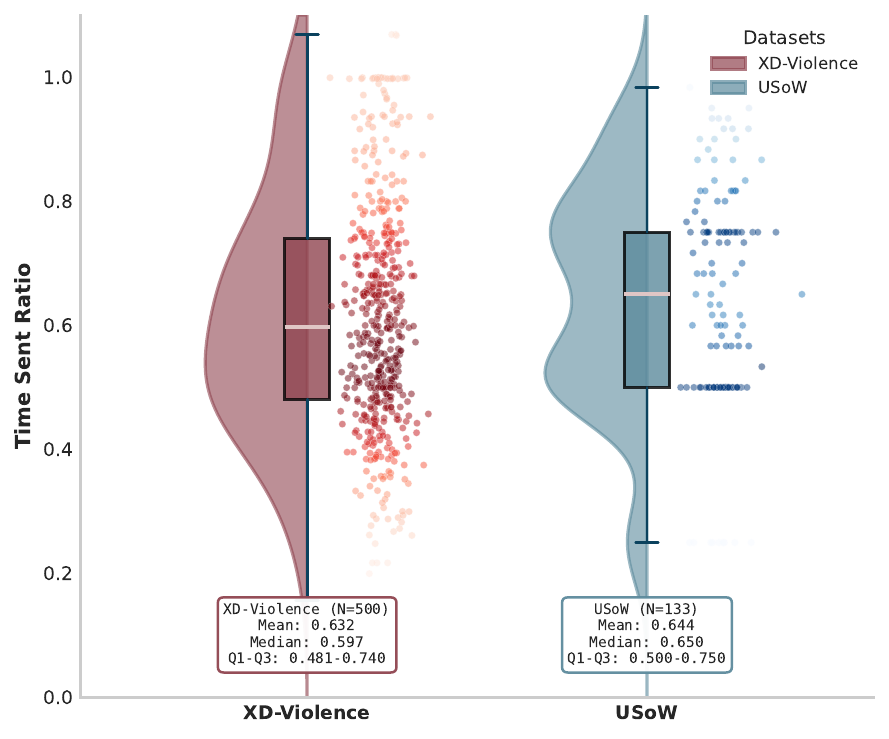}
  \vspace{-5pt}
  \caption{\textbf{Time sent ratios for XD-Violence and USoW datasets.} Violin plots with box plots and scatter points show the fraction of audio forwarded to the ALM after OWM drift detection. Both datasets exhibit similar distributions (medians: 0.597 and 0.650), demonstrating that NAACA consistently processes only ~60\% of audio duration, substantially reducing computational cost while preserving detection accuracy.}
  \label{fig:time_sent_to_qwen}
\end{figure}
Deploying ALMs for long-form audio requires balancing accuracy with computational cost. While exhaustive sliding-window processing addresses attention dilution, it requires 57 ALM invocations per 60\,s clip, which is prohibitive for real-time use. NAACA optimizes this via OWM-based gating, forwarding only salient segments. We measure efficiency using the Time Sent Ratio.

As shown in Fig.~\ref{fig:time_sent_to_qwen}, median ratios for XD-Violence and USoW are 0.597 and 0.650, respectively. This signifies a 40\% reduction in ALM invocations (from ${\sim}$57 to ${\sim}$34). Notably, this cost reduction is paired with a 17.1\% improvement in AP (70.60\% vs. 53.50\% baseline). By focusing on drift-enriched windows, OWM provides the ALM with contexts where transitions are concentrated, shifting the Pareto frontier of accuracy and efficiency.

Scatter points in Fig.~\ref{fig:time_sent_to_qwen} demonstrate that NAACA’s cost scales adaptively: ratios near 1.0 indicate high acoustic complexity, while stable soundscapes yield ratios of 0.3--0.4. This adaptive allocation spends the computational budget proportionally to information density. Despite different acoustic characteristics, OWM maintains stable gating across both datasets, positioning NAACA as a practical solution for resource-constrained monitoring.

For streaming deployment, the gating decision is made online before invoking the ALM, rather than by post-hoc pruning after full-clip inference. The algorithmic latency is mainly set by the 1s encoder stride, with the persistence filter adding only a small delay in the worst case; the subsequent ALM runtime is model- and hardware-dependent. Thus, NAACA should be interpreted as reducing unnecessary ALM invocations and forwarded audio duration, not as changing the intrinsic inference speed of the ALM itself. The high Time Sent Ratio observed for acoustically complex clips is also desirable: the system does not impose a fixed compression budget, but preserves more context when salient transitions occur frequently.

\vspace{-10pt}

\section{Conclusion}
We introduced NAACA, a neuro-inspired framework for ALM understanding enhancement, with OWM as its core working-memory component. Our approach combines a wave-based recurrent field model with an energy-driven drift detection mechanism that adaptively reallocates attention without long-term historical data or offline pretraining. We proved that binary and striped wave-speed distributions optimize drift sensitivity, and demonstrated through urban soundscape experiments that OWM reliably captures salient novel events, resists transient pauses, and subcategory saliency shifts more effectively than similarity-based baselines. OWM's oscillatory dynamics align with cortical working memory, underscoring its biological plausibility and interpretability. These results establish OWM as a computationally efficient, neuro-inspired foundation for extending long-context reasoning in multimodal systems.

\section*{Limitations}
NAACA's performance is inherently bounded by the capability of the chosen backbone encoder and ALM; stronger pretrained models will directly yield better salience detection and semantic interpretation without any modification to OWM. The current encoder (PANN) is trained on AudioSet categories, so out-of-distribution sound events in specialised domains may be missed or misidentified. The hard-gating interface is efficient for long, information-sparse streams, but may discard boundary context that soft attention or KV-cache modulation could preserve; such variants would require white-box ALM access and are left for future work. Finally, our evaluation focuses on anomaly detection via AP and temporal precision, and future work should include SpeechIQ-style tasks such as audio question answering, instruction following, and multi-turn event summarization to assess whether OWM-gated inputs preserve the context needed for deeper reasoning.

\section*{Acknowledgment}
This work was supported by the Special Research Fund (BOF) of Ghent University under Grant BOF/24J/2021/246, and by the Flemish Government through the Flanders AI Research programme (Onderzoeksprogramma AI Vlaanderen programme).

\section*{Impact Statement}


This paper presents work whose goal is to advance the field of Machine
Learning. There are many potential societal consequences of our work, none
which we feel must be specifically highlighted here.


\nocite{langley00}

\bibliography{example_paper}
\bibliographystyle{icml2026}

\newpage
\appendix
\onecolumn
\section*{Appendix}
\renewcommand{\thefigure}{\Alph{section}.\arabic{figure}}
\numberwithin{figure}{section}
\renewcommand{\thetable}{\Alph{section}.\arabic{table}}
\numberwithin{table}{section}
\renewcommand{\theequation}{\Alph{section}.\arabic{equation}}
\numberwithin{equation}{section}
\renewcommand{\thealgorithm}{\Alph{section}.\arabic{algorithm}}
\numberwithin{algorithm}{section}

\section{Related Work}
Auditory selective attention and temporal pattern drift have been studied both in neuroscience, where oscillatory dynamics are linked to memory and attentional control, and in machine learning, where attention bottlenecks in audio language models (ALMs) constrain long-sequence processing. Insights from neuroscience on tracking behaviorally relevant inputs and adapting to changes over time motivate analogous strategies for ALMs operating on long audio streams. 
In our setting, we use \emph{temporal drift} to denote a shift in the audio stream's statistics or in its task-relevant representations (e.g., embedding or class-probability trajectories) across time. 
This is closely related to \emph{concept drift} in machine learning, which refers to changes in the underlying data-generating process that alter the mapping between inputs and the semantics or labels of interest. Thus, detecting temporal drift in representations provides an online, label-free proxy for potential concept drift and can trigger reallocation of limited inference resources.

\subsection{Oscillatory Dynamics in Memory and Attention}
One prevailing hypothesis is that working memory is supported by discrete neural activity configurations, often described as attractor states~\cite{brennan2023attractor}. However, attractor dynamics alone may fail to explain working memory across multiple timescales, motivating models that couple attractor-like dynamics with activity-dependent plasticity~\cite{brennan2023attractor}. Within this dynamical view, neural oscillations provide a complementary mechanism for coordinating information maintenance and prioritization over time. Neural oscillations in specific frequency bands have been shown to play a central role in memory, selective attention, and sensitivity to temporal drift. For example, \cite{lundqvist2018gamma} demonstrated that working memory tasks involve non-stationary dynamics, with gamma bursts during encoding and beta bursts during maintenance. Similarly, selective entrainment of theta oscillations has been shown to enhance auditory working memory performance~\cite{albouy2017selective, bonetti2024spatiotemporal}. In contrast, high alpha-band activity has been associated with increased error rates and reduced auditory attention~\cite{lakatos2016global, kasten2024opposing}.  

\subsection{Attention Limitations and Concept Drift Detection}
Inspired by these oscillatory mechanisms, one can view attention in ALMs as a resource that must be selectively and dynamically allocated in response to temporal pattern drift rather than distributed uniformly across the input. Although recent advances in ALMs have enabled significant progress in audio understanding, long-form reasoning remains limited by restricted attention span, motivating efforts to extend context length~\cite{wu2023hiformer, he2024ma, bai2024audiolog}. Most existing solutions require retraining or fine-tuning, which is computationally costly and inflexible. A complementary perspective is to frame attention allocation as a drift detection problem, where established methods from the machine learning literature may provide efficient and adaptive alternatives.  


In real-world audio monitoring, \emph{salient} events are inherently rare and context dependent, and many of the situations that make them important are only evident at run time. As a result, it is generally impractical to construct exhaustive supervised datasets that cover all relevant foreground events and background conditions, and to train a dedicated detector for each deployment scenario. Instead, the scope can be narrowed by operating in a task-relevant \emph{semantic} space: a high-dimensional auditory representation learned from large corpora that captures human-recognizable sound structure and supports downstream reasoning. Within this representation space, salience can be operationalized as online distributional change, motivating unsupervised drift detection methods that trigger selective processing without requiring explicit event labels.


\paragraph{High-dimensional semantic embedding challenges.}
Accordingly, our setting calls for online drift detection in semantic audio representations, such as class-probability trajectories or embedding sequences produced by pretrained audio recognition models. 
By operating in this high-dimensional representation space learned from large corpora, changes in acoustic context manifest as distributional shifts over time, providing a practical signal for unsupervised drift detection without requiring explicit event labels. Naive statistical approaches, such as tracking Euclidean distances from exponential moving averages, face the curse of dimensionality: as dimensions increase, distance metrics lose discriminative power as all points become approximately equidistant~\cite{aggarwal2001surprising, beyer1999nearest}. A key difficulty in high-dimensional drift detection is that the same change magnitude can correspond to qualitatively different shifts. 
For example, consider two consecutive class-probability vectors with identical $L_2$ distance: in one case, probability mass moves coherently between a small set of semantically related classes (e.g., \textit{gunshot} and \textit{explosion}); in another case, the same $L_2$ distance arises from small, diffuse fluctuations spread across many unrelated classes. 
A single threshold on $\|p_{t+1}-p_t\|_2$ cannot reliably distinguish these cases, motivating drift measures that exploit structure in the semantic space rather than magnitude alone.

\paragraph{Historical data and training requirements.}
Cluster-based approaches~\cite{chan2025online} assume sufficient samples per category to maintain stable cluster statistics, which is unrealistic for heterogeneous urban soundscapes where salient events (\textit{e.g.}, sirens, screams) are rare and unpredictable. MCD-DD~\cite{wan2024online} employs contrastive learning on encoder representations to compute maximum concept discrepancy, but requires maintaining reference distributions from extensive historical windows (typically $>$100 samples) and incurs significant computational overhead from pairwise similarity computation. DriftLens~\cite{greco2025unsupervised} detects drift from deep representations in real time, but assumes an \emph{offline} instantiation phase that estimates a fixed reference (baseline) embedding distribution and detection thresholds from historical/training-era data. A different setting is training-free stream monitoring, where semantic representations are extracted from pretrained audio encoders but access to encoder training data or a dedicated in-domain reference corpus may be unavailable, making offline baseline construction less straightforward.

\paragraph{Deployment constraints.}
Real-time audio monitoring, such as public safety surveillance or environmental monitoring, imposes strict constraints: (1) \textit{No offline training}: Systems must operate on new deployments without domain-specific fine-tuning; (2) \textit{No big historical buffering}: Embedded devices cannot store hours of high-dimensional embeddings; (3) \textit{Cold-start capability}: Detection must work immediately without warm-up periods; (4) \textit{Non-stationary environments}: Acoustic distributions evolve (morning traffic $\to$ evening crowds), rendering historical baselines obsolete.

Table~\ref{tab:comparison} summarizes these requirements. Our OWM-based approach is designed to satisfy them by maintaining a compact online state via oscillatory dynamics, which summarizes recent context without storing long histories of high-dimensional embeddings. When we describe NAACA as training-free, we mean that the internal network parameters of the fixed pretrained audio encoder, OWM, and ALM are not fine-tuned or updated for the target deployment; the online adaptive threshold is still computed from the streaming signal. No long historical data accumulation refers to not requiring an offline reference corpus or baseline calibration phase prior to monitoring.

\begin{table}[H]
\centering
\caption{Requirements comparison of drift detection approaches for audio-semantic streams. \textit{Online}: Processes streaming data without batch accumulation. \textit{No long historical data}: Does not require long-term buffering ($>$50 samples) of past embeddings or statistics. \textit{High-dim}: Effective on $>$500 dimensional semantic embeddings without distance concentration issues. \textit{Topology-preserving}: Maintains structured spatial relationships in semantic representation space, not just raw semantic features. The symbol $\sim$ indicates partial support or conditional availability depending on configuration parameters.}
\label{tab:comparison}
\begin{tabular}{lcccc}
\toprule
\textbf{Method} & \textbf{Online} & \textbf{No long} & \textbf{High-dim} & \textbf{Topology} \\
 & \textbf{detection} & \textbf{historical data} & \textbf{($ > $500D)} & \textbf{-preserving} \\
\midrule
Statistical (EMA/Variance) & \checkmark & \checkmark & $\times$ & $\times$ \\
MCD-DD \cite{wan2024online} & \checkmark & $\times$ & \checkmark & $\times$ \\
DriftLens \cite{greco2025unsupervised} & \texttildelow & $\times$ & \checkmark & $\times$ \\
Cluster-based \cite{chan2025online} & \checkmark & $\times$ & \texttildelow & $\times$ \\
\midrule
\textbf{OWM (ours)} & \textbf{\checkmark} & \textbf{\checkmark} & \textbf{\checkmark} & \textbf{\checkmark} \\
\bottomrule
\end{tabular}
\end{table}

\subsection{Selective Attention in Video-Language Models}

The challenge of redundant long-context processing is a central theme in recent Video Language Models (VLMs). Dispider~\cite{qian2025dispider} introduces a disentangled perception framework where a ``Decision'' module monitors coarse-grained video to proactively trigger a ``Reaction'' module only during relevant interaction moments. Similarly, Holmes-VAU~\cite{zhang2025holmes} utilizes an anomaly-focused Temporal Sampler (ATS) that leverages a lightweight detector to generate anomaly scores, allowing the VLM to selectively aggregate salient segments while filtering uninformative background.

Functionally, both the Decision module in Dispider and the ATS in Holmes-VAU serve as asynchronous gatekeepers similar to our Oscillatory Working Memory (OWM): they decouple continuous signal monitoring from expensive autoregressive inference to resolve the accuracy-cost trade-off. However, a key distinction lies in the detection mechanism. While the samplers in Dispider and Holmes-VAU typically rely on representation-based or supervised detectors that require dataset-specific training or historical data, our OWM approach operates training-free by maintaining a compact online state via oscillatory dynamics, eliminating the need for long-term historical buffering or offline calibration while achieving comparable or superior detection sensitivity.


\section{Carrier Frequency Assignment}
\label{app:freq_assignment}

\subsection{Frequency Allocation Strategy}

We assign each PANN dimension \(i \in \{0, 1, \ldots, 526\}\) a distinct carrier frequency \(f_i\) to enable frequency-multiplexed oscillatory drives within the OWM lattice. The allocation follows a deterministic procedure ensuring compatibility with the discrete wave dynamics.

\subsubsection{Linear Frequency Distribution}

Given \(C = 527\) PANN dimensions and a target frequency band \([f_{\min}, f_{\max}]\), carrier frequencies are uniformly distributed:
\begin{equation}
f_i = f_{\min} + (f_{\max} - f_{\min}) \times \frac{i}{C-1}, \quad i \in \{0, 1, \ldots, 526\}.
\end{equation}
In our implementation, \(f_{\min} = 51\) Hz and \(f_{\max} = 1200\) Hz, yielding an average frequency spacing of \(\Delta f \approx 2.18\) Hz.

\subsubsection{Frequency-to-Grid Mapping}

Each frequency \(f_i\) is mapped to a spatial parcel \(\Omega_i \subset \{0,\ldots,G-1\} \times \{0,\ldots,G-1\}\) on the \(G \times G\) lattice, where \(G=64\).

\paragraph{Grid enumeration.} All grid coordinates are enumerated in row-major order:
\begin{equation}
\mathcal{G} = \{(x,y) : x \in \{0,\ldots,G-1\}, y \in \{0,\ldots,G-1\}\},
\end{equation}
where $|\mathcal{G}| = G^2 = 4096$.

\paragraph{Parcel allocation.} The grid \(\mathcal{G}\) is partitioned into \(C\) contiguous parcels \(\{\Omega_i\}_{i=0}^{C-1}\). To distribute the \(G^2\) grid points as evenly as possible, we first compute:
\begin{equation}
    n_{\text{base}} = \left\lfloor \frac{G^2}{C} \right\rfloor = 7, \quad n_{\text{rem}} = G^2 \bmod C = 407.
\end{equation}
The first \(n_{\text{rem}}\) parcels receive one additional grid point to accommodate the remainder:
\begin{equation}
    |\Omega_i| = 
\begin{cases}
n_{\text{base}} + 1 = 8 & \text{if } i < n_{\text{rem}}, \\
n_{\text{base}} = 7 & \text{if } i \geq n_{\text{rem}}.
\end{cases}
\end{equation}
For each parcel \(i\), the boundaries are determined by cumulative allocation:
\begin{align}
\text{start}_i &= \sum_{j=0}^{i-1} |\Omega_j|, \\
\text{end}_i &= \text{start}_i + |\Omega_i|.
\end{align}
The parcel \(\Omega_i\) then consists of grid positions with linear indices in \([\text{start}_i, \text{end}_i)\):
\begin{equation}
\Omega_i = \left\{(x,y) \in \mathcal{G} : \text{start}_i \leq x \cdot G + y < \text{end}_i\right\}.
\end{equation}

Note that \(\sum_{i=0}^{C-1} |\Omega_i| < G^2\), with the remaining grid points left unassigned.


\subsubsection{wave propagation speed Computation}

For each parcel \(\Omega_i\), the local wave propagation speed \(c_i\) is computed from the target frequency \(f_i\) using the eigenfrequency relation from Theorem~\ref{thm:phase_response} Eq.~\eqref{eq:local_res_freq} in Appendix~\ref{appendix:phase_response_proof}:
\begin{equation}
c_i = \frac{\tan(\pi f_i \Delta t) \sqrt{(1 + \Delta t k_p)(1 + \Delta t k_v)}}{\Delta t \sqrt{2}},
\label{eq:wave_speed_formula}
\end{equation}
where \(\Delta t = 0.01\) is the time step, \(k_p = k_v = 10.0\) are damping coefficients, and the factor \(\sqrt{2}\) accounts for the 2D isotropic spatial discretization with \(dx = 1\).

\paragraph{Stability constraints.} To ensure numerical stability, the computed wave propagation speed is clamped:
\begin{align}
c_{\max} &= \frac{dx}{\Delta t \sqrt{2}} \sqrt{(1 + \Delta t k_p)(1 + \Delta t k_v)} \approx 77.8, \\
c_i &\leftarrow \min\left(c_i, 0.9 \times c_{\max}\right) \approx 70.0, \\
c_i &\leftarrow \max(c_i, 0.1).
\end{align}
The upper bound prevents CFL-like violations, while the lower bound ensures non-zero propagation for high-frequency modes.

\paragraph{Spatial wave propagation speed field.} The wave propagation speed field \(c(x,y)\) on the lattice is constructed by assigning \(c_i\) to all positions in parcel \(\Omega_i\):
\begin{equation}
c(x,y) = \sum_{i=0}^{526} c_i \cdot \mathbb{1}_{\Omega_i}(x,y),
\end{equation}
where \(\mathbb{1}_{\Omega_i}(x,y)\) is the indicator function. This produces a spatially-varying field with approximately binary contrast between low-frequency (small \(c_i\)) and high-frequency (large \(c_i\)) parcels.

\textbf{Nyquist Constraint.} The lower bound $f_{\min} = 51$ Hz is chosen to strictly exceed the Nyquist frequency $f_N = 1/(2\Delta t) = 50$ Hz imposed by the discrete time step $\Delta t = 0.01$ s. Setting $f_{\min} = f_N$ would cause the drive signal $\sin(2\pi f_{\min} t)$ to evaluate to zero at all discrete timesteps $t = n\Delta t$ (since $\sin(\pi n) = 0$), and would create a singularity in the wave propagation speed formula (Eq.~\ref{eq:wave_speed_formula}) where $\tan(\pi f_{\min}\Delta t) = \tan(\pi/2) \to \infty$. The choice of 51 Hz maintains the original frequency spacing of approximately 2.18 Hz while ensuring well-defined dynamics.

\subsection{Oscillatory Drive Signal Construction}

At each time step \(t\), the modulated excitation for PANN dimension \(i\) is:
\begin{equation}
S_i(x,y,t) = a_i(t) \sin(2\pi f_i t) \cdot \mathbb{1}_{\Omega_i}(x,y),
\label{eq:appendix_drive_signal}
\end{equation}
where \(a_i(t) \in [0,1]\) is the PANN probability (sigmoid-activated logit) for event category \(i\) at time \(t\). This is the full 2D coordinate form of Eq.~\eqref{eq:parcel_source_main}, where the main text suppresses the \(y\)-dependence by writing \(x\) for the lattice coordinate. The total drive signal is the superposition:
\begin{equation}
S(x,y,t) = \sum_{i=0}^{526} S_i(x,y,t).
\end{equation}

\paragraph{Batch implementation.} In the batch-aware implementation, time \(t\) is tracked independently for each sample \(b \in \{1,\ldots,B\}\):
\begin{equation}
t^{(b)} = \text{time\_steps}^{(b)} \times \Delta t,
\end{equation}
where \(\text{time\_steps}^{(b)}\) is an integer counter maintained in \texttt{\_batch\_states[time\_steps]}. The drive signal for batch \(b\) is:
\begin{equation}
S^{(b)}(x,y,t^{(b)}) = \sum_{i=0}^{526} a_i^{(b)}(t^{(b)}) \sin(2\pi f_i t^{(b)}) \cdot \mathbb{1}_{\Omega_i}(x,y).
\end{equation}

\subsection{Nyquist Constraint and Phase Encoding}

The discrete-time system with \(\Delta t = 0.01\) s imposes a Nyquist frequency:
\begin{equation}
f_{\text{Nyquist}} = \frac{1}{2\Delta t} = 50 \text{ Hz}.
\end{equation}
This constrains the internal oscillatory dynamics of the pressure field \(p(x,y,t)\) to frequencies below 50 Hz. FFT analysis (Appendix~\ref{appendix:Extra FFT Analysis}) confirms dominant \(p\)-field frequencies lie within theta (4--8 Hz), alpha (8--12 Hz), beta (13--30 Hz), and low-gamma (30--50 Hz) bands.

\paragraph{High-frequency drive encoding.} Drive frequencies \(f_i > 50\) Hz do not produce corresponding oscillations in \(p(x,y,t)\) due to the Nyquist constraint. Instead, these drives encode information through phase relationships. For integer time indices \(n = t/\Delta t\), the drive signal phase:
\begin{equation}
\phi_i(n) = 2\pi f_i n \Delta t \mod 2\pi
\end{equation}
varies uniquely for each \(f_i\), enabling discrimination between 527 semantic categories via phase-based interference patterns and energy accumulation rates, despite internal dynamics remaining below 50 Hz.

\subsection{Implementation Parameters}

The frequency assignment is fully determined by the following fixed parameters:
\begin{itemize}
\item PANN encoder output dimension: \(C = 527\)
\item Grid resolution: \(G = 64\)
\item Time step: \(\Delta t = 0.01\) s
\item Spatial step: \(dx = 1\)
\item Damping coefficients: \(k_p = k_v = 10.0\)
\item Frequency range: \([f_{\min}, f_{\max}] = [51, 1200]\) Hz
\end{itemize}
These parameters remain constant across all experiments and datasets (XD-Violence, USoW), requiring no domain-specific calibration or retraining.

\section{Detailed Algorithms}
\label{sec:detailed_algorithms}

This section provides the complete algorithmic specifications for the OWM-based drift detection framework presented in Subsection~\ref{sec:system-overview}. We present two key algorithms: the main drift detection pipeline and the adaptive threshold computation mechanism.

\subsection{Main Salience Detection Algorithm}
\label{subsec:main_algorithm}

\begin{algorithm}
\caption{OWM-based Auditory Pattern Drift Detection}
\label{alg:OWM_drift_detection}
\begin{algorithmic}[1]
\REQUIRE Audio stream $\{\mathbf{x}_t\}$, encoder $\mathrm{Enc}(\cdot)$, OWM model with spatial operator $\mathcal{A}$
\REQUIRE Persistence duration $P=3$, cooldown period $C=3$
\STATE Initialize energy calculator
\STATE Initialize adaptive threshold $T_{\text{adapt}}$ for energy metric
\STATE Initialize detection buffer $\mathcal{D} = \emptyset$, cooldown timer $t_{last} = -1$
\FOR{each time step $t$}
    \STATE $\mathbf{p}_t \leftarrow \mathrm{Enc}(\mathbf{x}_t)$ \hfill $\triangleright$ \textit{Extract event probabilities}
    \STATE Generate oscillatory inputs: $S_i(x,t) = a_i(t) \sin(\omega_i t) \mathbf{1}_{\Omega_i}(x)$
    \STATE Update OWM state with oscillatory drive signals
    
    \STATE Calculate current energy change rate from OWM dynamics \hfill $\triangleright$ \textit{Energy change rate}
    
    \STATE $T_{\text{adapt}} \leftarrow \mu + 2\sigma(1 + \alpha \cdot \text{trend})$ \hfill $\triangleright$ \textit{Update adaptive threshold}
    
    \STATE $d_{candidate} \leftarrow \mathbf{1}(\text{energy change rate} > T_{\text{adapt}})$ \hfill $\triangleright$ \textit{Energy-based detection}
    
    \STATE Add $d_{candidate}$ to detection buffer $\mathcal{D}$ \hfill $\triangleright$ \textit{Persistence filtering}
    \IF{$|\mathcal{D}| \geq P$}
        \STATE $r_{persist} \leftarrow \frac{1}{P}\sum_{i=0}^{P-1} \mathcal{D}[-P+i]$ \hfill $\triangleright$ \textit{Persistence ratio}
        \IF{$r_{persist} \geq 0.5$ AND $t - t_{last} > C$}
            \STATE \textbf{output} Drift detected at time $t$
            \STATE $t_{last} \leftarrow t$, clear $\mathcal{D}$ \hfill $\triangleright$ \textit{Reset detection state}
        \ENDIF
    \ENDIF
\ENDFOR
\end{algorithmic}
\end{algorithm}

\begin{algorithm}[!t]
\caption{Online Adaptive Threshold Calculation}
\label{alg:adaptive_threshold}
\begin{algorithmic}[1]
\REQUIRE Window size $W=20$, trend adjustment factor $\alpha=0.2$
\STATE Initialize value buffer $\mathcal{V} = \emptyset$ with maximum size $W$
\STATE \textbf{function} AdaptiveThreshold.update($v_{new}$)
\STATE Add $v_{new}$ to buffer $\mathcal{V}$
\IF{$|\mathcal{V}| < 5$} 
    \STATE $\mu \leftarrow \frac{1}{|\mathcal{V}|}\sum_{i} \mathcal{V}_i$ \hfill $\triangleright$ \textit{Insufficient data for robust statistics}
    \IF{$|\mathcal{V}| > 1$}
        \STATE $\sigma \leftarrow \sqrt{\frac{1}{|\mathcal{V}|-1}\sum_{i} (\mathcal{V}_i - \mu)^2}$
    \ELSE
        \STATE $\sigma \leftarrow 0.1$
    \ENDIF
    \STATE \textbf{return} $\mu + 1.5\sigma$ \hfill $\triangleright$ \textit{Simple threshold for bootstrap}
\ENDIF \hfill $\triangleright$ \textit{Compute baseline statistics}
\STATE $\mu \leftarrow \frac{1}{|\mathcal{V}|}\sum_{i} \mathcal{V}_i$ \hfill $\triangleright$ \textit{Running mean}
\STATE $\sigma \leftarrow \sqrt{\frac{1}{|\mathcal{V}|-1}\sum_{i} (\mathcal{V}_i - \mu)^2}$ \hfill $\triangleright$ \textit{Running std}

\IF{$|\mathcal{V}| \geq 3$} 
    \STATE $\mathbf{x} \leftarrow [0, 1, \ldots, |\mathcal{V}|-1]$ \hfill $\triangleright$ \textit{Time indices}
    \STATE $\text{slope} \leftarrow \frac{\sum_i (x_i - \bar{x})(\mathcal{V}_i - \mu)}{\sum_i (x_i - \bar{x})^2}$ \hfill $\triangleright$ \textit{Linear regression slope}
    \STATE $f_{trend} \leftarrow \frac{|\text{slope}|}{\sigma + 10^{-8}}$ \hfill $\triangleright$ \textit{Normalized trend strength}
\ELSE
    \STATE $f_{trend} \leftarrow 0$
\ENDIF

\STATE $T_{adapt} \leftarrow \mu + 2\sigma(1 + \alpha \cdot f_{trend})$ \hfill $\triangleright$ \textit{Adaptive threshold with trend adjustment}
\STATE \textbf{return} $T_{adapt}$
\end{algorithmic}
\end{algorithm}

Algorithm~\ref{alg:OWM_drift_detection} describes the complete workflow for detecting auditory pattern drift using the NAACA. The algorithm processes streaming audio input through several key stages: feature extraction via a pretrained encoder, oscillatory signal generation, OWM state updates, energy-based change detection, and persistence filtering.

The energy rate (first derivative) within the well-designed structure of the OWM system captures sudden transitions in the rate of energy change, which correspond to significant shifts in auditory patterns. The algorithm employs an adaptive threshold mechanism that automatically adjusts to the system's baseline behavior and temporal trends, eliminating the need for manual threshold tuning.

To ensure robust detection and minimize false positives, the algorithm incorporates persistence filtering that requires a minimum proportion of recent detections before confirming a drift event. Additionally, a cooldown period prevents redundant detections of the same drift event.

\subsection{Adaptive Threshold Computation}
\label{subsec:adaptive_threshold}

Algorithm~\ref{alg:adaptive_threshold} details the online computation of adaptive thresholds for energy-based drift detection. The threshold adapts to both the statistical properties of recent observations and temporal trends in the data.

The algorithm maintains a sliding window of recent metric values and computes the threshold as $T = \mu + 2\sigma(1 + \alpha \cdot f_{\text{trend}})$, where $\mu$ and $\sigma$ are the running mean and standard deviation, $\alpha$ is the trend adjustment factor, and $f_{\text{trend}}$ quantifies the strength of temporal trends using linear regression.

During the initial bootstrap phase with insufficient data (fewer than 5 samples), the algorithm uses a simplified threshold computation to avoid instability. The trend factor captures whether the metric values are systematically increasing or decreasing, allowing the threshold to adapt accordingly. This prevents false negatives during periods of natural system evolution while maintaining sensitivity to abrupt changes.

The adaptive nature of this threshold computation is crucial for handling diverse auditory environments with varying baseline activity levels, ensuring consistent detection performance across different acoustic contexts without requiring environment-specific calibration.

\section{Proof for Theorem~\ref{thm:phase_response}}
\label{appendix:phase_response_proof}

\subsection{Eigenvalue Analysis and Local Resonance Frequency}

The OWM system is governed by the discrete formulation in Eqs.~\ref{eq:disc_wave_eq}. For a local grid point $(x,y)$, define the state vector $\mathbf{h}=[p,v_x,v_y]^T$. The update can be written as
\begin{equation}
\mathbf{h}(t+\Delta t)=\mathbf{A}\mathbf{h}(t)+S(x,y,t),
\label{eq:discrete_state}
\end{equation}
where
\begin{align}
\mathbf{A}&=\mathbf{M}_{\text{Damp}}^{-1}\mathbf{M}_{\text{Velocity}}, \label{eq:system_matrix}\\
\mathbf{M}_{\text{Velocity}}&=\begin{bmatrix}
1 & -c^2\Delta t\frac{\partial}{\partial x} & -c^2\Delta t\frac{\partial}{\partial y}\\
-\Delta t\frac{\partial}{\partial x} & 1 & 0\\
-\Delta t\frac{\partial}{\partial y} & 0 & 1
\end{bmatrix}, \label{eq:M_velocity}\\
\mathbf{M}_{\text{Damp}}&=\begin{bmatrix}
1+\Delta t\,k^p(x,y) & 0 & 0\\
0 & 1+\Delta t\,k^v(x,y) & 0\\
0 & 0 & 1+\Delta t\,k^v(x,y)
\end{bmatrix}. \label{eq:M_damp}
\end{align}

In Fourier space, $\partial/\partial x\rightarrow i\xi_x$ and $\partial/\partial y\rightarrow i\xi_y$. The corresponding system matrix becomes
\begin{equation}
\mathbf{A}=\mathbf{M}_{\text{Damp}}^{-1}\mathbf{M}_{\text{Velocity}}
=\begin{bmatrix}
\frac{1}{1+\Delta t k^p_{i,j}} & -\frac{c^2_{i,j}\Delta t i\xi_x}{1+\Delta t k^p_{i,j}} & -\frac{c^2_{i,j}\Delta t i\xi_y}{1+\Delta t k^p_{i,j}}\\
-\frac{\Delta t i\xi_x}{1+\Delta t k^v_{i,j}} & \frac{1}{1+\Delta t k^v_{i,j}} & 0\\
-\frac{\Delta t i\xi_y}{1+\Delta t k^v_{i,j}} & 0 & \frac{1}{1+\Delta t k^v_{i,j}}
\end{bmatrix}.
\label{eq:A_fourier}
\end{equation}

The characteristic equation $\det(\mathbf{A}-\lambda\mathbf{I})=0$ expands to
\begin{equation}
\left(\frac{1}{1+\Delta t k^v_{i,j}}-\lambda\right)^2
\left[
\left(\frac{1}{1+\Delta t k^p_{i,j}}-\lambda\right)\left(\frac{1}{1+\Delta t k^v_{i,j}}-\lambda\right)
+
\frac{c^2_{i,j}\Delta t^2(\xi_x^2+\xi_y^2)}{(1+\Delta t k^p_{i,j})(1+\Delta t k^v_{i,j})}
\right]=0.
\label{eq:characteristic}
\end{equation}
This yields one real eigenvalue $\lambda_1=1/(1+\Delta t k^v_{i,j})$ and two eigenvalues satisfying
\begin{equation}
\left(\frac{1}{1+\Delta t k^p_{i,j}}-\lambda\right)\left(\frac{1}{1+\Delta t k^v_{i,j}}-\lambda\right)
+
\frac{c^2_{i,j}\Delta t^2(\xi_x^2+\xi_y^2)}{(1+\Delta t k^p_{i,j})(1+\Delta t k^v_{i,j})}=0.
\label{eq:quadratic}
\end{equation}
For the parameter regime used ($\Delta t k_p = \Delta t k_v = 0.1 \ll 1$ and typical spatial wavelengths), the discriminant of Eq.~\ref{eq:quadratic} satisfies $\Delta < 0$, ensuring complex conjugate eigenvalues. Under the balanced damping condition $k_p = k_v$, the real parts cancel exactly, yielding: 
\begin{equation}
\lambda_{2,3}=
\frac{1}{2}\left[\frac{1}{1+\Delta t k^p_{i,j}}+\frac{1}{1+\Delta t k^v_{i,j}}\right]
\pm
i\frac{c_{i,j}\Delta t\sqrt{\xi_x^2+\xi_y^2}}{\sqrt{(1+\Delta t k^p_{i,j})(1+\Delta t k^v_{i,j})}}.
\label{eq:complex_eigenvalues}
\end{equation}
Let $\theta$ be the phase angle of $\lambda_{2,3}$:
\begin{equation}
\theta=\tan^{-1}\!\left(\frac{\mathrm{Im}(\lambda)}{\mathrm{Re}(\lambda)}\right)
=
\tan^{-1}\!\left(
\frac{2c_{i,j}\Delta t\sqrt{\xi_x^2+\xi_y^2}}
{\sqrt{(1+\Delta t k^p_{i,j})(1+\Delta t k^v_{i,j})}
\left(\frac{1}{1+\Delta t k^p_{i,j}}+\frac{1}{1+\Delta t k^v_{i,j}}\right)}
\right).
\label{eq:phase_angle}
\end{equation}
In discrete time, the oscillation frequency associated with phase $\theta$ is $f=\theta/(\pi\Delta t)$, hence the local characteristic frequency is
\begin{equation}
f_{i,j}=
\frac{1}{\pi\Delta t}
\tan^{-1}\!\left(
\frac{2c_{i,j}\Delta t\sqrt{\xi_x^2+\xi_y^2}}
{\sqrt{(1+\Delta t k^p_{i,j})(1+\Delta t k^v_{i,j})}
\left(\frac{1}{1+\Delta t k^p_{i,j}}+\frac{1}{1+\Delta t k^v_{i,j}}\right)}
\right).
\label{eq:eigenfrequency}
\end{equation}
We define the local resonance frequency of the discrete system by
\(
\omega_{\mathrm{res}}(x,y)=2\pi f_{i,j}.
\label{eq:local_res_freq}
\)

\subsection{Frequency-Domain Response and Phase Delay}

Consider harmonic excitation at angular frequency $\omega$. In the frequency domain, the continuous system in Eqs.~\eqref{eq:wave_eq} satisfies
\begin{align}
[i\omega+k^p(x,y)]p(x,y,\omega)&=-c^2(x,y)\nabla\cdot\mathbf{v}(x,y,\omega)+S_0(x,y), \label{eq:freq_p_continuous}\\
[i\omega+k^v(x,y)]\mathbf{v}(x,y,\omega)&=-\nabla p(x,y,\omega). \label{eq:freq_v_continuous}
\end{align}
Eliminating $\mathbf{v}$ gives
\(
[i\omega+k^p][i\omega+k^v]p-c^2\nabla^2 p=[i\omega+k^v]S_0.
\)

The resonance frequency $\omega_{\mathrm{res}}(x,y)$ is the natural angular frequency of the discrete system from the previous subsection. Using $\omega_{\mathrm{res}}=\theta/\Delta t$ and the small $\Delta t$ approximation yields
\begin{equation}
\omega_{\mathrm{res}}(x,y)\approx c(x,y)\sqrt{\xi_x^2+\xi_y^2}.
\label{eq:w_res_def}
\end{equation}

For slowly varying parameters, we use the local approximation $-c^2\nabla^2 \rightarrow \omega_{\mathrm{res}}^2(x,y)$, leading to the localized frequency-domain relation
\begin{equation}
\Big[(\omega_{\mathrm{res}}^2-k^pk^v-\omega^2)+i\omega(k^p+k^v)\Big]\,p \approx (i\omega+k^v)S_0.
\label{eq:local_freq_relation}
\end{equation}
Define the effective local resonance by
\(
\Omega_{\mathrm{res}}^2(x,y)\triangleq \omega_{\mathrm{res}}^2(x,y)-k^p(x,y)k^v(x,y).
\)
Then the complex denominator governing the response is
\begin{equation}
\mathcal{D}(\omega)=(\Omega_{\mathrm{res}}^2-\omega^2)+i\omega(k^p+k^v).
\label{eq:complex_denominator}
\end{equation}
Hence the amplitude response satisfies
\(
A(x,y,\omega)=|i\omega+k^v|/|\mathcal{D}(\omega)|.
\)

The phase of $p$ is given by the argument of $\mathcal{D}(\omega)$. Writing
\(
\phi(\omega;x,y)=-\arg\mathcal{D}(\omega)
\),
one obtains
\begin{equation}
\phi(\omega;x,y)
=
-\arctan\!\left(
\frac{\omega(k^p+k^v)}{\Omega_{\mathrm{res}}^2-\omega^2}
\right).
\label{eq:phase_from_denominator}
\end{equation}
Near resonance, linearize $\Omega_{\mathrm{res}}^2-\omega^2 \approx -2\Omega_{\mathrm{res}}(\omega-\Omega_{\mathrm{res}})$ to obtain
\begin{equation}
\phi(\omega;x,y)
\approx
\arctan\!\left(
\frac{\omega(k^p+k^v)}{2\Omega_{\mathrm{res}}(\omega-\Omega_{\mathrm{res}})}
\right).
\label{eq:phase_near_res}
\end{equation}
Up to a slowly varying prefactor (absorbed by the proportionality in Eq.~\eqref{eq:phase_delay}), this yields the temporal phase delay. Using the identity $\arctan(-x) = -\arctan(x)$ and absorbing the sign into the definition of positive phase delay (where lag is defined as positive), we obtain:
\begin{equation}
\tau(\omega;x,y)=\frac{\phi(\omega;x,y)}{\omega}
\propto
\arctan\!\left(
\frac{\omega-\omega_{\mathrm{res}}(x,y)}{k^p(x,y)+k^v(x,y)}
\right),
\label{eq:time_delay_derivation}
\end{equation}
which matches the form in Eq.~\eqref{eq:phase_delay}.

Finally, by linearity and superposition, for multi-frequency excitation $S(x,y,t)=\sum_i S_i(x,y,t)$ with narrowband components at $\omega_i$ and under the quasi-steady assumption (envelope variation slower than $(k^p+k^v)^{-1}$), each component is locally delayed, giving
\begin{equation}
p(x,y,t)\approx \sum_i A_i(x,y)\,S_i\big(x,y,t-\tau_i(x,y)\big),
\label{eq:phase_response_appendix}
\end{equation}
which establishes Eq.~\eqref{eq:phase_response}.

\subsection{Application to OWM and Generality}

In OWM (Section~\ref{subsection:experimental_setting} and Appendix~\ref{app:freq_assignment}), $c(x,y)$ is designed to vary spatially, which makes $\omega_{\mathrm{res}}(x,y)=2\pi f_{i,j}$ spatially varying through Eq.~\eqref{eq:eigenfrequency}. For a component at frequency $\omega_i$, Eq.~\eqref{eq:time_delay_derivation} gives a location-dependent delay $\tau_i(x,y)$, and superposition yields the delayed-sum form in Eq.~\eqref{eq:phase_response}.

The proof relies on (i) linear operation so that superposition holds, (ii) the quasi-steady condition on the excitation envelope relative to $(k^p+k^v)^{-1}$, and (iii) slowly varying spatial parameters so that the local approximation in Eq.~\eqref{eq:local_freq_relation} is valid. Under these conditions, different input frequencies induce different local phase delays determined by the mismatch between $\omega_i$ and $\omega_{\mathrm{res}}(x,y)$.

\section{Proof for Theorem~\ref{thm:second_order_equivalence}}
\label{appendix:Second-Order Equivalence}

\begin{proof}
We prove the equivalence by transforming the system step by step. 
We begin by differentiating the pressure equation from Eqs.~\ref{eq:wave_eq} 
with respect to time:

\begin{equation}
\frac{\partial^2 p}{\partial t^2} + k^{p} \frac{\partial p}{\partial t} 
= -\frac{\partial}{\partial t}\left[\nabla \cdot (c^2(x,y) \mathbf{v})\right] 
+ \frac{\partial S}{\partial t}.
\label{eq:E1}
\end{equation}

Since $c^2(x,y)$ is time-independent, we apply the product rule for 
the divergence of a time-dependent vector field with spatially-varying 
coefficients:

\begin{equation}
\frac{\partial}{\partial t}[\nabla \cdot (c^2 \mathbf{v})] 
= \nabla \cdot \left(c^2 \frac{\partial \mathbf{v}}{\partial t}\right) 
+ \frac{\partial \mathbf{v}}{\partial t} \cdot \nabla c^2.
\label{eq:E2}
\end{equation}

To evaluate the velocity derivatives, we use the velocity equations 
from Eqs.~\ref{eq:wave_eq}:

\begin{equation}
\frac{\partial \mathbf{v}}{\partial t} = -k^{v} \mathbf{v} - \nabla p.
\label{eq:E3}
\end{equation}

Substituting Eq.~\ref{eq:E3} into the divergence term in Eq.~\ref{eq:E2}:

\begin{align}
\nabla \cdot \left(c^2 \frac{\partial \mathbf{v}}{\partial t}\right)
&= \nabla \cdot \left[c^2(-k^{v} \mathbf{v} - \nabla p)\right] \nonumber\\
&= -k^{v} \nabla \cdot (c^2 \mathbf{v}) - \nabla \cdot (c^2 \nabla p).
\label{eq:E4}
\end{align}

For the gradient coupling term in Eq.~\ref{eq:E2}:

\begin{equation}
\frac{\partial \mathbf{v}}{\partial t} \cdot \nabla c^2 
= (-k^{v} \mathbf{v} - \nabla p) \cdot \nabla c^2 
= -k^{v} \mathbf{v} \cdot \nabla c^2 - \nabla p \cdot \nabla c^2.
\label{eq:E5}
\end{equation}

Combining Eqs.~\ref{eq:E4} and \ref{eq:E5} into Eq.~\ref{eq:E2}:

\begin{equation}
\frac{\partial}{\partial t}[\nabla \cdot (c^2 \mathbf{v})] 
= -k^{v} \nabla \cdot (c^2 \mathbf{v}) - \nabla \cdot (c^2 \nabla p) 
- k^{v} \mathbf{v} \cdot \nabla c^2 - \nabla p \cdot \nabla c^2.
\label{eq:E6}
\end{equation}

We now eliminate the velocity divergence using the pressure equation 
from Eqs.~\ref{eq:wave_eq}:

\begin{equation}
\nabla \cdot (c^2 \mathbf{v}) = -\frac{\partial p}{\partial t} - k^{p} p + S.
\label{eq:E7}
\end{equation}

Inserting Eq.~\ref{eq:E7} into Eq.~\ref{eq:E6}:

\begin{align}
\frac{\partial}{\partial t}[\nabla \cdot (c^2 \mathbf{v})] 
&= -k^{v}\left(-\frac{\partial p}{\partial t} - k^{p} p + S\right) 
- \nabla \cdot (c^2 \nabla p) \nonumber\\
&\quad - k^{v} \mathbf{v} \cdot \nabla c^2 - \nabla p \cdot \nabla c^2 \nonumber\\
&= k^{v} \frac{\partial p}{\partial t} + k^{v} k^{p} p - k^{v} S 
- \nabla \cdot (c^2 \nabla p) \nonumber\\
&\quad - k^{v} \mathbf{v} \cdot \nabla c^2 - \nabla p \cdot \nabla c^2.
\label{eq:E8}
\end{align}

Substituting Eq.~\ref{eq:E8} back into Eq.~\ref{eq:E1}:

\begin{align}
\frac{\partial^2 p}{\partial t^2} + k^{p} \frac{\partial p}{\partial t}
&= -\left(k^{v} \frac{\partial p}{\partial t} + k^{v} k^{p} p - k^{v} S 
- \nabla \cdot (c^2 \nabla p) \right. \nonumber\\
&\quad \left. - k^{v} \mathbf{v} \cdot \nabla c^2 - \nabla p \cdot \nabla c^2\right) 
+ \frac{\partial S}{\partial t}.
\label{eq:E9}
\end{align}

Finally, rearranging terms leads to:

\begin{equation}
\frac{\partial^2 p}{\partial t^2} + (k^{p} + k^{v}) \frac{\partial p}{\partial t} 
+ k^{v} k^{p} p = \nabla \cdot (c^2 \nabla p) 
+ (k^{v} \mathbf{v} + \nabla p) \cdot \nabla c^2 
+ \left(k^{v} + \frac{\partial}{\partial t}\right) S,
\label{eq:E10}
\end{equation}

which matches Eq.~\ref{eq:second_order} with effective damping coefficient 
$\gamma = k^{p} + k^{v}$, restoring force coefficient $\mu = k^{v} k^{p}$, 
and spatial coupling term $(k^{v} \mathbf{v} + \nabla p) \cdot \nabla c^2$ 
arising from the spatially-varying wave speed.
\end{proof}

\section{Proof for Theorem~\ref{the:Energy_Evolution_Equation}}
\label{appendix:Energy Evolution Equation}

\begin{proof}
We begin by computing the time derivative of energy:
\begin{equation}
\frac{dE}{dt} = \iint \Big[p \frac{\partial p}{\partial t} + v_x \frac{\partial v_x}{\partial t} + v_y \frac{\partial v_y}{\partial t}\Big] dx\,dy.
\label{eq:energy_derivative}
\end{equation}

Substituting the system equations from Eqs.~\ref{eq:wave_eq} into Eq.~\ref{eq:energy_derivative}, we obtain:
\begin{equation}
\frac{dE}{dt} = \iint \Big[p\big(-k^{p}p - c^2(\tfrac{\partial v_x}{\partial x} + \tfrac{\partial v_y}{\partial y}) + S\big) 
+ v_x\big(-k^{v} v_x - \tfrac{\partial p}{\partial x}\big) 
+ v_y\big(-k^{v} v_y - \tfrac{\partial p}{\partial y}\big)\Big] dx\,dy.
\label{eq:substituted_equations}
\end{equation}

Expanding the terms in Eq.~\ref{eq:substituted_equations} yields:
\begin{equation}
\frac{dE}{dt} = \iint \Big[-k^{p} p^2 - c^2 p(\tfrac{\partial v_x}{\partial x} + \tfrac{\partial v_y}{\partial y}) + pS 
- k^{v} v_x^2 - v_x \tfrac{\partial p}{\partial x} - k^{v} v_y^2 - v_y \tfrac{\partial p}{\partial y}\Big] dx\,dy.
\label{eq:expanded_terms}
\end{equation}

We can group the terms in Eq.~\ref{eq:expanded_terms} to separate dissipation, coupling, and source contributions:
\begin{equation}
\small
\frac{dE}{dt} = \iint \Big[-k^{p} p^2 - k^{v}(v_x^2 + v_y^2)\Big] dx\,dy
+ \iint \Big[-c^2 p(\tfrac{\partial v_x}{\partial x} + \tfrac{\partial v_y}{\partial y}) - v_x \tfrac{\partial p}{\partial x} - v_y \tfrac{\partial p}{\partial y}\Big] dx\,dy
+ \iint pS\, dx\,dy.
\label{eq:grouped_terms}
\end{equation}

To handle the coupling terms in Eq.~\ref{eq:grouped_terms}, we apply integration by parts. For example:
\begin{equation}
\iint v_x \frac{\partial p}{\partial x} dx\,dy 
= v_x p \Big|_{x\ \text{boundaries}} - \iint p \frac{\partial v_x}{\partial x} dx\,dy.
\label{eq:integration_by_parts_x}
\end{equation}
Since periodic boundary conditions imply the boundary term vanishes, we have:
\begin{equation}
\iint v_x \frac{\partial p}{\partial x} dx\,dy = -\iint p \frac{\partial v_x}{\partial x} dx\,dy.
\label{eq:x_coupling_simplified}
\end{equation}
Similarly, for the $y$-component:
\begin{equation}
\iint v_y \frac{\partial p}{\partial y} dx\,dy = -\iint p \frac{\partial v_y}{\partial y} dx\,dy.
\label{eq:y_coupling_simplified}
\end{equation}

Using Eqs.~\ref{eq:x_coupling_simplified} and \ref{eq:y_coupling_simplified}, the coupling terms in Eq.~\ref{eq:grouped_terms} reduce to:
\begin{equation}
\begin{aligned}
&-c^2 p \left(\frac{\partial v_x}{\partial x} + \frac{\partial v_y}{\partial y}\right) 
   - v_x \frac{\partial p}{\partial x} 
   - v_y \frac{\partial p}{\partial y} \\[6pt]
&= -c^2 p \left(\frac{\partial v_x}{\partial x} + \frac{\partial v_y}{\partial y}\right) 
   + p \left(\frac{\partial v_x}{\partial x} + \frac{\partial v_y}{\partial y}\right) \\[6pt]
&= -(c^2 - 1)\, p \left(\frac{\partial v_x}{\partial x} + \frac{\partial v_y}{\partial y}\right).
\end{aligned}
\label{eq:coupling_combined}
\end{equation}

Combining all results, we obtain the final expression:
\begin{equation}
\small
\frac{dE}{dt} = -\iint \left[k^{p} p^2 + k^{v}(v_x^2 + v_y^2)\right] dx\,dy 
- \iint (c^2 - 1)p\left(\frac{\partial v_x}{\partial x} + \frac{\partial v_y}{\partial y}\right) dx\,dy 
+ \iint p S\, dx\,dy,
\label{eq:final_energy_evolution}
\end{equation}
which establishes the claim.
\end{proof}

\section{Proofs for wave propagation speed Optimization}
\label{appendix:wave-speed-opt}

\subsection{Interface Reflection and Contrast Effects}
\label{appendix:contrast}

\begin{lemma}[Interface Reflection Coefficient]
\label{lem:interface-reflection}
Consider a plane wave incident on an interface between two constant-speed media with speeds $c_1$ and $c_2$. With uniform density $\rho=1$ so that impedance $Z=\rho c=c$, the pressure reflection coefficient is
\begin{equation}
R=\frac{c_2-c_1}{c_1+c_2}.
\label{eq:app_interface_R}
\end{equation}
The energy reflection ratio is $E_{\mathrm{reflected}}/E_{\mathrm{incident}}=|R|^2$. Under a fixed speed-sum constraint $c_1+c_2=C$, $|R|^2$ is maximized by extreme (binary) contrasts $(c_1,c_2)\in\{(0,C),(C,0)\}$, while $c_1=c_2=C/2$ yields $R=0$.
\end{lemma}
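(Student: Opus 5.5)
We will derive the reflection coefficient directly from the one-dimensional restriction of the first-order system in Eq.~\eqref{eq:wave_eq}, with damping and source switched off ($k^p=k^v=0$, $S=0$). We take the interface at $x=0$ and the wave normally incident. A normally incident plane wave is enough: oblique incidence changes only the cosine factors, and the striped geometry of Theorem~\ref{thm:striped-bragg} couples modes along the $y$-direction normal to the stripes.

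\textbf{Setting up the ansatz.} On each side we write the fields as
\begin{align*}
x<0:&\quad p = e^{i(\kappa_1 x-\omega t)} + R\,e^{i(-\kappa_1 x-\omega t)},\\
x>0:&\quad p = T\,e^{i(\kappa_2 x-\omega t)},
\end{align*}
with wavenumbers $\kappa_j=\omega/c_j$. The velocity equation $\partial_t v=-\partial_x p$ then gives $v=p/c_j$ for a right-going wave and $v=-p/c_j$ for a left-going wave. This fixes the impedance $Z_j=c_j$, matching the convention $\rho=1$ stated in the lemma.

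\textbf{Interface conditions.} We impose continuity of $p$ and of $v$ at $x=0$. These are the natural conditions for the flux form $\partial_t p=-c^2\partial_x v$, understood with the $\rho=1$ impedance convention. They give $1+R=T$ and $(1-R)/c_1=T/c_2$. Eliminating $T$ and solving for $R$ yields Eq.~\eqref{eq:app_interface_R}.

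\textbf{Energy ratio.} The energy flux of a travelling wave is proportional to $|p|^2/Z$. The reflected and incident waves travel in the same medium, so their impedances cancel in the ratio and $E_{\mathrm{reflected}}/E_{\mathrm{incident}}=|R|^2$. The transmitted energy accounts for the remainder, $1-|R|^2$.

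\textbf{Optimization under the sum constraint.} We fix $c_1+c_2=C$ and substitute $c_2=C-c_1$, which gives
\[
R(c_1)=\frac{C-2c_1}{C},\qquad |R|^2=\Bigl(1-\frac{2c_1}{C}\Bigr)^2 .
\]
This is a convex parabola in $c_1$ on $[0,C]$. Its maximum value is $1$, attained exactly at the endpoints $c_1\in\{0,C\}$, i.e.\ the binary contrasts $(0,C)$ and $(C,0)$. Its unique zero is at $c_1=C/2$, which is the claimed vanishing reflection for equal speeds.

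\textbf{Main obstacle.} The delicate step is justifying the interface conditions for the paper's specific system, where $c^2$ multiplies the divergence rather than sitting inside an impedance. Taken literally, Eq.~\eqref{eq:wave_eq} forces $c^2\partial_x v$ to be bounded at the interface, which means $v$ is continuous. The velocity equation, in turn, makes $p$ continuous. Under these conditions the natural impedance is $1/c$ rather than $c$. That would flip the sign of $R$ but leave $|R|^2$ unchanged, so all the stated conclusions about $|R|^2$ survive.

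A second caveat concerns the endpoint $c_j=0$, where the speed degenerates. There we interpret the maximization as a supremum approached by $c_j\to0^+$, consistent with the lower clamp $c\ge 0.1$ used in Appendix~\ref{app:freq_assignment}.
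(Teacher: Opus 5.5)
Your proof is correct and follows the paper's route: take impedance $Z=c$, impose continuity of pressure and normal velocity, solve for $R$, then maximize $|c_2-c_1|$ under $c_1+c_2=C$; your explicit parabola argument and the flux-based justification of $|R|^2$ just spell out steps the paper asserts. One correction to your ``main obstacle'' remark: taken literally, Eq.~\eqref{eq:wave_eq} makes $p$ and $v$ continuous with $v=p/c$, which gives impedance $c$ and exactly the stated sign of $R$, as your own derivation shows. It is instead the conservative form $\partial_t p=-\nabla\cdot(c^2\mathbf{v})$, used in the paper's proof of Theorem~\ref{thm:second_order_equivalence}, that makes $c^2 v$ continuous and yields effective impedance $1/c$ with the sign of $R$ flipped. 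So your caveat is misattributed, though it is harmless for $|R|^2$.
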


\begin{proof}
With $\rho=1$, the medium impedances are $Z_i=c_i$, $i=1,2$. Continuity of pressure and normal velocity (using $v=p/Z$) at the interface yields
\begin{equation}
p_1+p_1^{r}=p_2^{t},\qquad \frac{p_1}{Z_1}-\frac{p_1^{r}}{Z_1}=\frac{p_2^{t}}{Z_2},
\label{eq:app_boundary_conditions}
\end{equation}
where subscripts denote incident (no subscript), reflected ($r$), and transmitted ($t$) components.  
Solving these equations gives
\begin{equation}
p_2^{t}=\frac{2Z_2}{Z_1+Z_2}\,p_1,\qquad
\frac{p_1^{r}}{p_1}=\frac{Z_2-Z_1}{Z_1+Z_2}=\frac{c_2-c_1}{c_1+c_2}=R.
\label{eq:app_pt_and_R}
\end{equation}

The reflected-to-incident energy ratio is therefore
\begin{equation}
\frac{E_{\mathrm{reflected}}}{E_{\mathrm{incident}}}=|R|^2=\left(\frac{c_2-c_1}{c_1+c_2}\right)^2.
\label{eq:app_energy_ratio}
\end{equation}
If $c_1+c_2 = C$, maximizing $|R|^2 = \left(\frac{c_2 - c_1}{C}\right)^2$ is equivalent to maximizing $|c_2 - c_1|$. The maximum occurs at the binary extremes $(c_1,c_2) = (0,C)$ or $(C,0)$, for which $|R|\to 1$. Any interior choice yields a smaller $R$. For example, $c_1 = c_2 = C/2$ gives $R = 0$, while $c_1 = 0.9C,\ c_2 = 0.1C$ gives $|R| = 0.8 < 1$.
\end{proof}

\subsection{Modal Coupling Framework}
\label{appendix:modal-coupling}

\begin{lemma}[Modal Coupling Derivation and Selection Rule]
\label{lem:modal-coupling}
In the periodic domain $[0,L_x]\times[0,L_y]$ with orthonormal Fourier basis
\begin{equation}
\phi_{m,n}(x,y)=\frac{1}{\sqrt{A}}\exp\!\left(i\frac{2\pi m x}{L_x}+i\frac{2\pi n y}{L_y}\right),\quad 
A=L_xL_y,
\label{eq:app_basis}
\end{equation}
and wavenumbers
\begin{equation}
k_{m,n}^2=\left(\frac{2\pi m}{L_x}\right)^2+\left(\frac{2\pi n}{L_y}\right)^2,
\label{eq:app_k}
\end{equation}
write $p(x,y,t)=\sum_{m,n} a_{m,n}(t)\phi_{m,n}(x,y)$ and $c^2(x,y)=c_0^2+\delta c^2(x,y)$. Here, $a_{m,n}(t)$ are the time-dependent modal coefficients. Neglecting damping, the modal system
\begin{equation}
\ddot{a}_{m,n}+\omega_{m,n}^2 a_{m,n}=\sum_{m',n'} C_{(m,n),(m',n')} a_{m',n'}+S_{m,n},
\label{eq:app_modal}
\end{equation}
has $\omega_{m,n}^2=c_0^2 k_{m,n}^2$ and coupling
\begin{equation}
C_{(m,n),(m',n')}=k_{m',n'}^2 \,\langle \delta c^2,\phi_{m',n'},\phi_{m,n}\rangle.
\label{eq:app_Cdef}
\end{equation}
For striped $\delta c^2(x,y)=f(y)$,
\begin{equation}
\langle \delta c^2,\phi_{m',n'},\phi_{m,n}\rangle=\delta_{m,m'}\,\hat V_{n-n'},
\label{eq:app_selection}
\end{equation}
i.e., coupling is block-diagonal in $m$ and depends on the $y$-Fourier coefficients $\hat V_q$ of $f$, where
\begin{equation}
\hat V_q=\frac{1}{L_y}\int_0^{L_y} f(y)\,e^{i\frac{2\pi q y}{L_y}}\,dy.
\label{eq:app_Vhat}
\end{equation}
\end{lemma}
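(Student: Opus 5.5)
The plan is to reduce the damped first-order system to a second-order equation for $p$ alone, expand in the Fourier basis, and read off the coupling by Galerkin projection. I will neglect damping, so set $k^p=k^v=0$ in Eqs.~\eqref{eq:wave_eq}. Eliminating $\mathbf{v}$ then gives $\partial_t^2 p = c^2\nabla^2 p + \partial_t S$. This follows from $\partial_t p = -c^2\nabla\!\cdot\mathbf{v}+S$ and $\partial_t \mathbf{v}=-\nabla p$, taking the time derivative of the first and using that $c$ is time-independent. It is the undamped specialization of the structure in Theorem~\ref{thm:second_order_equivalence}, in the form where $c^2$ multiplies the divergence rather than sitting inside it. I will state explicitly which form is used, since the coupling formula $k_{m',n'}^2\langle\cdot\rangle$ in Eq.~\eqref{eq:app_Cdef} corresponds to $c^2\nabla^2 p$. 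I absorb $\partial_t S$ into the modal source $S_{m,n}$.

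Next I substitute $p=\sum a_{m',n'}\phi_{m',n'}$ and split $c^2=c_0^2+\delta c^2$. Since $\nabla^2\phi_{m',n'}=-k_{m',n'}^2\phi_{m',n'}$, the equation becomes
\[
\sum \ddot a_{m',n'}\phi_{m',n'} = -\sum (c_0^2+\delta c^2)\,k_{m',n'}^2 a_{m',n'}\phi_{m',n'} + \partial_t S .
\]
I then take the $L^2$ inner product with $\phi_{m,n}$ and use orthonormality on the periodic domain. The constant part yields $\omega_{m,n}^2=c_0^2k_{m,n}^2$. The perturbation yields a term $\sum_{m',n'} k_{m',n'}^2\langle \delta c^2,\phi_{m',n'},\phi_{m,n}\rangle a_{m',n'}$, where the triple product is defined as $\iint \delta c^2\,\phi_{m',n'}\,\overline{\phi_{m,n}}\,dx\,dy$. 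Its sign is $-$; I will absorb that into the convention for $C$ (or equivalently into $\delta c^2$) and note this explicitly. This establishes Eqs.~\eqref{eq:app_modal}--\eqref{eq:app_Cdef}.

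For the selection rule I take $\delta c^2=f(y)$. The triple product factorizes as
\[
\frac{1}{A}\int_0^{L_x} e^{i2\pi(m'-m)x/L_x}\,dx\;\int_0^{L_y} f(y)\,e^{i2\pi(n'-n)y/L_y}\,dy .
\]
The $x$-integral equals $L_x\delta_{m,m'}$, and the $y$-integral is $L_y\hat V_{n'-n}$ with the sign convention of Eq.~\eqref{eq:app_Vhat}. Dividing by $A=L_xL_y$ gives $\delta_{m,m'}\hat V_{\pm(n-n')}$. This is block-diagonal in $m$, and the index matches $\hat V_{n-n'}$ after fixing the exponent sign convention, or by using $\hat V_{-q}=\overline{\hat V_q}$ for real $f$.

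The only real obstacle is bookkeeping of conventions: the sign and conjugation in the inner product, the sign of $C$, and whether $c^2$ sits inside the divergence. If the conservative form $\nabla\!\cdot(c^2\nabla p)$ from Theorem~\ref{thm:second_order_equivalence} were used instead, the coupling would become $k_{m,n}\cdot k_{m',n'}$ rather than $k_{m',n'}^2$. The block-diagonal selection rule survives unchanged in either case. I will therefore fix conventions at the outset so that Eq.~\eqref{eq:app_Cdef} holds exactly as stated.
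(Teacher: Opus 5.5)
Your proposal is correct and follows essentially the same route as the paper's proof: Galerkin projection of the undamped equation with $c^2\nabla^2 p$ onto the orthonormal Fourier modes, reading off $\omega_{m,n}^2=c_0^2k_{m,n}^2$ from $c_0^2$ and the coupling from $\delta c^2$, then factorizing the striped triple product into $L_x\delta_{m,m'}$ times $L_y\hat V$. The convention problems you flag are real: the paper silently drops the minus sign when defining $C$, and it reaches $\hat V_{n-n'}$ only by conjugating the expanded mode $\phi_{m',n'}$ instead of the test function. Note, though, that $\hat V_{-q}=\overline{\hat V_q}$ gives equality with $\hat V_{n-n'}$ only when that coefficient is real, as for the even square wave used later; this is harmless because only $|\hat V_{q_0}|$ enters Theorem~\ref{thm:striped-bragg}.
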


\begin{proof}
We begin by establishing the modal expansion and deriving the coupling matrix through systematic projection onto Fourier basis functions.

On the periodic domain $[0,L_x]\times[0,L_y]$, we employ the orthonormal Fourier basis defined in Eq.~\eqref{eq:app_basis}, where $A=L_xL_y$ is the domain area. These functions satisfy $\nabla^2\phi_{m,n} = -k_{m,n}^2 \phi_{m,n}$.

Starting from the wave equation without damping:
\begin{equation}
\frac{\partial^2 p}{\partial t^2} = c^2(x,y)\nabla^2 p + S(x,y,t), \quad 
c^2(x,y) = c_0^2 + \delta c^2(x,y),
\label{eq:app_wave_equation}
\end{equation}
we expand $p(x,y,t)=\sum_{m,n} a_{m,n}(t)\phi_{m,n}(x,y)$ to obtain:
\begin{align}
\sum_{m,n} \ddot{a}_{m,n}(t) \phi_{m,n}(x,y) &= (c_0^2 + \delta c^2(x,y)) \nabla^2 \left(\sum_{m,n} a_{m,n}(t) \phi_{m,n}(x,y)\right) + S(x,y,t) \nonumber \\
&= \sum_{m,n} a_{m,n}(t) (c_0^2 + \delta c^2(x,y)) \nabla^2 \phi_{m,n}(x,y) + S(x,y,t).
\label{eq:app_substituted}
\end{align}

Substituting $\nabla^2 \phi_{m,n} = -k_{m,n}^2 \phi_{m,n}$ into Eq.~\eqref{eq:app_substituted} yields:
\begin{equation}
\sum_{m,n} \ddot a_{m,n}(t)\phi_{m,n} = 
-\sum_{m,n} a_{m,n}(t)k_{m,n}^2(c_0^2+\delta c^2)\phi_{m,n} + S(x,y,t).
\label{eq:app_expanded_wave}
\end{equation}

To project this equation onto individual modes, we define the normalized inner product on $[0,L_x]\times[0,L_y]$ by:
\begin{equation}
\langle f,g\rangle := \frac{1}{A}\int_{0}^{L_x}\!\!\int_{0}^{L_y} f(x,y)\,g^*(x,y)\,dx\,dy,
\qquad A=L_xL_y,
\label{eq:app_inner_product}
\end{equation}
so that the Fourier modes are orthonormal: $\langle \phi_{m,n},\phi_{\mu,\nu}\rangle=\delta_{m,\mu}\delta_{n,\nu}$, where $\delta$ denotes the Kronecker delta function. We also use the trilinear shorthand:
\begin{equation}
\langle \delta c^2,\phi_{m,n},\phi_{\mu,\nu}\rangle 
:= \frac{1}{A}\int_{0}^{L_x}\!\!\int_{0}^{L_y} \delta c^2(x,y)\,\phi_{m,n}^*(x,y)\,\phi_{\mu,\nu}(x,y)\,dx\,dy.
\label{eq:app_trilinear_form}
\end{equation}

Multiplying Eq.~\eqref{eq:app_expanded_wave} by $\phi_{\mu,\nu}(x,y)$ and integrating over the domain, we analyze each term separately. For the left-hand side:
\begin{align}
\left\langle \sum_{m,n}\ddot a_{m,n}\phi_{m,n},\,\phi_{\mu,\nu}\right\rangle
&=\sum_{m,n}\ddot a_{m,n}\,\langle \phi_{m,n},\phi_{\mu,\nu}\rangle
=\sum_{m,n}\ddot a_{m,n}\,\delta_{m,\mu}\delta_{n,\nu}
=\ddot a_{\mu,\nu}(t).
\label{eq:app_lhs_projection}
\end{align}

For the reference-speed part:
\begin{align}
-\left\langle \sum_{m,n}a_{m,n}\,k_{m,n}^2\,c_0^2\,\phi_{m,n},\,\phi_{\mu,\nu}\right\rangle
&=-c_0^2\sum_{m,n}a_{m,n}\,k_{m,n}^2\,\langle \phi_{m,n},\phi_{\mu,\nu}\rangle \nonumber \\
&=-c_0^2\,k_{\mu,\nu}^2\,a_{\mu,\nu}(t).
\label{eq:app_reference_projection}
\end{align}

For the perturbation (coupling) part:
\begin{align}
-\left\langle \sum_{m,n}a_{m,n}\,k_{m,n}^2\,\delta c^2\,\phi_{m,n},\,\phi_{\mu,\nu}\right\rangle
&=-\sum_{m,n}a_{m,n}\,k_{m,n}^2\,\langle \delta c^2,\phi_{m,n},\phi_{\mu,\nu}\rangle.
\label{eq:app_coupling_projection}
\end{align}

For the source part:
\begin{equation}
\langle S(x,y,t),\phi_{\mu,\nu}\rangle =: S_{\mu,\nu}(t).
\label{eq:app_source_projection}
\end{equation}

Collecting terms from Eqs.~\eqref{eq:app_lhs_projection}--\eqref{eq:app_source_projection} gives:
\begin{equation}
\ddot a_{\mu,\nu}(t) + c_0^2 k_{\mu,\nu}^2\,a_{\mu,\nu}(t)
= -\sum_{m,n} k_{m,n}^2\,\langle \delta c^2,\phi_{m,n},\phi_{\mu,\nu}\rangle\,a_{m,n}(t) + S_{\mu,\nu}(t).
\label{eq:app_modal_equation_raw}
\end{equation}

Defining $\omega_{m,n}^2:=c_0^2k_{m,n}^2$ and the coupling matrix:
\begin{equation}
C_{(\mu,\nu),(m,n)} := k_{m,n}^2\,\langle \delta c^2,\phi_{m,n},\phi_{\mu,\nu}\rangle,
\label{eq:app_coupling_matrix_def}
\end{equation}
and relabeling $(\mu,\nu)\to(m,n)$, $(m,n)\to(m',n')$ in the sum, we obtain the modal system in Eq.~\eqref{eq:app_modal}.

For the special case of striped distributions where $\delta c^2(x,y)=f(y)$, we can evaluate the coupling matrix explicitly using Eq.~\eqref{eq:app_trilinear_form}:
\begin{align}
\langle \delta c^2,\phi_{m',n'},\phi_{m,n}\rangle 
&= \frac{1}{A}\iint f(y)\,
e^{i\frac{2\pi(m-m')x}{L_x}} e^{i\frac{2\pi(n-n')y}{L_y}}\,dx\,dy \nonumber \\
&= \frac{1}{A}\!\left[\int_0^{L_x} e^{i\frac{2\pi(m-m')x}{L_x}} dx\right]
\left[\int_0^{L_y} f(y) e^{i\frac{2\pi(n-n')y}{L_y}} dy\right].
\label{eq:app_striped_evaluation}
\end{align}

The $x$-integral in Eq.~\eqref{eq:app_striped_evaluation} evaluates to $L_x\delta_{m,m'}$. Defining the Fourier coefficients:
\begin{equation}
\hat V_q = \frac{1}{L_y}\int_0^{L_y} f(y) e^{i\frac{2\pi q y}{L_y}}\,dy,
\quad q=n-n',
\label{eq:app_fourier_coeffs}
\end{equation}
the $y$-integral becomes $L_y\hat V_{n-n'}$. Therefore:
\begin{equation}
\langle \delta c^2,\phi_{m',n'},\phi_{m,n}\rangle 
= \delta_{m,m'}\hat V_{n-n'}.
\label{eq:app_striped_coupling}
\end{equation}

Substituting Eq.~\eqref{eq:app_striped_coupling} into Eq.~\eqref{eq:app_coupling_matrix_def}, the coupling matrix under striped distributions becomes block-diagonal in $m$:
\begin{equation}
C_{(m,n),(m',n')} = k_{m',n'}^2 \delta_{m,m'} \hat V_{n-n'}.
\label{eq:app_block_diagonal_coupling}
\end{equation}
\end{proof}

\subsection{Fourier Coefficient Maximization}
\label{appendix:fourier-max}

\begin{lemma}[Square-Wave Fourier Optimality]
\label{lem:fourier-max}
For a periodic function $f(y)$ with period $L$ and pointwise constraint $|f(y)| \leq A$, the magnitude of the $q$-th Fourier coefficient
\begin{equation}
\hat{f}_q = \frac{1}{L} \int_0^L f(y) e^{i 2\pi q y/L} dy
\label{eq:app_fourier_coeff}
\end{equation}
is maximized by the square wave 
\begin{equation}
f(y) = A \cdot \mathrm{sgn}\left[\cos\left(\frac{2\pi q y}{L}\right)\right],
\label{eq:app_square_wave}
\end{equation}
achieving $|\hat{f}_q| = 2A/(\pi q)$ for odd $q$.
\end{lemma}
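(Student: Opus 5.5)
The plan is a phase-rotation and duality argument: a pointwise bound on $f$ turns into an $L^1$ bound on a cosine, and the square wave attains that bound.

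\emph{Step 1 (phase rotation).} Fix $q\neq 0$ and write $\hat f_q=|\hat f_q|e^{i\theta}$. Multiplying by $e^{-i\theta}$ gives
\begin{equation*}
|\hat f_q| = \mathrm{Re}\bigl(e^{-i\theta}\hat f_q\bigr) = \frac{1}{L}\int_0^L f(y)\cos\!\left(\frac{2\pi q y}{L}-\theta\right)dy ,
\end{equation*}
which uses only that $f$ is real-valued, as $\delta c^2$ is in Theorem~\ref{thm:striped-bragg}.

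\emph{Step 2 (pointwise bound).} Using $|f|\le A$,
\begin{equation*}
|\hat f_q|\le \frac{A}{L}\int_0^L\left|\cos\!\left(\frac{2\pi q y}{L}-\theta\right)\right|dy=\frac{2A}{\pi}.
\end{equation*}
The integral is the mean of $|\cos|$ over $q$ full periods, which equals $2/\pi$ and does not depend on $\theta$ or $q$. Equality forces $f(y)=A\,\mathrm{sgn}[\cos(2\pi q y/L-\theta)]$ almost everywhere. The phase $\theta$ only translates the profile, so up to translation the unique maximizer is the square wave in Eq.~\eqref{eq:app_square_wave}. Direct substitution confirms it attains the bound.

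\emph{Step 3 (the stated value).} I expect the main obstacle to be reconciling the lemma's claimed value $2A/(\pi q)$ with this computation. If the square wave has spatial frequency $q$, as in Eq.~\eqref{eq:app_square_wave}, its $q$-th coefficient is the fundamental coefficient of a square wave, namely $2A/\pi$. The value $2A/(\pi q)$, and the restriction to odd $q$, instead describe the $q$-th harmonic of the period-$L$ wave $A\,\mathrm{sgn}[\cos(2\pi y/L)]$. That coefficient follows from the standard series $\frac{4A}{\pi}\sum_{k\ \mathrm{odd}}\frac{(-1)^{(k-1)/2}}{k}\cos(2\pi k y/L)$: splitting the $k$-th term into two exponentials gives $2A/(\pi q)$ for each of $\pm q$, and even harmonics vanish by half-wave antisymmetry.

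I would therefore present the optimality result as $\max|\hat f_q|=2A/\pi$, attained by Eq.~\eqref{eq:app_square_wave}, and state $2A/(\pi q)$ separately as the harmonic content of the fundamental square wave. This correction does not affect Theorem~\ref{thm:striped-bragg}: choosing the stripe period so that $q_0$ is the fundamental (Bragg matching) gives the larger coefficient $2A/\pi$. Through Lemma~\ref{lem:modal-coupling}, $|C_{(m,n),(m,n+q_0)}|=k^2_{m,n+q_0}|\hat V_{-q_0}|$, and since $|\hat V_{-q_0}|=|\hat V_{q_0}|$ for real $f$, the Step~2 bound transfers directly to the modal coupling strength.
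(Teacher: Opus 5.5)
Your proof is correct, and it departs from the paper's at exactly the step where the paper's argument fails.

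The paper bounds $|\hat f_q|\le\frac1L\int_0^L|f|\,dy\le A$ by the triangle inequality (calling it Cauchy--Schwarz). It then asserts that equality holds for the ``phase-aligned'' square wave, computes $2A/(\pi q)$, and declares optimality confirmed. That chain does not close. For real $f$, equality in $\bigl|\int f e^{i2\pi qy/L}\,dy\bigr|\le\int|f|\,dy$ would force $f e^{i2\pi qy/L}$ to have constant argument on the support of $f$, which is impossible unless $f=0$ a.e. The bound $A$ is attained only by complex profiles $Ae^{i\alpha}e^{-i2\pi qy/L}$, while the square wave reaches only $2A/\pi<A$.

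Your phase rotation is the missing idea. It turns $|\hat f_q|$ into the real linear functional $\frac1L\int_0^L f(y)\cos(2\pi qy/L-\theta)\,dy$. The constraint $|f|\le A$ is then tested against $|\cos|$ rather than against $1$, which gives the sharp constant $2/\pi$ and the equality case (exactly the translates of the square wave) at once. It also shows that realness of $f$ is a necessary hypothesis, not a convenience: for complex $f$ the square-wave claim is false. The paper's alternative extreme-point argument treats the convex function $|\hat f_q|$ as if it were linear, and at best yields $f\in\{\pm A\}$ a.e. The switching pattern and the constant again come only from your duality step.

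Your Step 3 correction is also right. Substituting $u=2\pi qy/L$,
\[
\frac1L\int_0^L A\,\mathrm{sgn}\!\left[\cos\!\left(\tfrac{2\pi qy}{L}\right)\right]e^{i2\pi qy/L}\,dy=\frac{A}{2\pi}\int_0^{2\pi}\mathrm{sgn}(\cos u)\,e^{iu}\,du=\frac{A}{2\pi}\int_0^{2\pi}|\cos u|\,du=\frac{2A}{\pi}
\]
for every integer $q\ge1$. So $2A/(\pi q)$ and the odd-$q$ restriction describe the $q$-th harmonic of the period-$L$ wave, not Eq.~\eqref{eq:app_square_wave}.

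As you note, nothing downstream breaks. Part~III of the proof of Theorem~\ref{thm:striped-bragg} applies the lemma only with $q=1$, where the two formulas coincide at $2A/\pi$.
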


\begin{proof}
By the Cauchy-Schwarz inequality,
\begin{equation}
|\hat{f}_q| \leq \frac{1}{L} \int_0^L |f(y)| \, dy \leq \frac{1}{L} \int_0^L A \, dy = A.
\label{eq:app_fq_bound}
\end{equation}

Equality holds when $f(y)$ is phase-aligned with $e^{i 2\pi q y/L}$, i.e., when
\begin{equation}
f(y) = A \cdot \mathrm{sgn}\left[\cos\left(\frac{2\pi q y}{L}\right)\right].
\label{eq:app_fq_optimal}
\end{equation}

For this square wave, direct computation gives
\begin{equation}
\hat{f}_q = \frac{1}{L} \int_0^L A \cdot \mathrm{sgn}\left[\cos\left(\frac{2\pi q y}{L}\right)\right] e^{i 2\pi q y/L} dy = \frac{2A}{\pi q}
\label{eq:app_fq_value}
\end{equation}
for odd $q$, confirming optimality.

Alternatively, this result follows from the convex optimization perspective: the set of functions satisfying $|f(y)| \leq A$ and fixed mean is convex, and the objective $|\hat{f}_q|$ is a linear functional. By the extreme point theorem, the maximum occurs at an extreme point of the feasible set, which corresponds to $f(y) \in \{-A, A\}$ almost everywhere. The phase-alignment condition then determines the switching pattern, yielding the square wave in Eq.~\eqref{eq:app_square_wave}.
\end{proof}

\subsection{Proof of Theorem~\ref{thm:striped-bragg}}
\label{appendix:bragg-proof}

We now prove that the striped square-wave perturbation maximizes drift detection sensitivity through optimal spatial organization.

\begin{proof}
The proof establishes optimality through three complementary perspectives: geometric structure, resonant coupling, and spectral amplitude.

\paragraph{Part I: Geometric constraint from striped structure.}

From Lemma~\ref{lem:modal-coupling}, when the speed perturbation takes the form $\delta c^2(x,y) = f(y)$ (independent of $x$), the coupling matrix satisfies
\begin{equation}
C_{(m,n),(m',n')} = k_{m',n'}^2 \delta_{m,m'} \hat{V}_{n-n'},
\label{eq:proof_coupling}
\end{equation}
where $\hat{V}_q$ are the Fourier coefficients of $f(y)$ defined by
\begin{equation}
\hat{V}_q = \frac{1}{L_y}\int_0^{L_y} f(y)\,e^{i\frac{2\pi q y}{L_y}}\,dy.
\label{eq:proof_Vq}
\end{equation}

The Kronecker delta $\delta_{m,m'}$ in Eq.~\eqref{eq:proof_coupling} reveals that the striped structure decouples modes with different $x$-direction wavenumbers: modes $(m,n)$ and $(m',n')$ with $m\neq m'$ do not couple. Within each $m$-block, coupling between modes $(m,n)$ and $(m,n')$ depends solely on the Fourier coefficient $\hat{V}_{n-n'}$ of the $y$-direction profile $f(y)$.

This geometric decoupling converts the multi-dimensional optimization problem over $c(x,y)$ into independent one-dimensional problems along $y$ for each fixed $m$. This reduction reveals that unidirectional (striped) variation is structurally optimal for maximizing targeted modal coupling.

\paragraph{Part II: Resonant period selection via Fourier harmonics.}

To maximize coupling between modes $(m,n)$ and $(m,n+q_0)$, we examine
\begin{equation}
|C_{(m,n),(m,n+q_0)}| = k_{m,n+q_0}^2 |\hat{V}_{q_0}|.
\label{eq:proof_target}
\end{equation}

Since $k_{m,n+q_0}^2$ is fixed by mode indices, maximizing coupling reduces to maximizing $|\hat{V}_{q_0}|$.

Consider a periodic function $f(y)$ on $[0, L_y]$ with period $T$ (where $T$ divides $L_y$ evenly). The periodicity condition $f(y) = f(y + T)$ implies that the Fourier series Eq.~\eqref{eq:proof_Vq} has $\hat{V}_q = 0$ unless $q$ is an integer multiple of $L_y/T$. Specifically, the Fourier coefficients are nonzero only at
\begin{equation}
q = n \cdot \frac{L_y}{T}, \quad n = 0, \pm 1, \pm 2, \ldots
\label{eq:proof_harmonic_condition}
\end{equation}

For the fundamental Fourier component (largest amplitude in a typical periodic function) to occur at the target mode index $q_0$, we require
\begin{equation}
q_0 = 1 \cdot \frac{L_y}{T} \quad \Rightarrow \quad T = \frac{L_y}{q_0}.
\label{eq:proof_period}
\end{equation}

This period selection has a clear physical interpretation. In the modal expansion, mode index $n$ corresponds to a $y$-direction wavenumber
\begin{equation}
k_y^{(n)} = \frac{2\pi n}{L_y}.
\label{eq:proof_ky_def}
\end{equation}

The discrete modal separation $q_0$ therefore corresponds to a wavenumber difference
\begin{equation}
\Delta k_y = k_y^{(n+q_0)} - k_y^{(n)} = \frac{2\pi (n+q_0)}{L_y} - \frac{2\pi n}{L_y} = \frac{2\pi q_0}{L_y}.
\label{eq:proof_delta_ky}
\end{equation}

Using the wave relation $k = 2\pi/\lambda$, the spatial wavelength associated with this mode coupling is
\begin{equation}
\lambda_{\text{coupling}} = \frac{2\pi}{\Delta k_y} = \frac{2\pi}{2\pi q_0/L_y} = \frac{L_y}{q_0} = T.
\label{eq:proof_wavelength_detailed}
\end{equation}

When the stripe period $T$ matches this coupling wavelength, we achieve resonant enhancement analogous to the Bragg condition for coherent scattering in periodic structures~\cite{kushwaha1993acoustic}: reflections from successive interfaces interfere constructively, maximizing energy transfer between the mode pair $(m,n)$ and $(m,n+q_0)$.

The period $T = L_y/q_0$ is not merely sufficient but necessary for fundamental-harmonic coupling. Any other period would either shift the fundamental component to a different mode index or distribute energy across multiple harmonics, reducing coupling strength at $q_0$.

\paragraph{Part III: Amplitude maximization through square-wave profile.}

Given the resonant period $T = L_y/q_0$ from Part II, we now optimize the profile shape $f(y)$ to maximize $|\hat{V}_{q_0}|$ under constraint $|f(y)| \leq A$.

For a function $f(y)$ with period $T$, we can compute $\hat{V}_{q_0}$ by integrating over one period and accounting for the number of periods in $[0, L_y]$. Since $f(y)$ repeats $q_0$ times over $[0, L_y]$, we have
\begin{equation}
\hat{V}_{q_0} = \frac{1}{L_y} \int_0^{L_y} f(y) e^{i 2\pi q_0 y/L_y} dy = \frac{q_0}{L_y} \int_0^{T} f(y) e^{i 2\pi q_0 y/L_y} dy.
\label{eq:proof_Vq0_integral}
\end{equation}

Substituting $T = L_y/q_0$ and changing variables $\eta = q_0 y/L_y$ (so $dy = T d\eta$ and $\eta \in [0,1]$ over one period):
\begin{equation}
\hat{V}_{q_0} = \frac{q_0}{L_y} \int_0^1 f(T\eta) e^{i 2\pi \eta} T \, d\eta = \frac{1}{L_y/q_0} \int_0^{L_y/q_0} f(y) e^{i 2\pi q_0 y/L_y} dy.
\label{eq:proof_change_var}
\end{equation}

Now we apply Lemma~\ref{lem:fourier-max}. Define a rescaled Fourier coefficient for one period:
\begin{equation}
\tilde{V} := \frac{1}{T} \int_0^T f(y) e^{i 2\pi q_0 y/L_y} dy = \frac{1}{T} \int_0^T f(y) e^{i 2\pi y/T} dy,
\label{eq:proof_rescaled}
\end{equation}
where we used $q_0/L_y = 1/T$. This is precisely the Fourier coefficient at the fundamental frequency for a function with period $T$, matching the form in Lemma~\ref{lem:fourier-max} with $L = T$ and $q = 1$.

From Lemma~\ref{lem:fourier-max}, $|\tilde{V}|$ is maximized when $f(y)$ is the square wave
\begin{equation}
f(y) = A \cdot \mathrm{sgn}\left[\cos\left(\frac{2\pi y}{T}\right)\right] = A \cdot \mathrm{sgn}\left[\cos\left(\frac{2\pi q_0 y}{L_y}\right)\right],
\label{eq:proof_square_wave}
\end{equation}
achieving
\begin{equation}
|\tilde{V}|_{\max} = \frac{2A}{\pi \cdot 1} = \frac{2A}{\pi}.
\label{eq:proof_max_rescaled}
\end{equation}

From Eq.~\eqref{eq:proof_change_var}, we have $\hat{V}_{q_0} = \tilde{V}$, therefore
\begin{equation}
|\hat{V}_{q_0}|_{\max} = \frac{2A}{\pi}.
\label{eq:proof_max_coefficient}
\end{equation}

The optimality of the square-wave profile admits dual interpretations. Mathematically, the Cauchy-Schwarz inequality (used in proving Lemma~\ref{lem:fourier-max}) shows that phase alignment with the Fourier kernel $e^{i 2\pi q_0 y/L_y}$ while maintaining maximal amplitude $|f(y)| = A$ maximizes the overlap integral in Eq.~\eqref{eq:proof_Vq}.

Physically, the square wave creates binary impedance contrasts at interfaces $y = (2n+1)T/2$ for integer $n$. At each interface, waves reflect with coefficient approaching $R \to \pm 1$ as shown in Lemma~\ref{lem:interface-reflection}. When interfaces are spaced at period $T = L_y/q_0$, reflections from all interfaces arrive in phase at the coupled mode pair, creating maximal constructive interference. The Fourier coefficient $|\hat{V}_{q_0}|$ quantifies precisely this coherent amplification.

\paragraph{Synthesis: Joint optimization via amplitude and phase.}

Combining Parts I--III, the striped square wave
\begin{equation}
\delta c^2(x,y) = A \cdot \mathrm{sgn}\left[\cos\left(\frac{2\pi q_0 y}{L_y}\right)\right]
\label{eq:proof_final}
\end{equation}
simultaneously achieves: (\textit{i}) maximal modal coupling amplitude $|C_{(m,n),(m,n+q_0)}| = k_{m,n+q_0}^2 \cdot 2A/\pi$, and (\textit{ii}) maximal spatial differentiation with dynamic range $\Delta c = 2A$. The latter creates maximal spatial variation in local resonance frequencies $\omega_{\mathrm{res}}(y) \propto c(y)$ and consequently in frequency-dependent phase delays $\tau_i(y)$ (Theorem~\ref{thm:phase_response}). Different input frequencies thus produce spatially distinct phase patterns, enabling frequency-selective detection. This joint optimization of coupling amplitude and phase selectivity maximizes drift detection sensitivity.
\end{proof}

\section{Prompt}
\label{appendix:prompt}

The prompt used to make the classification is shown in Fig.~\ref{fig:prompt-violence-audio}. To ensure a rigorous evaluation on the XD-Violence dataset \cite{wu2020not}, we adopt the ground-truth label definitions provided by the original benchmark authors. 

A significant challenge in applying zero-shot Audio Language Models (ALMs) to this dataset is the nature of the source material; the majority of audio scenarios in XD-Violence are extracted from action movies and films. We observed that without domain-specific pre-training or fine-tuning, general-purpose ALMs tend to exhibit a ``fictional bias.'' Specifically, the models often recognize the high-energy acoustic signatures or specific dialogue as being part of a movie or video game and, as a result, classify the sample as containing ``no violence'' because it is perceived as non-real.

To mitigate this, we designed a specific Fictional Context Rule within our prompt. This rule explicitly instructs the model to ignore any mentions of the audio being from a fictional medium and to focus solely on the physical actions described (e.g., classifying a movie gunshot as a valid shooting event). This ensures that the ALM functions as a security-oriented classifier rather than a content-type identifier, allowing for a fair comparison against supervised baselines.

\begin{figure}[H]
\vspace{-0.2em}
\centering
\setlength{\fboxsep}{6pt}
\setlength{\fboxrule}{0.6pt}
\fbox{%
\begin{tabular}{@{}p{0.96\linewidth}@{}}
\small \textbf{XD-Violence Audio Track Classification Prompt}\\[0.4em]
\hrule\\[-0.2em]

\textbf{[ROLE]}\\
You are a professional security audio classifier.\\
Your ONLY task is to detect and categorize violent events from audio descriptions.\\[0.6em]

\textbf{[EVIDENCE]}\\
\texttt{\{evidence\_text\}}\\[0.6em]

\textbf{[RULES]}\\
\textbf{1. Label mapping (mandatory):}\\
\begin{tabular}{@{}ll@{}}
\textbf{B1}: & Fighting (scuffles, hitting, physical combat)\\
\textbf{B2}: & Shooting (gunfire, gunshots)\\
\textbf{B4}: & Riot (large crowd chaos, protests, shouting)\\
\textbf{B5}: & Abuse (harassment, physical or verbal abuse)\\
\textbf{B6}: & Car accident (crashes, breaking glass, vehicle impacts)\\
\textbf{G}:  & Explosion (blasts, loud booming sounds)\\
\textbf{None}: & Only if absolutely no violent events are mentioned.\\
\end{tabular}\\[0.6em]

\textbf{2. Violence dominance rule:}\\
If any single part of the evidence mentions a violent event, you MUST classify the sample based on that event,
even if all other parts of the evidence state ``there is no violence.'' One positive detection overrides all
\textbf{None} detections.\\[0.6em]

\textbf{3. Fictional context rule:}\\
Strictly IGNORE any mentions of the audio being from a ``video game,'' ``movie,'' ``film,'' or ``soundtrack.'' 
You must classify based on the actions described (e.g., if a movie gunshot is described, label it \textbf{B2}).\\[0.6em]

\textbf{[OUTPUT FORMAT]}\\
Output ONLY the code (e.g., \textbf{B4}). Do not include reasoning, descriptions, punctuation, or extra text.\\

\end{tabular}%
}
\vspace{-0.4em}
\caption{Prompt used for violence event classification from audio descriptions.}
\label{fig:prompt-violence-audio}
\end{figure}

\section{Additional Qualitative Comparisons}
\label{appendix:Additional Qualitative Comparisons}

To complement the illustrative cases in Section \ref{subsection:examples}, we provide six additional Mel-spectrogram examples of OWM. These examples highlight diverse acoustic scenarios and further demonstrate the relative robustness of OWM.

\textbf{Car horn onset (Example R0003).} As shown in Fig.~\ref{fig:R0003}, a car horn appears at 21\,s and persists for several seconds. Both methods successfully detect the onset, but OWM localizes the event with fewer spurious triggers.

\begin{figure}[H]
    \centering
    \includegraphics[width=0.7\textwidth]{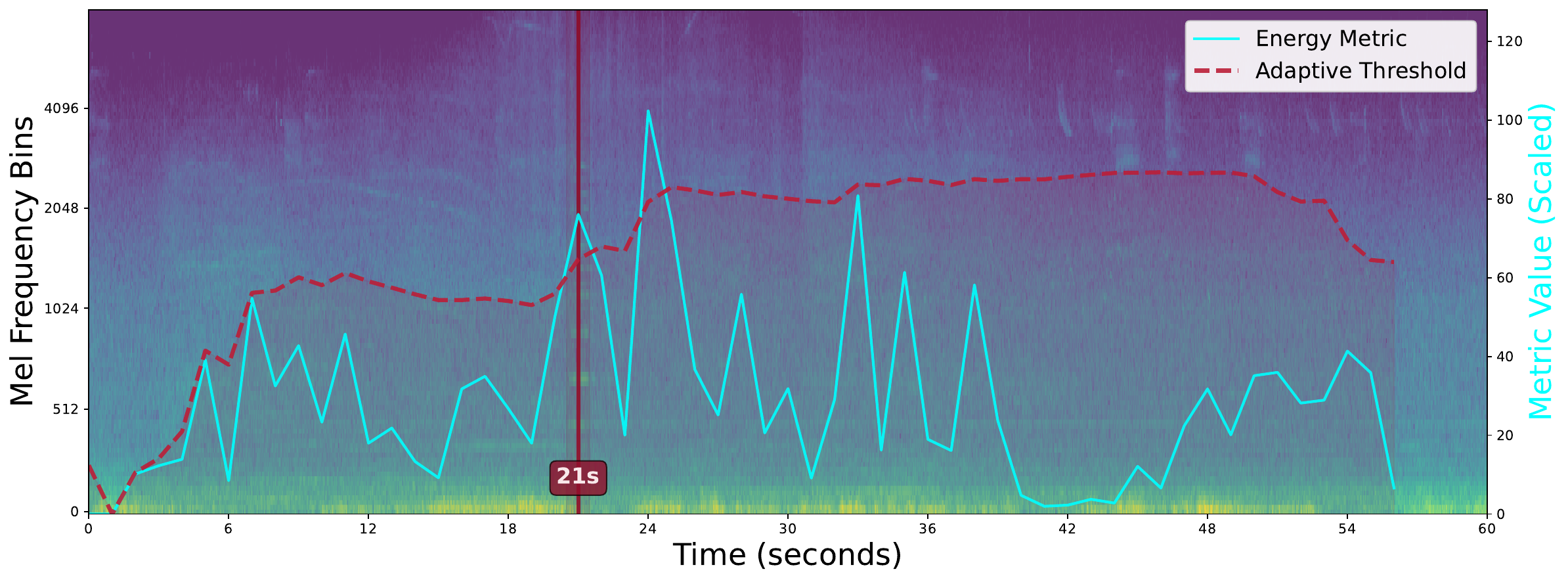}
    \caption{Example R0003: Car horn onset. OWM localizes events with few spurious triggers.}
    \label{fig:R0003}
\end{figure}

\textbf{Intermittent piano playing (Example R0007).} As shown in Fig.~\ref{fig:R0007}, piano notes occur intermittently, producing pauses between onsets. OWM registers four discrete activations, thereby reducing sensitivity to the continuous drift caused by spectral divergence.

\begin{figure}[H]
    \centering
    \includegraphics[width=0.7\textwidth]{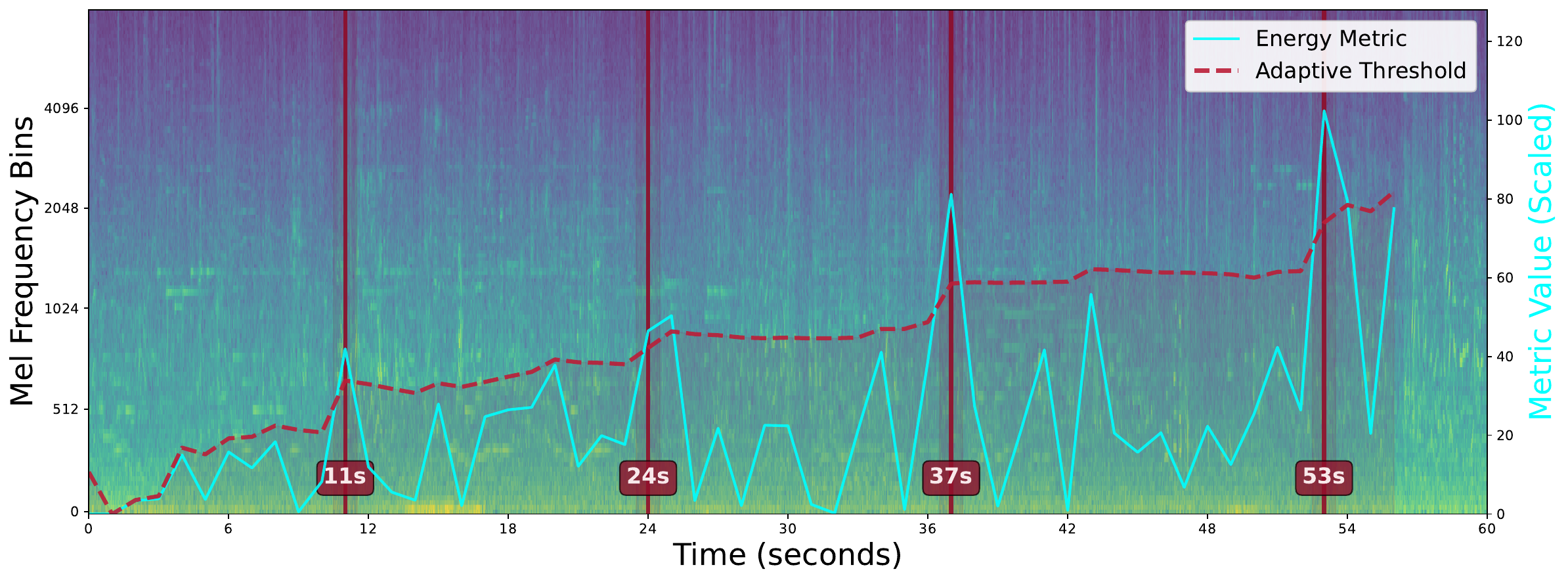}
    \caption{Example R0007 (Chalet du Mont-Royal, Montreal — street traffic):The most salient event is intermittent piano playing, which triggers OWM four times; this discretization reduces sensitivity to the continuous drift caused by spectral divergence.}
    \label{fig:R0007}
\end{figure}

\textbf{Conversational speech (Example R0028).} As shown in Fig.~\ref{fig:R0028}, after 30\,s, a dialogue with pauses and speaker changes begins. OWM detects a single event at 38\,s, reflecting its robustness to intra-speech variability.

\begin{figure}[H]
    \centering
    \includegraphics[width=0.7\textwidth]{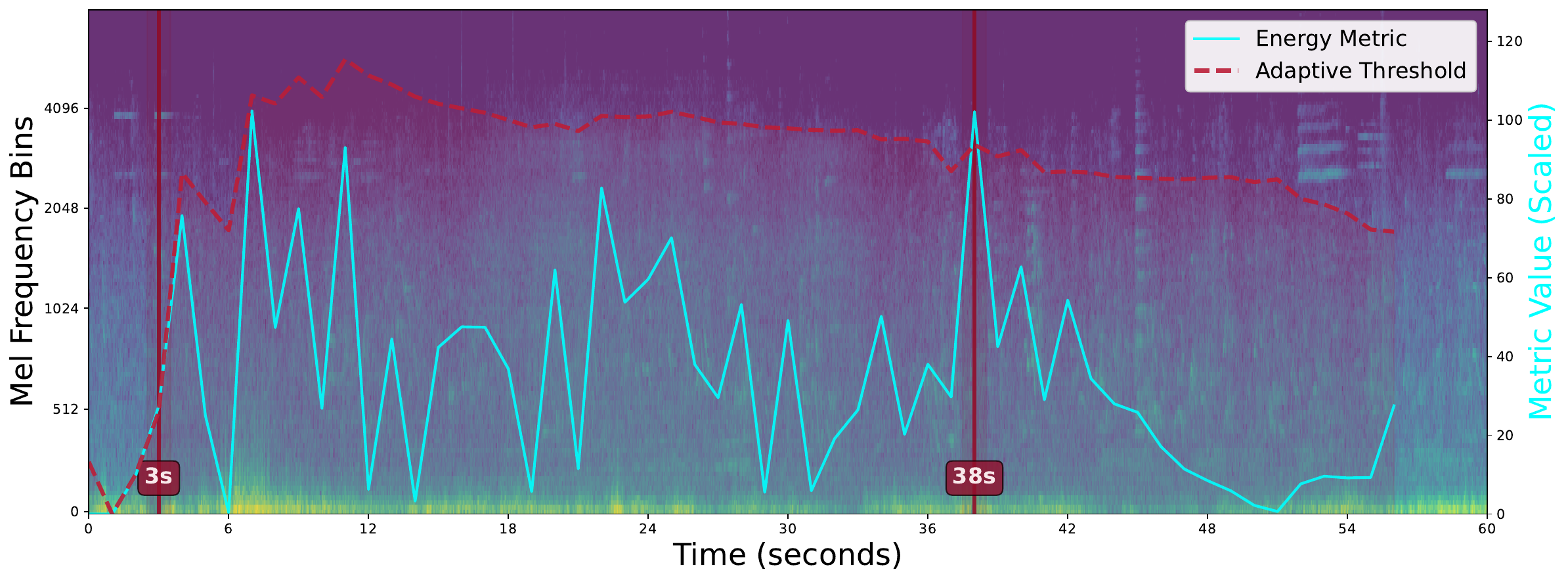}
    \caption{Example R0028 (Heping Road, Tianjin — street traffic): The most salient event is conversation after 30\,s, with pauses and speaker changes. OWM is triggered only once at 38\,s.}
    \label{fig:R0028}
\end{figure}

\textbf{Repeated car horn sounds (Example R0030).} As shown in Fig.~\ref{fig:R0030}, horns occur around 24\,s, 32\,s, and 50\,s. OWM detects only the late instance (around 51\,s).

\begin{figure}[H]
    \centering
    \includegraphics[width=0.7\textwidth]{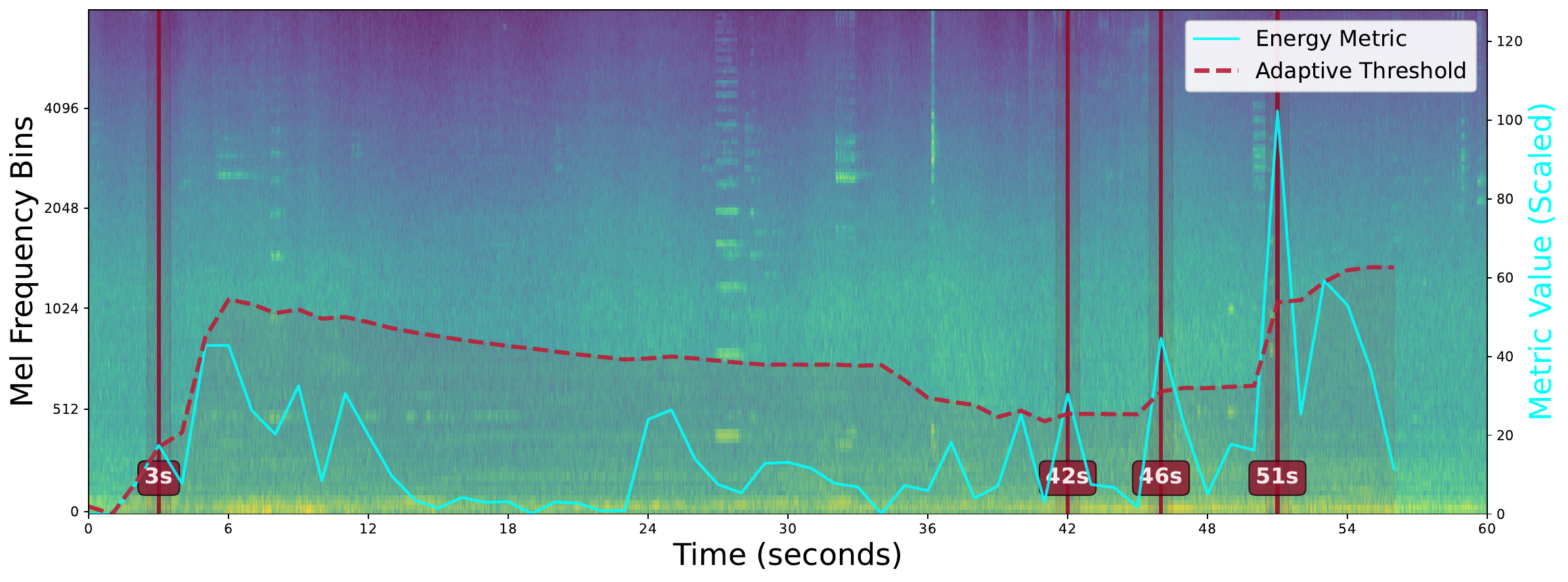}
    \caption{Example R0030 (Century Clock, Tianjin — street traffic): The most salient event is repeated car horn sounds at approximately 24\,s, 32\,s, and 50\,s. OWM detects only the instance around 51\,s.}
    \label{fig:R0030}
\end{figure}

\textbf{Railway station announcement (Example R0031).} As shown in Fig.~\ref{fig:R0031}, the segment contains continuous speech without other salient events. OWM triggers only once during the initial threshold adaptation phase, demonstrating its robustness against continuous background speech.

\begin{figure}[H]
    \centering
    \includegraphics[width=0.7\textwidth]{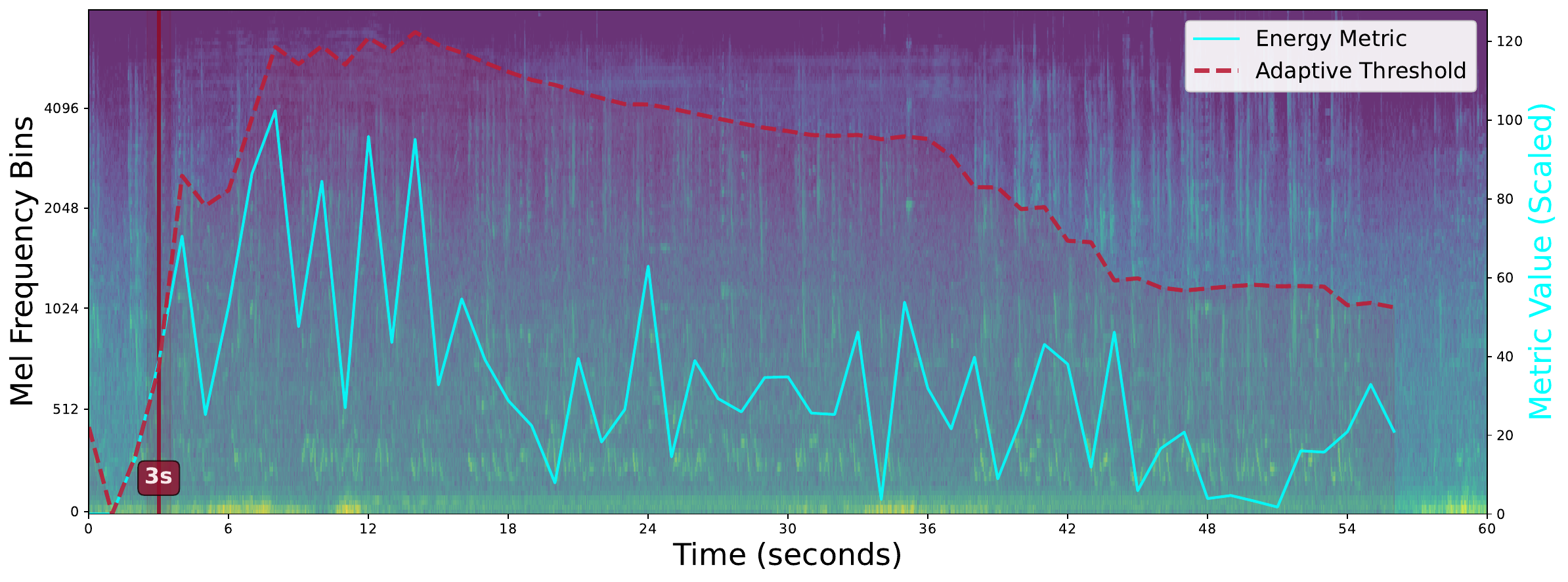}
    \caption{Example R0031 (Tianjin Railway Station, Tianjin — public square): Analysis of a railway station announcement. OWM triggers only once during the initial threshold adaptation, demonstrating its stability and resistance to false activations during continuous speech.}
    \label{fig:R0031}
\end{figure}

\textbf{Church bell with crowd talking (Example R0131).} As shown in Fig.~\ref{fig:R0131}, the spectrogram is dominated by continuous bell ringing. OWM’s adaptive threshold rises in response to this persistent signal, effectively suppressing background crowd noise and preventing spurious detections.

\begin{figure}[H]
    \centering
    \includegraphics[width=0.7\textwidth]{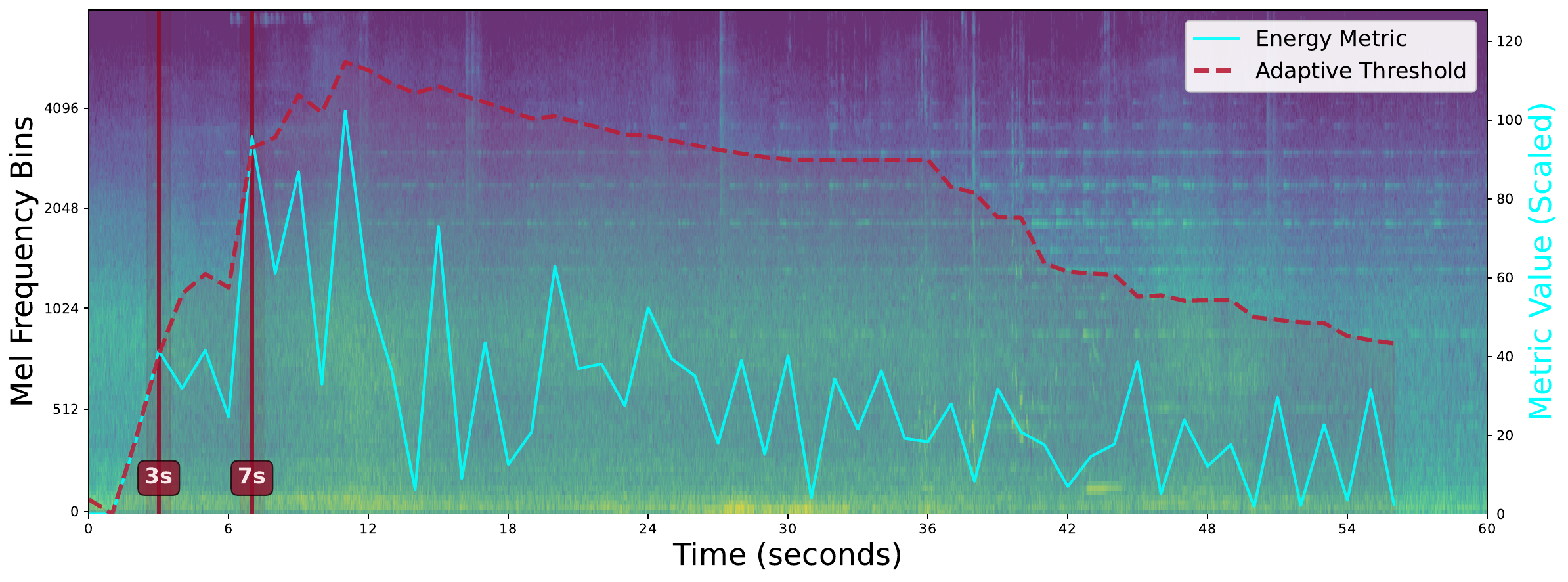}
    \caption{Example R0131 (Town Hall Square, Vilnius — square): OWM response to continuous church bell ringing. The adaptive threshold scales according to the persistent bell signal, effectively suppressing background crowd noise and preventing spurious detections.}
    \label{fig:R0131}
\end{figure}

Taken together, these additional examples reinforce the primary findings: OWM maintains high specificity by avoiding spurious detections during transient pauses, repeated motifs, or continuous sources, while remaining sensitive to salient novel events.

\section{Additional FFT Analysis}
\label{appendix:Extra FFT Analysis}

\begin{figure}[H]
    \centering
    \begin{subfigure}{0.2\textwidth}
        \includegraphics[width=\linewidth,trim=0 0 0 60,clip]{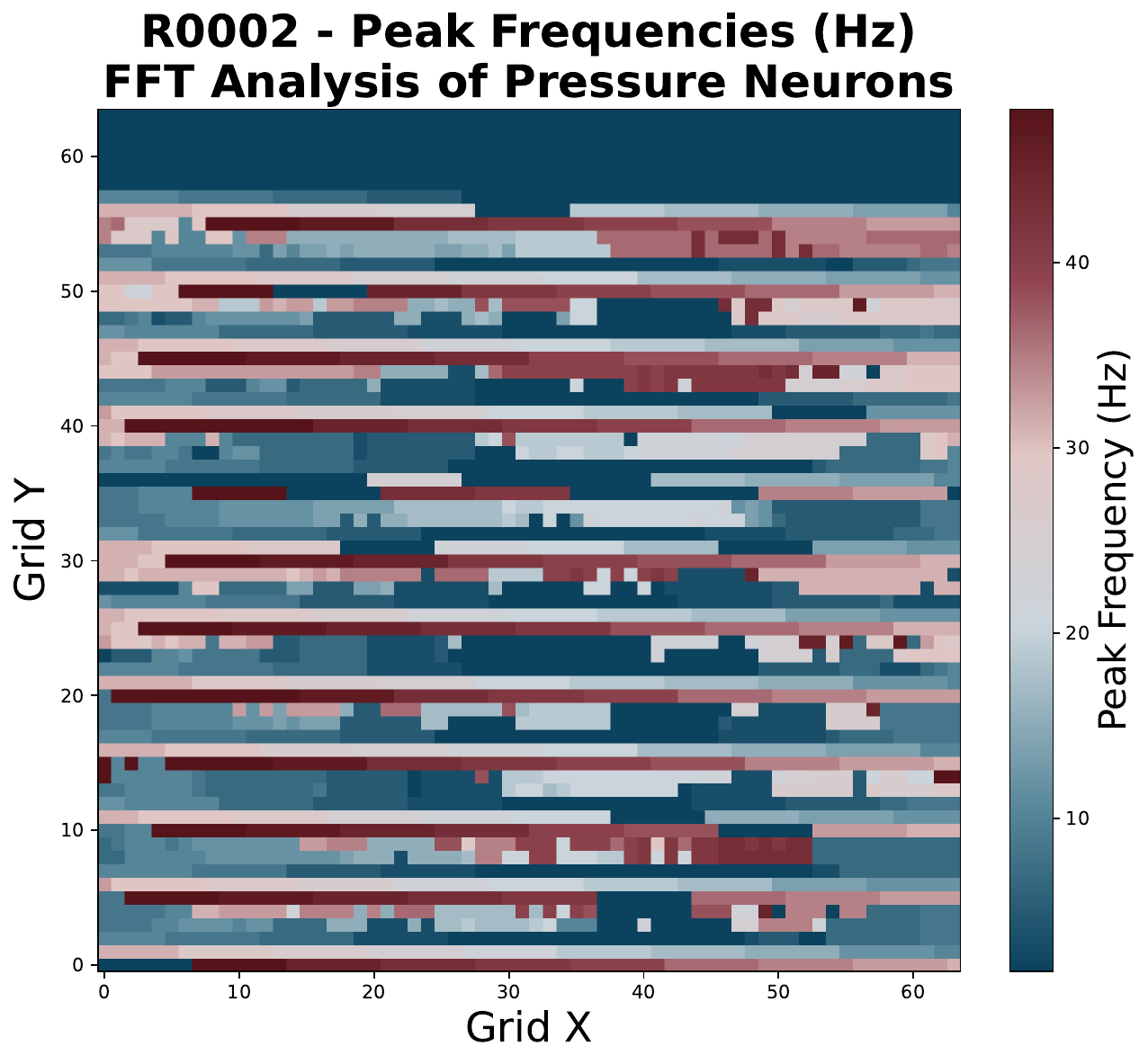}
        \caption{R0002}
    \end{subfigure}\hfill
    \begin{subfigure}{0.2\textwidth}
        \includegraphics[width=\linewidth,trim=0 0 0 60,clip]{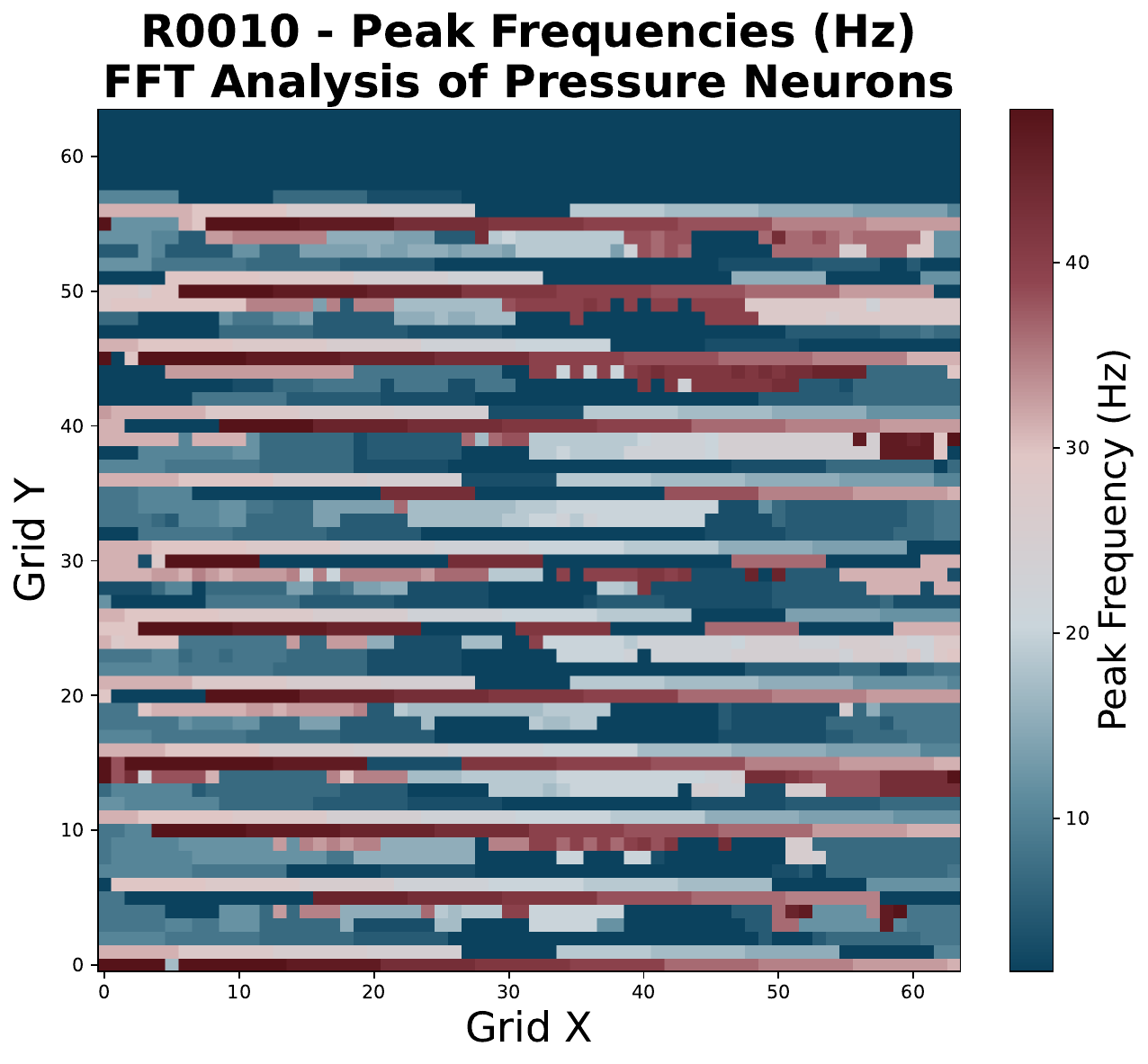}
        \caption{R0010}
    \end{subfigure}\hfill
    \begin{subfigure}{0.2\textwidth}
        \includegraphics[width=\linewidth,trim=0 0 0 60,clip]{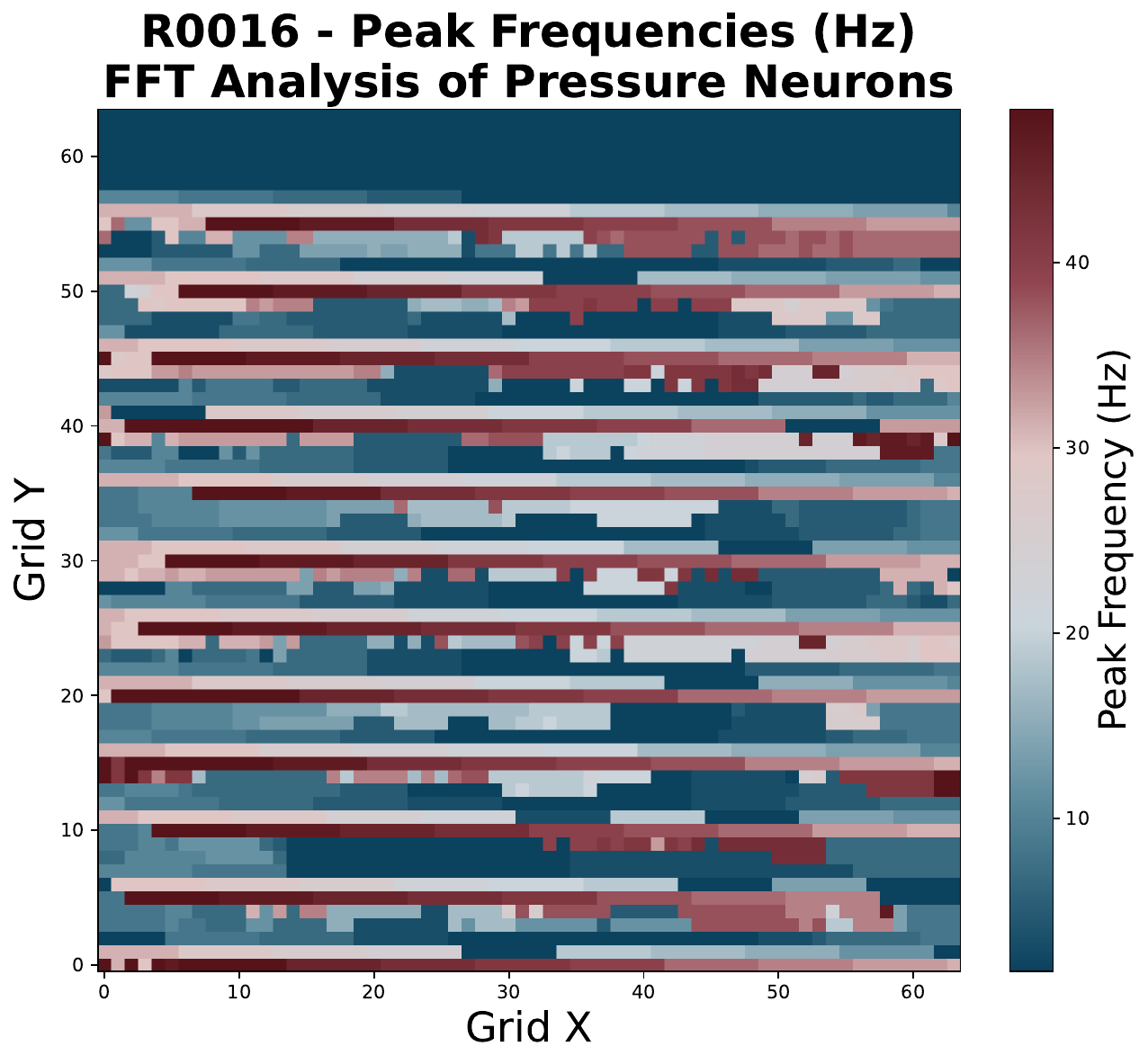}
        \caption{R0016}
    \end{subfigure}\hfill
    \begin{subfigure}{0.2\textwidth}
        \includegraphics[width=\linewidth,trim=0 0 0 60,clip]{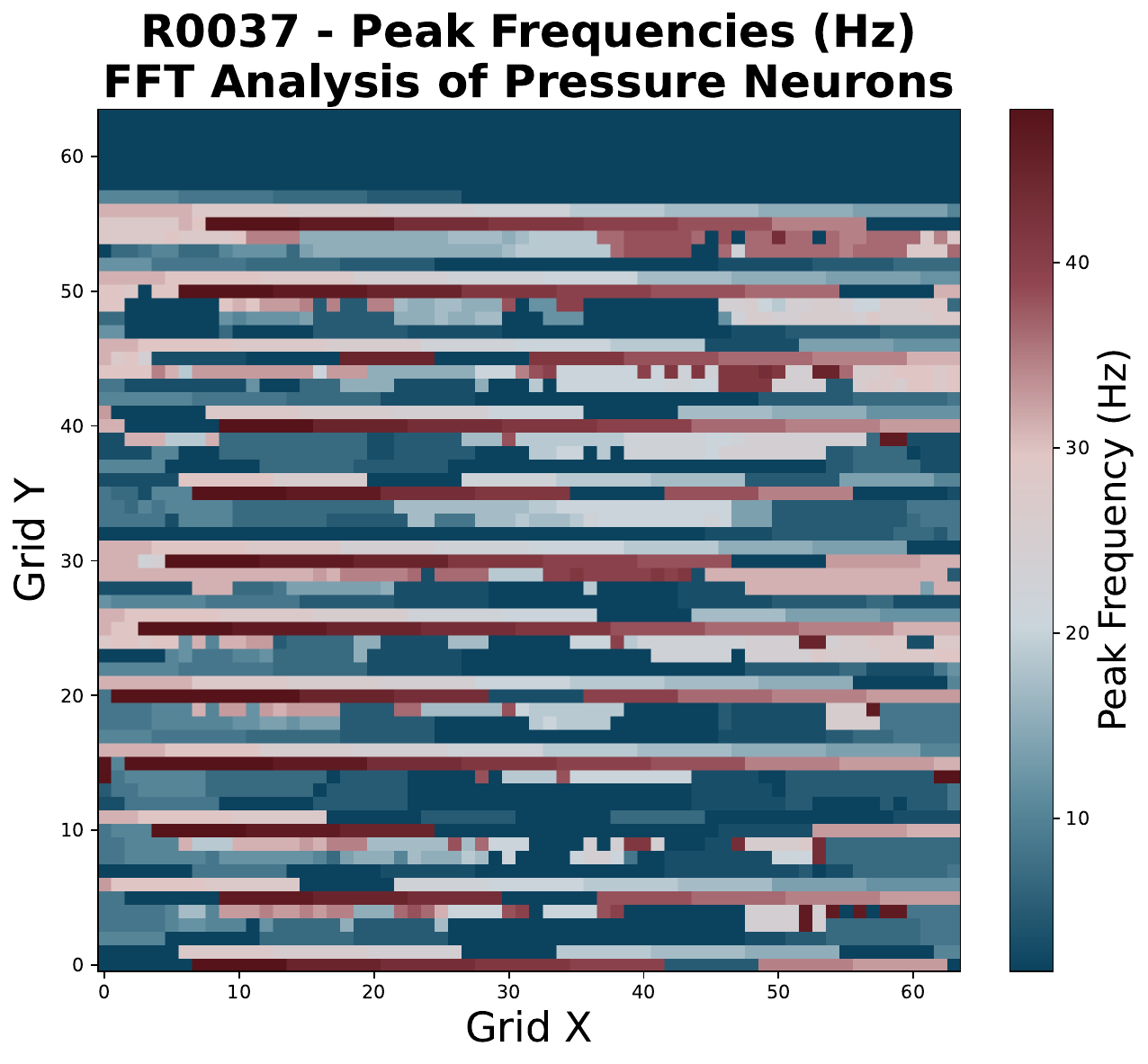}
        \caption{R0037}
    \end{subfigure}\hfill
    \begin{subfigure}{0.2\textwidth}
        \includegraphics[width=\linewidth,trim=0 0 0 60,clip]{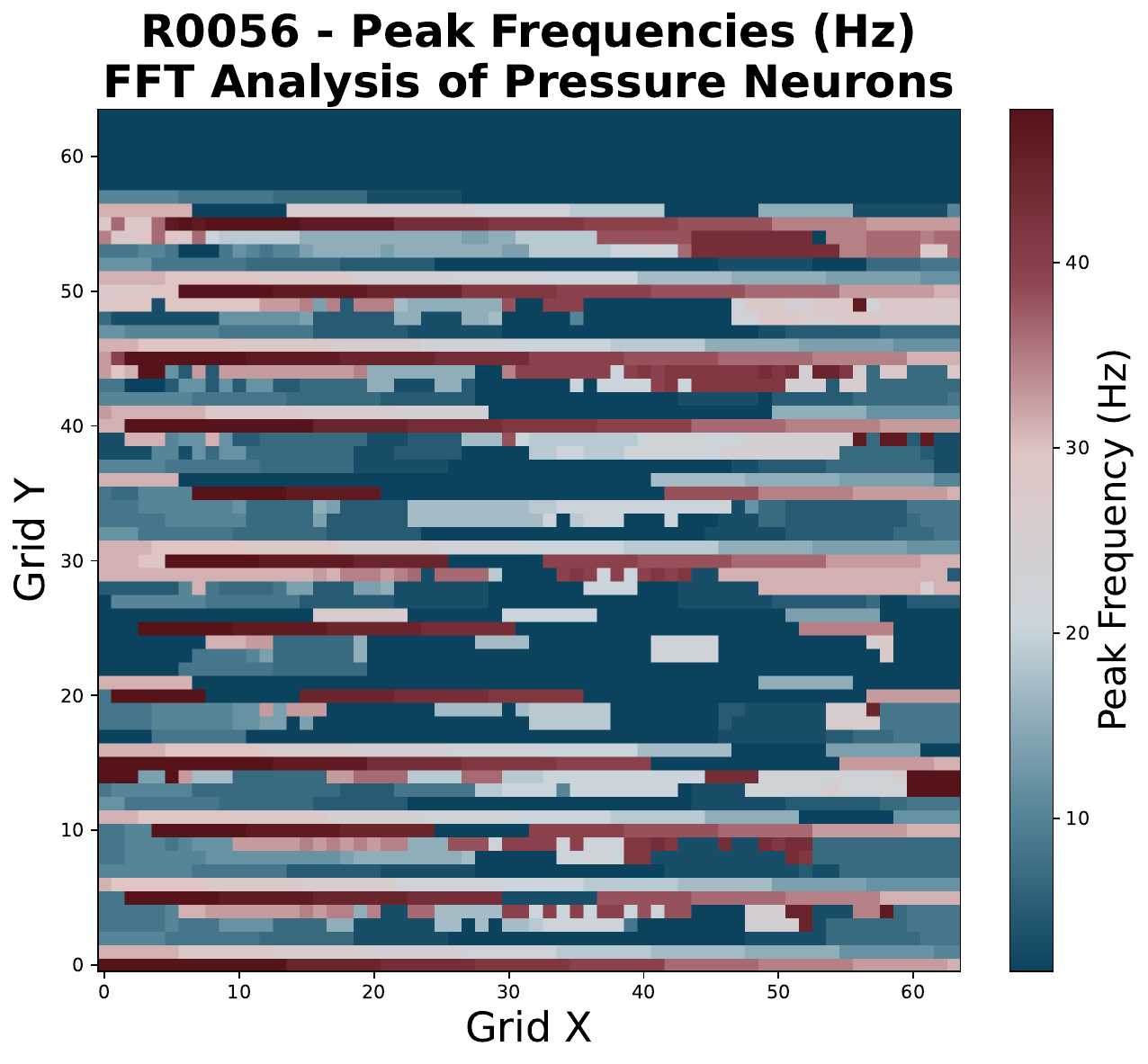}
        \caption{R0056}
    \end{subfigure}
    \caption{Frequency distribution across the \(p\)-field neurons obtained via FFT for different Examples (R0002, R0010, R0016, R0037, R0056). The results reveal band-specific activity in the theta (4--8\,Hz), alpha (8--12\,Hz), beta (13--30\,Hz), and gamma (30--50\,Hz) ranges.}
    \label{fig:FFT_figures_for_all_p_neurons}
\end{figure}

In the main paper (Section~\ref{subsec:FFT_analysis}), we presented FFT analyses for Examples R0016 and R0056 to
illustrate how BioWM reallocates oscillatory activity around drift onsets. To provide further
evidence, Fig.~\ref{fig:FFT_figures_for_all_p_neurons} shows additional FFT spectra for five
representative cases (R0002, R0010, R0016, R0037, R0056). These examples span a variety of
acoustic scenes, including novel sound events (R0002), subcategory-level drift (R0010), transient
pauses (R0016, R0037), and salient novel sources (R0056).

Across all examples, the dominant oscillatory activity of $p$-field neurons lies within the $\theta$ (4–8 Hz), $\alpha$ (8–12 Hz), $\beta$ (13–30 Hz), and low-$\gamma$ (30–50 Hz) bands, consistent with canonical neural regimes. Rather than being uniformly distributed, activity is organized in clustered regions that evolve after drift onsets. The figures display the activity of all $p$ neurons over entire segments; thus, direct saliency patterns are not immediately visible. What becomes evident instead is the periodic coupling structure along the Y-axis of the lattice, in line with the striped optimality predicted by Theorem~\ref{thm:striped-bragg}. This suggests that BioWM’s drift sensitivity arises from frequency-specific clustering combined with spatial coupling, rather than from broad or diffuse spectral fluctuations. Animated visualizations of these dynamics are included as supplementary material.




\end{document}